\newtheorem{definition}{Definition}
\newtheorem{principle}{Principle}
\newtheorem{theorem}{Theorem}
\newtheorem{corollary}{Corollary}
\newtheorem{example}{Example}
\newcommand{\R}{\mathbb{R}}
\begin{document}

\title{ A Role for Symmetry in the Bayesian Solution of Differential Equations}
\author{Junyang Wang$^1$, Jon Cockayne$^2$, Chris. J. Oates$^{1,2}$ \\
$^1$Newcastle University, UK \\
$^2$Alan Turing Institute, UK 
}
\maketitle

\begin{abstract}
The interpretation of numerical methods, such as finite difference methods for differential equations, as point estimators suggests that formal uncertainty quantification can also be performed in this context. 
Competing statistical paradigms can be considered and Bayesian probabilistic numerical methods (PNMs) are obtained when Bayesian statistical principles are deployed.
Bayesian PNM have the appealing property of being closed under composition, such that uncertainty due to different sources of discretisation in a numerical method can be jointly modelled and rigorously propagated.
Despite recent attention, no exact Bayesian PNM for the numerical solution of ordinary differential equations (ODEs) has been proposed.
This raises the fundamental question of whether exact Bayesian methods for (in general nonlinear) ODEs even exist.
The purpose of this paper is to provide a positive answer for a limited class of ODE.
To this end, we work at a foundational level, where a novel Bayesian PNM is proposed as a proof-of-concept. 
Our proposal is a synthesis of classical Lie group methods, to exploit underlying symmetries in the gradient field, and non-parametric regression in a transformed solution space for the ODE.
The procedure is presented in detail for first and second order ODEs and relies on a certain strong technical condition -- existence of a solvable Lie algebra -- being satisfied.
Numerical illustrations are provided.
\end{abstract}

\section{Introduction}

Numerical methods underpin almost all of scientific, engineering and industrial output.
In the abstract, a numerical task can be formulated as the approximation of a quantity of interest 
\begin{eqnarray*}
Q : \mathcal{Y} & \rightarrow & \mathcal{Q}, 
\end{eqnarray*}
subject to a finite computational budget. 
The true underlying state $\mathrm{y}^\dagger \in \mathcal{Y}$ is typically high- or infinite-dimensional, so that only limited information 
\begin{eqnarray}
A : \mathcal{Y} & \rightarrow & \mathcal{A} \label{eq: information operator}
\end{eqnarray}
is provided and exact computation of $Q(\mathrm{y}^\dagger)$ is prohibited.
For example, numerical integration aims to approximate an integral $Q(\mathrm{y}^\dagger) = \int \mathrm{y}^\dagger(t) \mathrm{d}t$ given the values $A(\mathrm{y}^\dagger) = \{(x_i,\mathrm{y}^\dagger(x_i))\}_{i=1}^n$ of the integrand $\mathrm{y}^\dagger$ on a finite number of abscissa $\{x_i\}_{i=1}^n$.
Similarly, a numerical approximation to the solution $Q(\mathrm{y}^\dagger) = \mathrm{y}^\dagger$ of a differential equation $\mathrm{d}\mathrm{y} / \mathrm{d}x = f(x,\mathrm{y}(x))$, $\mathrm{y}(x_0) = y_0$, must be based on at most a finite number of evaluations of $f$, the gradient field. 
In this viewpoint a numerical method corresponds to a map $b : \mathcal{A} \rightarrow \mathcal{Q}$, as depicted in Figure \ref{fig: diag1}, where $b(a)$ represents an approximation to the solution of the differential equation based on the information $a \in \mathcal{A}$.

The increasing ambition and complexity of contemporary applications is such that the computational budget can be \emph{extremely} small compared to the precision that is required at the level of the quantity of interest.
As such, in many important problems it is not possible to reduce the numerical error to a negligible level.
Fields acutely associated with this challenge include climate forecasting \citep{Wedi2014}, computational cardiology \citep{Chabiniok2016} and molecular dynamics \citep{Perilla2015}.
In the presence of non-negligible numerical error, it is unclear how scientific interpretation of the output of computation can proceed. 
\textcolor{black}{
For example, {\it a posteriori} analysis of traditional numerical methods can be used to establish hard upper bounds on the numerical error, but these bounds typically depend on an unknown global constant. 
In the case of ODEs, this may be the maximum value of a norm $\|f\|$ of the gradient field \citep[see e.g.][]{Estep1995}.
If $\|f\|$ were known, it would be possible to provide a hard bound on numerical error.
However, in the typical numerical context all that is known is that $\|f\|$ is finite. One could attempt to approximate $\|f\|$ with cubature, but that itself requires a numerical cubature method whose error is required to obey a known bound. 
In general, therefore, there are no hard error bounds without global information being {\it a priori} provided \citep{Larkin1974}.
Our aim in this paper is to consider, as an alternative to traditional numerical analysis, an exact Bayesian framework for solution uncertainty quantification in the ordinary differential equation context.
}

\subsection{Probabilistic Numerical Methods} \label{subsec: PNM}

The field of \emph{probabilistic numerics} dates back to \cite{Larkin1972} and a modern perspective is provided in \cite{Hennig2015,Oates2019}.
Under the abstract framework just described, numerical methods can be interpreted as point estimators in a statistical context, where the state $\mathrm{y}^\dagger$ can be thought of as a latent variable in a statistical model, and the `data' consist of information $A(\mathrm{y}^\dagger)$ that does not fully determine the quantity of interest $Q(\mathrm{y}^\dagger)$ but is indirectly related to it. 
\cite{Hennig2015} provide an accessible introduction and survey of the field. 
In particular, they illustrated how PNM can be used to quantify uncertainty due to discretisation in important scientific problems, such as astronomical imaging. 

Let the notation $\Sigma_{\mathcal{Y}}$ denote a $\sigma$-algebra on the space $\mathcal{Y}$ and let $\mathcal{P}_{\mathcal{Y}}$ denote the set of probability measures on $(\mathcal{Y},\Sigma_{\mathcal{Y}})$.
A probabilistic numerical method (PNM) is a procedure which takes as input a `belief' distribution $\mu \in \mathcal{P}_{\mathcal{Y}}$, representing epistemic uncertainty with respect to the true (but unknown) value $\mathrm{y}^\dagger$, along with a finite amount of information, $A(\mathrm{y}^\dagger) \in \mathcal{A}$.
The output is a distribution $B(\mu,A(\mathrm{y}^\dagger)) \in \mathcal{P}_{\mathcal{Q}}$ on $(\mathcal{Q},\Sigma_{\mathcal{Q}})$, representing epistemic uncertainty with respect to the quantity of interest $Q(\mathrm{y}^\dagger)$ after the information $A(\mathrm{y}^\dagger)$ have been processed. 
For example, a PNM for an ordinary differential equation (ODE) takes an initial belief distribution defined on the solution space of the differential equation, together with information arising from a finite number of evaluations of the gradient field, plus the initial condition of the ODE, to produce a distribution over either the solution space of the ODE, or perhaps some derived quantity of interest. 
In this paper, the measurability of $A$ and $Q$ will be assumed.

Despite computational advances in this emergent field, until recently there had not been an attempt to establish rigorous statistical foundations for PNM. 
In \cite{Cockayne2017} the authors argued that Bayesian principles can be adopted.
In brief, this framework requires that the \textcolor{black}{input belief distribution $\mu$ carries the semantics of a Bayesian agent's prior belief and the output of a PNM agrees with the inferences drawn when the agent is rational.}
\textcolor{black}{To be more precise recall that, in this paper, information is provided in a deterministic\footnote{\textcolor{black}{It is of course possible to perform Bayesian inference in the noisy-data context, but for the ODEs considered in this paper we assume that one can obtain noiseless evaluations of the gradient field.}} manner through \eqref{eq: information operator} and thus Bayesian inference corresponds to conditioning of $\mu$ on the level sets of $A$.} 
Let $Q_{\#} : \mathcal{P}_{\mathcal{Y}} \rightarrow \mathcal{P}_{\mathcal{Q}}$ denote the push-forward map associated to $Q$.
i.e. $Q_{\#}(\mu)(S) = \mu(Q^{-1}(S))$ for all $S \in \Sigma_{\mathcal{Q}}$.
Let $\{\mu^a\}_{a \in \mathcal{A}} \subset \mathcal{P}_{\mathcal{Y}}$ denote the disintegration, assumed to exist\footnote{The reader unfamiliar with the concept of a disintegration can treat $\mu^a$ as a technical notion of the `conditional distribution of $\mathrm{y}$ given $A(\mathrm{y}) = a$' when reading this work. The \emph{disintegration theorem}, Thm. 1 of \cite{Chang1997}, guarantees existence and uniqueness up to a $A_{\#}\mu$-null set under the weak requirement that $\mathcal{Y}$ is a metric space, $\Sigma_{\mathcal{Y}}$ is the Borel $\sigma$-algebra, $\mu$ is Radon, $\Sigma_{\mathcal{A}}$ is countable generated and $\Sigma_{\mathcal{A}}$ contains all singletons $\{a\}$ for $a \in \mathcal{A}$. }, of $\mu \in \mathcal{P}_{\mathcal{Y}}$ along the map $A$.

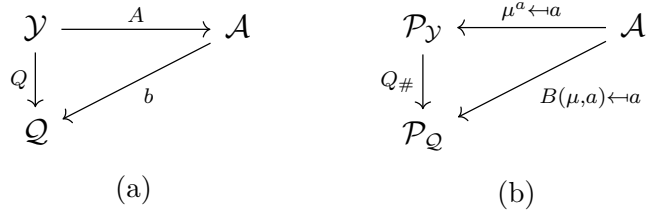
\begin{figure}[t!]
\centering
\begin{subfigure}{0.3\textwidth}
\centering
\begin{tikzcd}
\mathcal{Y} \arrow[d,"Q",swap] \arrow[rr,"A"] & & \mathcal{A} \arrow[lld,"b"] \\
\mathcal{Q}
\end{tikzcd}
\caption{}
\label{fig: diag1}
\end{subfigure}
\begin{subfigure}{0.3\textwidth}
\centering
\begin{tikzcd}
\mathcal{P}_{\mathcal{Y}} \arrow[d,"Q_{\#}",swap] & & \mathcal{A} \arrow[ll,"\mu^a \mapsfrom a",swap] \arrow[lld,"{B(\mu,a) \mapsfrom a}"] \\
\mathcal{P}_{\mathcal{Q}}
\end{tikzcd}
\caption{}
\label{fig: diag2}
\end{subfigure}
\caption{Diagrams for a numerical method.
(a) The traditional viewpoint of a numerical method is equivalent to a map $b$ from a finite-dimensional information space $\mathcal{A}$ to the space of the quantity of interest $\mathcal{Q}$.
(b) The probabilistic viewpoint treats approximation of $Q(\mathrm{y}^\dagger)$ in a statistical context, described by a map $B(\mu,\cdot)$ from $\mathcal{A}$ to the space of probability distributions on $\mathcal{Q}$.
The probabilistic numerical method $(A,B)$ is Bayesian if and only if (b) is a commutative diagram.}
\end{figure}

\begin{definition} \label{def: Bayesian}
A probabilistic numerical method $(A,B)$ with $A : \mathcal{Y} \rightarrow \mathcal{A}$ and $B : \mathcal{P}_{\mathcal{Y}} \times \mathcal{A} \rightarrow \mathcal{P}_{\mathcal{Q}}$ for a quantity of interest $Q : \mathcal{Y} \rightarrow \mathcal{Q}$ is \emph{Bayesian} if and only if $B(\mu,a) = Q_{\#}(\mu^a)$ for all $\mu \in \mathcal{P}_{\mathcal{Y}}$ and all $a \in \mathcal{A}$.
\end{definition}

This definition is intuitive; the output of the PNM should coincide with the marginal distribution for $Q(\mathrm{y}^\dagger)$ according to the disintegration element $\mu^a \in \mathcal{P}_{\mathcal{Y}}$, based on the information $a \in \mathcal{A}$ that was provided. 
The definition is equivalent to the statement that Figure \ref{fig: diag2} is a commutative diagram.
In \cite{Cockayne2017} the map $A$ was termed an \emph{information operator} and the map $B$ was termed a \emph{belief update operator}; we adhere to these definitions in our work.
The Bayesian approach to PNM confers several important benefits:
\begin{itemize}
\item The input $\mu$ and output $B(\mu,a)$ belief distributions can be interpreted, respectively, as a \emph{prior} and (marginal) \emph{posterior}.\footnote{Indeed, if the set $\mathcal{Y}^a=\{ \mathrm{y} \in \mathcal{Y}: A(\mathrm{y})=a \}$ is not measure zero under $\mu$, then $\mu^a$ is the conditional distribution defined by restricting $\mu$ to the subset $\mathcal{Y}^a$; $\mu^a(\mathrm{y}) = 1[\mathrm{y} \in \mathcal{Y}^a] \mu(\mathrm{y}) / \mu(\mathcal{Y}^a)$. 
The theory of disintegrations generalises the conditional distribution $\mu^a$  to cases where $\mathcal{Y}^a$ is a null set. 
}
As such, they automatically inherit the stronger formal semantics and philosophical foundations that underpin the Bayesian framework and, in this sense, are well-understood \citep[see e.g.][]{Gelman2013}.
\item The definition of Bayesian PNM is operational.
Thus, if we are presented with a prior $\mu$ and information $a$ then there is a unique Bayesian PNM and it is constructively defined.
\item The class of Bayesian PNM is closed under composition, such that uncertainty due to different sources of discretisation can be jointly modelled and rigorously propagated.
This point will not be discussed further in this work, but we refer the interested reader to Section 5 of \cite{Cockayne2017}.
\end{itemize}
Nevertheless, the strict definition of Bayesian PNM limits scope to design convenient computational algorithms and indeed several proposed PNM are not Bayesian \citep[see Table 1 in][]{Cockayne2017}.
The challenge is two-fold; for a Bayesian PNM, the elicitation of an appropriate prior distribution $\mu$ and the exact computation of its disintegration $\{\mu^a\}_{a \in \mathcal{A}}$ must both be addressed. 
In the next section we argue that -- perhaps as a consequence of these constraints -- a strictly Bayesian PNM for the numerical solution of an ODE does not yet exist.

\subsection{Existing Work for ODEs} \label{subsec: existing work}

The first PNM (of any flavour) for the numerical solution of ODEs, or which we are aware, was due to \cite{Skilling1992}.
Two decades later, this problem is receiving renewed critical attention as part of the active development of PNM.
The aim of this section is to provide a high-level overview of existing work and to argue that existing methods do not adhere to the definition of Bayesian PNM.

\paragraph{Notation:}
The notational convention used in this paper is that the non-italicised $\mathrm{y}$ denotes a generic function, whereas the italicised $y$ denotes a scalar value taken by the function $\mathrm{y}$.
The notation $\mathrm{y}^\dagger$ is reserved for the true solution to an ODE.
Throughout, the underlying state space $\mathcal{Y}$ is taken to be a space occupied by the true solution of the ODE, i.e. $\mathrm{y}^\dagger \in \mathcal{Y}$.

\subsubsection{\cite{Skilling1992}}
\label{subsubsec: Skilling}

The first paper on this topic, of which we are aware, was \cite{Skilling1992}.
This will serve as a prototypical PNM for the numerical solution of an ODE.
Originally described as `Bayesian' by the author, we will argue that, at least in the strict sense of Definition \ref{def: Bayesian}, it is not a Bayesian PNM.
Consider a generic univariate first-order initial value problem
\begin{eqnarray}
\frac{\mathrm{d}\mathrm{y}}{\mathrm{d}x} & = & f(x,\mathrm{y}(x)), \hspace{20pt} x \in [x_0,x_T], \hspace{20pt} \mathrm{y}(x_0) = y_0 . \label{eq: ODE1} 
\end{eqnarray}
Throughout this paper all ODEs that we consider will be assumed to be well-defined and admit a unique solution $\mathrm{y}^\dagger \in \mathcal{Y}$ where $\mathcal{Y}$ is some pre-specified set.
In this paper the quantity of interest $Q(\mathrm{y}^\dagger)$ will either be the solution curve $\mathrm{y}^\dagger$ itself or the value $\mathrm{y}^\dagger(x_T)$ of the solution at a specific input (in this section it will be the former).
The approach outlined in \cite{Skilling1992} allows for a general prior $\mu \in \mathcal{P}_{\mathcal{Y}}$.
The gradient field $f$ is treated as a `black box' oracle that can be queried at a fixed computational cost.
Thus we are provided with evaluations of the gradient field $\left[ f(x_0,y_0) , \dots , f(x_n,y_n) \right]^\top \in \mathbb{R}^{n+1}$ for certain input pairs $\{(x_i,y_i)\}_{i=0}^n$. 
 
This approach of treating evaluations of the gradient field as `data' will be seen to be a common theme in existing PNM for ODEs and theoretical support for this framework is rooted in the field of information-based complexity \citep{Traub1992}.
\color{black}
Let $a_i = f(x_i,y_i)$ and $a^i = [a_0,\dots,a_i]$.
The selection of the input pairs $(x_i,y_i)$ on which $f$ is evaluated is not constrained and several possibilities, of increasing complexity, were discussed in \cite{Skilling1992}.
To fix ideas, the simplest such approach is to proceed iteratively as follows:
\begin{enumerate}
\item[(0.1)] The first pair $(x_0,y_0)$ is fully determined by the initial condition of the ODE.
\item[(0.2)] The oracle then provides one piece of information, $a_0 = f(x_0,y_0)$.
\item[(0.3)] The prior $\mu$ is updated according to $a_0$, leading to a belief distribution $\mu_0$ which is just the disintegration element $\mu^{a^0}$.
\item[(1)] A discrete time step $x_1 = x_0 + h$, where $h = \frac{x_T - x_0}{n} > 0$, is performed and a particular point estimate $y_1 = \int \mathrm{y}(x_1) \mathrm{d} \mu_0(\mathrm{y})$ for the unknown true value $\mathrm{y}^\dagger(x_1)$ is obtained.
This specifies the second pair $(x_1,y_1)$.
\end{enumerate}
The process continues similarly, such that at time step $i-1$ we have a belief distribution $\mu_{i-1} = B(\mu,a^{i-1}) \in \mathcal{P}_{\mathcal{Y}}$, where the general belief update operator $B$ is yet to be defined, and the following step is performed:
\begin{enumerate}
\item[($i$)] Let $x_i = x_{i-1} + h$ and set $y_i = \int \mathrm{y}(x_i) \mathrm{d}\mu_{i-1}(\mathrm{y})$ .
\end{enumerate}
The final output is a probability distribution $\mu_n = B(\mu,a^n) \in \mathcal{P}_{\mathcal{Y}}$.
Now, strictly speaking, the method just described is \emph{not} a PNM in the concrete sense that we have defined.
Indeed, the final output $\mu_n$ is a deterministic function of the values $a^n$ of the gradient field that were obtained.
However, in the absence of additional assumptions on the global smoothness of the gradient field, the values of $f(x,y)$ outside any open neighbourhood of the true solution curve $\mathcal{C} = \{(x,y) : y = \mathrm{y}^\dagger(x), \; x \in [x_0,x_T]\}$ do not determine the solution of the ODE and, conversely, the solution of the ODE provides no information about the values of the gradient field outside any open neighbourhood of the true solution curve $\mathcal{C}$.
Thus it is not possible, in general, to write down an information operator $A : \mathcal{Y} \rightarrow \mathcal{A}$ that reproduces the information $a^n$ when applied to the solution curve $\mathrm{y}^\dagger(\cdot)$ of the ODE.

The approach taken in \cite{Skilling1992} was therefore to posit an \emph{approximate} information operator $\hat{A}$ and a particular belief update operator $B$, which are now described. 
The approximate information operator is motivated by the intuition that if $\mathrm{y}^\dagger(x_i)$ is well-approximated by $y_i$ at the abscissa $x_i$ then $\frac{\mathrm{d}\mathrm{y}^\dagger}{\mathrm{d}x}(x_i)$ should be well-approximated by $f(x_i,y_i)$.
That is, the following approximate information operator $\hat{A}$ was constructed:
\begin{eqnarray}
\hat{A}(\mathrm{y}) & = & \left[ \frac{\mathrm{d}\mathrm{y}}{\mathrm{d}x}(x_0) , \dots , \frac{\mathrm{d}\mathrm{y}}{\mathrm{d}x}(x_n) \right]^\top . \label{eq: Skilling B}
\end{eqnarray}
Of course, $\hat{A}(\mathrm{y}^\dagger) \neq a^n$ in general.
To acknowledge the approximation error, \cite{Skilling1992} proposed to model the information with an \textcolor{black}{approximate likelihood}:
\begin{eqnarray}
\frac{\mathrm{d}\mu_n}{\mathrm{d}\mu_0}(\mathrm{y}) & = & \prod_{i=1}^n\frac{\mathrm{d}\mu_i}{\mathrm{d}\mu_{i-1}}(\mathrm{y}) \\
\frac{\mathrm{d}\mu_i}{\mathrm{d}\mu_{i-1}}(\mathrm{y}) & \propto & \exp\left( -\frac{1}{2 \sigma^2} \left( \frac{\mathrm{d}\mathrm{y}}{\mathrm{d}x}(x_i) - f(x_i,y_i) \right)^2 \right)  \label{eq: Skilling Lhood}
\end{eqnarray}
This was referred to in \cite{Skilling1992} as simply a ``likelihood'' and, together with $\mu_0 = \mu^{a^0}$, the output $\mu_n$ is completely specified.
Here $\sigma$ is a fixed positive constant, however in principle a non-diagonal covariance matrix can also be considered.
\textcolor{black}{The negative consequences of basing inferences on an approximate information operator $\hat{A}$ are potentially twofold.
First, recall that values of the gradient field that are not contained on the true solution curve of the ODE do not, in principle, determine the true solution curve $\mathrm{y}^\dagger$.
It is therefore unclear if these values should be taken into account at all.
Second, in the special case where the gradient field $f$ does not depend the second argument then the quantities $\frac{\mathrm{d}\mathrm{y}}{\mathrm{d}x}(x_i)$ and $f(x_i,y_i)$ are identical.
From this perspective, $\mu_n$ represents inference under a mis-specified likelihood, since information is treated as erroneous when it is in fact exact.
The use of a mis-specified likelihood violates the \emph{likelihood principle} and implies violation of the Bayesian framework.
This confirms, through a different argument, that the approach of \cite{Skilling1992} cannot be Bayesian in the strict sense of Definition \ref{def: Bayesian}. }

\subsubsection{\cite{Schober2014,Schober2016,Teymur2016,Teymur2018}}

After \cite{Skilling1992}, several authors have proposed improvements to the above method.
The approach of \cite{Schober2014} considered Eq.~\eqref{eq: Skilling Lhood} in the $\sigma \downarrow 0$ limit.
In order that exact conditioning can be performed in this limit, the input belief distribution $\mu$ was restricted to be a $k$-times integrated Wiener measure on the solution space of the ODE.
The tractability of the integrated Weiner measure leads to a closed-form characterisation of the posterior and enables computation to be cast as a Kalman filter.

This direction of research can be motivated by the following fact:
For $k \in \{1,2\}$ the authors prove that if the input pair $(x_1,y_1)$ is  taken as $y_1 = \int \mathrm{y}(x_1) \mathrm{d}\mu_0(\mathrm{y})$, as indicated in Section \ref{subsubsec: Skilling}, then the smoothing estimate $\hat{y}_1 = \int \mathrm{y}(x_1) \mathrm{d} \mu_1(\mathrm{y})$, i.e. the posterior mean for $\mathrm{y}(x_1)$ based on information $a^1$, coincides with the deterministic approximation to $\mathrm{y}^\dagger(x_1)$ that would be provided by a $k$-th order Runge-Kutta method.
As such, theoretical guarantees such as local convergence order are inherited.
For $k = 3$ it was shown that the same conclusion can be made to hold, \emph{provided} that the input pair $(x_1,y_1)$ is selected in a manner that is no longer obviously related to $\mu_0$.
In all cases the identification with a classical Runge-Kutta method does not extend beyond iteration $n=1$.
Similar connections to multistep methods of Nordsieck and Adams form were identified, respectively, in \cite{Schober2016} and \cite{Teymur2016,Teymur2018}.
The approach of \cite{Schober2014} is not Bayesian in the sense of Definition \ref{def: Bayesian}, which can again be deduced from dependence on values of the gradient field away from the true solution curve, so that the likelihood principle is violated.

\subsubsection{\cite{Kersting2016}}

The subsequent work of \cite{Kersting2016} attempted to elicit an appropriate non-zero covariance matrix for use in Eq.~\eqref{eq: Skilling Lhood}, in order to encourage uncertainty estimates to be better calibrated.
Their proposal consisted of the \textcolor{black}{approximate likelihood}
\begin{eqnarray}
\frac{\mathrm{d}\mu_i}{\mathrm{d}\mu_{i-1}}(\mathrm{y}) & \propto & \exp\left( -\frac{1}{2} \left( \frac{ \frac{\mathrm{d}\mathrm{y}}{\mathrm{d}x}(x_i) - m_i }{\sigma_i} \right)^2 \right) \label{eq: Kersting Lhood} \\
m_i & = & \int f(x_i,\mathrm{y}(x_i)) \mathrm{d}\mu_{i-1}(\mathrm{y}) \label{eq: Kersting mean} \\
\sigma_i^2 & = & \int \left( f(x_i,\mathrm{y}(x_i)) - m_i \right)^2 \mathrm{d}\mu_{i-1}(\mathrm{y}) . \label{eq: Kersting var}
\end{eqnarray}
This can be viewed as the predictive marginal likelihood for the value $f(x_i,\mathrm{y}(x_i))$ based on $\mu_{i-1}$.
From a practical perspective, the approach is somewhat circular as the integrals in Eq.~\eqref{eq: Kersting mean} and \eqref{eq: Kersting var} involve the black-box gradient field $f$ and are therefore cannot be computed.
The authors suggested a number of ways that these quantities could be numerically approximated\footnote{One such method is \emph{Bayesian quadrature}, another PNM wherein the integrand $f$ is modelled as uncertain until it is evaluated. This raises separate philosophical challenges, as one must then ensure that the statistical models used for $\mathrm{y}(\cdot)$ and $f(x_i,\cdot)$ are logically consistent. In \cite{Kersting2016} these functions were simply modelled as independent.}, which involve evaluating $f(x_i,y_i)$ at one or more values $y_i$ that must be specified.
The overall approach again violates the likelihood principle and is therefore not Bayesian in the sense of Definition \ref{def: Bayesian}.

\subsubsection{\cite{Chkrebtii2013}}

The original work of \cite{Chkrebtii2013} is somewhat related to \cite{Kersting2016}, however instead of using the mean of the current posterior as input to the gradient field, the input pair $(x_i,y_i)$ was selected by sampling $y_i$ from the marginal distribution for $\mathrm{y}(x_i)$ implied by $\mu_{i-1}$.
The \textcolor{black}{approximate likelihood} in this approach was taken as follows:
\begin{eqnarray*}
\frac{\mathrm{d}\mu_i}{\mathrm{d}\mu_{i-1}}(\mathrm{y}) & \propto & \exp\left( -\frac{1}{2} \left( \frac{ \frac{\mathrm{d}\mathrm{y}}{\mathrm{d}x}(x_i) - f(x_i,y_i) }{\sigma_i} \right)^2 \right) \\ 
m_i & = & \int \frac{\mathrm{d}\mathrm{y}}{\mathrm{d}x}(x_i) \mathrm{d}\mu_{i-1}(\mathrm{y})  \\
\sigma_i^2 & = & \int \left( \frac{\mathrm{d}\mathrm{y}}{\mathrm{d}x}(x_i) - m_i \right)^2 \mathrm{d}\mu_{i-1}(\mathrm{y}) .
\end{eqnarray*}
Compared to Eq.~\eqref{eq: Kersting Lhood}, \eqref{eq: Kersting mean} and \eqref{eq: Kersting var}, this approach does not rely on integrals over the unknown gradient field.
However, the approach also relies on the approximate information operator in Eq.~\eqref{eq: Skilling B} and is thus not Bayesian according to Definition \ref{def: Bayesian}.

\subsubsection{\cite{Conrad2015,Abdulle2018}}

The approaches proposed in \cite{Conrad2015,Abdulle2018} are not motivated in the Bayesian framework, but instead seek to introduce a stochastic perturbation into a classical numerical method.
Both methods focus on the quantity of interest $Q(\mathrm{y}^\dagger) = \mathrm{y}^\dagger(x_T)$.
In the simple context of Eq.~\eqref{eq: ODE1}, the method of \cite{Conrad2015} augments the explicit Euler method with a stochastic perturbation:
\begin{eqnarray*}
y_i \; = \; y_{i-1} + h f(x_{i-1},y_{i-1}) + h^2 \epsilon_i, \hspace{20pt} x_i \; = \; x_{i-1} + h,  \hspace{20pt} i = 1,\dots,n 
\end{eqnarray*}
The distribution of the sequence $(\epsilon_i)_{i=1}^n$ must be specified. 
In the simplest case where the $\epsilon_i$ are modelled as independent, say with $\epsilon_i \sim \rho$, the canonical flow map $\Phi_i : \mathbb{R} \rightarrow \mathbb{R}$ of the explicit Euler method, defined as $\Phi_i(z) = z + h f(x_i,z)$, is replaced by the probabilistic counterpart $\Psi_i : \mathcal{P}_{\mathbb{R}} \rightarrow \mathcal{P}_{\mathbb{R}}$ given by 
$$
\Psi_i(\nu)(\mathrm{d}z) = \int \rho \left( \frac{\mathrm{d}z - \Phi_i(\tilde{z})}{h^2} \right) \nu(\mathrm{d}\tilde{z})
$$
through which stochasticity can be propagated.
The output of the method of \cite{Conrad2015} is then $B = \Psi_n \circ \dots \circ \Psi_1 \delta(y_0)$, where $\delta(z)$ denotes an atomic distribution on $z \in \mathbb{R}$.
For the case where each $\rho_i$ has zero mean, it can be shown that the mean of $B$ equals $\Phi_n \circ \dots \circ \Phi_1 (y_0)$, which is exactly the deterministic approximation produced with the explicit Euler method.

This framework can be practically problematic, since $\epsilon_i$ is charged with modelling the extrapolation error and such errors are not easily modelled as independent random variables -- Section 2.8 of \cite{Higham2002} is devoted to this point.
Thus, if for example $f(x,y) = y$, the true linearisation error at step $i$ is $e^{x_i} - e^{x_{i-1}}$ so that the `true' sequence $(\epsilon_i)_{i=1}^n$ in this case is monotonic and exponentially unbounded.
The challenge of designing a stochastic model for the sequence $(\epsilon_i)_{i = 1}^n$ that reflects the highly structured nature of the error remains unresolved.
On the other hand, the mathematical properties of this method are now well-understood \citep{Lie2018,Lie2017}.
The proposal of \cite{Abdulle2018} was to instead consider randomisation of the inputs $\{x_i\}_{i=0}^T$ in the context of a classical numerical method, also outside of the Bayesian framework.

\subsubsection{\cite{Cockayne2017,Tronarp2018}}

The survey just presented begs the question of whether a Bayesian PNM for ODEs can exist at all.
A first step toward this goal was taken in \cite{Cockayne2017}, where it was argued that an information operator can be constructed if the vector field $f$ is brought to the left-hand-side in Eq.~\eqref{eq: ODE1}.
Specifically, they proposed the information operator
\begin{eqnarray*}
\tilde{A}(\mathrm{y}) & = & \left[ \frac{\mathrm{d}\mathrm{y}}{\mathrm{d}x}(x_0) - f(x_0,\mathrm{y}(x_0)) , \dots , \frac{\mathrm{d}\mathrm{y}}{\mathrm{d}x}(x_n) - f(x_n,\mathrm{y}(x_n)) \right]^\top
\end{eqnarray*}
for which the `data' are trivial; $\tilde{a}^n = 0$.
It was rigorously established that the \textcolor{black}{approximate likelihood}
\begin{eqnarray*}
\frac{\mathrm{d}\mu_{i,\sigma}}{\mathrm{d}\mu_{i-1,\sigma}}(\mathrm{y}) & = & \exp\left( -\frac{1}{2 \sigma^2} \left( \frac{\mathrm{d}\mathrm{y}}{\mathrm{d}x}(x_i) - f(x_i,\mathrm{y}(x_i)) \right)^2 \right) 
\end{eqnarray*}
leads to an exact Bayesian PNM in the limit: $\mu_{n,\sigma} \overset{\mathcal{F}}{\rightarrow} \mu^{\tilde{a}^n}$ as $\sigma \downarrow 0$ for $\tilde{A}_{\#} \mu$-almost all $\tilde{a}^n \in \mathbb{R}^{n+1}$.
Here $\overset{\mathcal{F}}{\rightarrow}$ denotes convergence in an integral probability metric defined by a suitable set $\mathcal{F}$ of test functions \citep[see Sec. 4 of][]{Cockayne2017}.
However, the dependence of the information operator $\tilde{A}$ on $f$ means that this cannot be used as the basis for a practical method.
Indeed, unless $f$ depends linearly on its second argument and conjugacy properties of the prior can be exploited, the posterior cannot easily be characterised.
Approximate techniques from nonlinear filtering were proposed to address this challenge in \cite{Tronarp2018}.

\subsection{Our Contribution}

The comprehensive literature review in the previous section reveals not only that that no Bayesian PNM has yet been proposed, but also that such an endeavour may be fundamentally difficult.
Indeed, a theme that has emerged with existing PNM for ODEs, which can be traced back to \cite{Skilling1992}, is the use of approximate and subjective forms for the likelihood.
The complex, implicit relationship between the latent ODE solution $\mathrm{y}^\dagger$ and the data $f(x_i,y_i)$ arising from the gradient field appears to preclude use of an exact likelihood.
Of course, violation of the likelihood principle is not traditionally a concern in the design of a numerical method, yet if the strictly richer output that comes with a Bayesian PNM is desired, then clearly adherence to the likelihood principle is important.
It is therefore natural to ask the question, ``under what conditions can exact Bayesian inference for ODEs be made?''.

This paper presents a proof-of-concept PNM for the numerical solution of a (limited) class of ODEs that is both (a) Bayesian in the sense of Definition \ref{def: Bayesian} and (b) can in principle be implemented.
The method being proposed is indicated in Figure \ref{fig: schematic} and its main properties are as follows:
\begin{itemize}
\item The classical theory of Lie groups is exploited, for the first time in the context of PNM, to understand when an ODE of the form in Eq.~\eqref{eq: ODE1} can be transformed into an ODE whose gradient field is a function of the independent state variable only, reducing the ODE to an integral.
\item For ODEs that admit a solvable Lie algebra, our proposal can be shown to simultaneously perform exact Bayesian inference on both the original and the Lie-transformed ODE. 
Crucially, as we explain later, to identify a Lie algebra only high-level {\it a priori} information about the ODE is required.
The case of first- and second-order ODEs is presented in detail, but the method itself is general.
\item \textcolor{black}{In general the specification of prior belief can be difficult. 
The prior distributions that we construct are guaranteed to respect aspects of the structure of the ODE. 
As such, our priors are, to some extent, automatically adapted to the ODE at hand as opposed to being arbitrarily posited.}
\item In addition to the benefits conferred in the Bayesian framework, detailed in Section \ref{subsec: PNM} and in \cite{Cockayne2017}, the method being proposed can be computationally realised.
On the other hand, there is a cost in terms of the run-time of the method that is substantially larger than existing, non-Bayesian approaches \textcolor{black}{(especially classical numerical methods)}.
As such, we consider this work to be a proof-of-concept rather than an applicable Bayesian PNM.
\end{itemize}

\begin{figure}[t!]
\centering
\resizebox{.9\textwidth}{!}{
\begin{tikzpicture}
\node[anchor=south west,inner sep=0] (image) at (0,0) {
\includegraphics[width = 0.6\textwidth,clip,trim = 4cm 2cm 4cm 2cm,page=1]{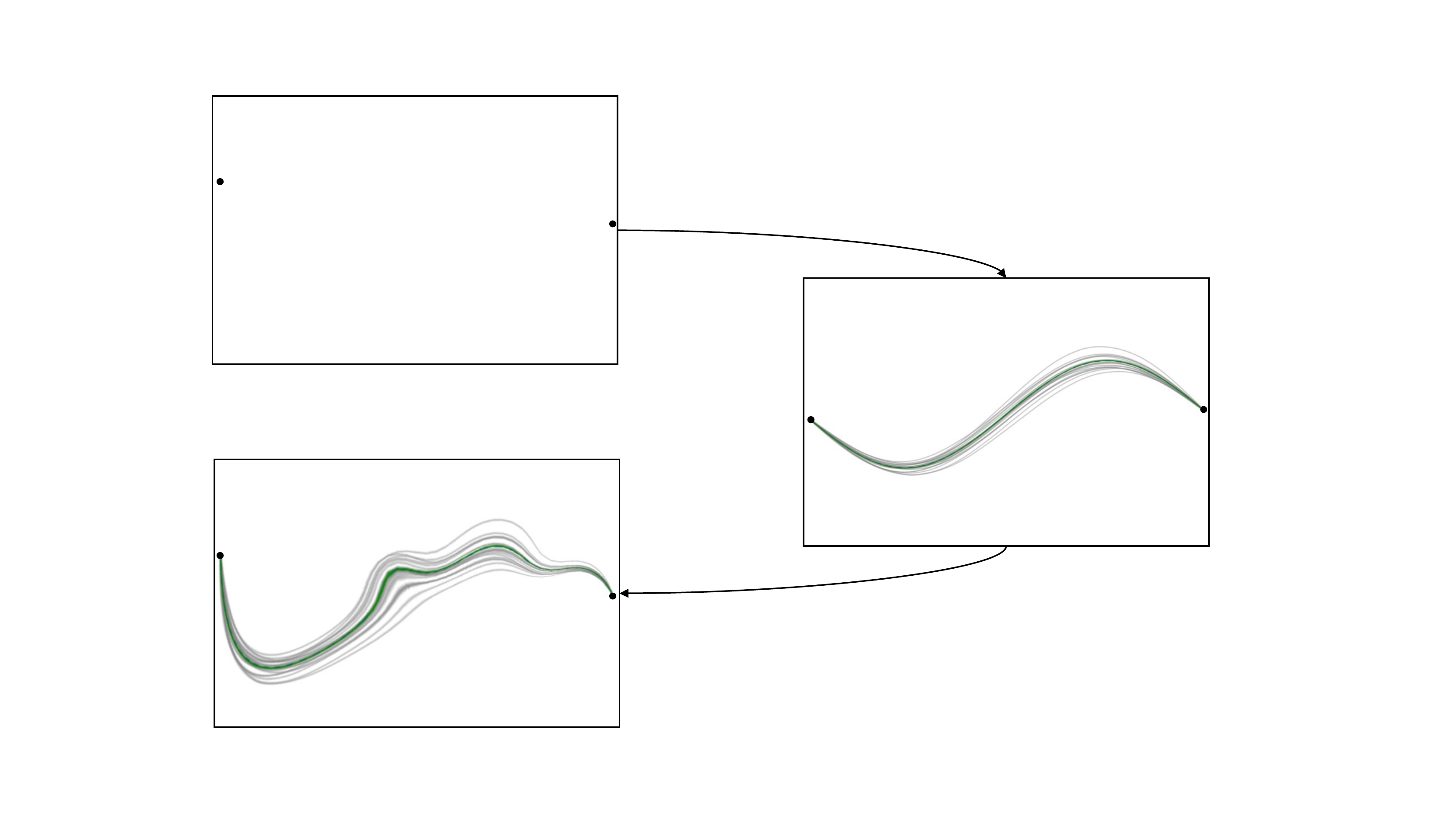}};
\node (A) at (-1.5,4.2) {$\cfrac{\mathrm{d}\mathrm{y}}{\mathrm{d}x} = f(x,\mathrm{y}(x))$};
\node (B) at (11,3) {$\cfrac{\mathrm{d}\mathrm{s}}{\mathrm{d}r} = g(r)$};
\node (C) at (8,5) {Lie transform; $(x,y) \mapsto (r,s)$};
\node (D) at (8.3,0.85) {(Lie transform)$^{-1}$; $(r,s) \mapsto (x,y)$};
\node (E) at (-1.3,1.7) {Exact};
\node (F) at (-1.3,1.3) {Bayesian};
\node (G) at (-1.3,0.9) {PNM};
\node (H) at (2,4.5) {?};
\end{tikzpicture} }
\caption{Schematic of our proposed approach.
An $n$th order ODE that admits a solvable Lie algebra can be transformed into $n$ integrals, to which exact Bayesian probabilistic numerical methods can be applied.
The posterior measure on the transformed space is then pushed back through the inverse transformation onto the original domain of interest.
}
\label{fig: schematic}
\end{figure}

The remainder of the paper is structured as follows: Section \ref{sec: Lie} is dedicated to a succinct review of Lie group methods for ODEs. 
In Section \ref{sec: methods}, Lie group methods are exploited to construct priors over the solution space of the ODE whenever a solvable Lie algebra is admitted and exact Bayesian inference is performed on a transformed version of the original ODE which takes the form of an integral. 
Numerical experiments are reported in Section \ref{sec: experiment}.
Finally, some conclusions and recommendations for future research are drawn in Section \ref{sec: conclusion}.

\section{Background} \label{sec: Lie}

This section provides a succinct overview of classical Lie group methods, introduced in the 19th century by Sophus Lie in the differential equation context \citep{Hawkins2012}. 
Lie developed the fundamental notion of a \emph{Lie group of transformations}, which roughly correspond to maps that take one solution of the ODE to another. 
This provided a formal generalisation of certain algebraic techniques, such as dimensional analysis and transformations based on spatial symmetries, that can sometimes be employed to algebraically reduce the order of an ODE. 

This section proceeds as follows:
First, in Section \ref{subsec: lie transformations} we introduce a one-parameter Lie group of transformations and then, in Section \ref{subsec: invar under trans}, we explain what it means for a curve or a surface to be transformation-invariant.
In Secion \ref{subsec: symm method ODE} we recall consequences of Lie's theory in the ODE context.
Last, in Section \ref{subsec: multi-parameter} the generalisation to multi-parameter Lie groups is indicated.
Our development is heavily influenced by \cite{Bluman2002} and we refer the reader to their book when required.

\subsection{One-Parameter Lie Groups of Transformations}
\label{subsec: lie transformations}

The purpose of this section is to recall essential definitions, together with the \emph{first fundamental theorem of Lie}, which relates a Lie group of transformations to its infinitesimal generator.
In what follows we consider a fixed domain $D \subset \mathbb{R}^d$ and denote a generic state variable as $x = (x_1,\dots,x_d) \in D$.

\begin{definition}[One-Parameter Group of Transformations]
A \emph{one-parameter group of transformations} on $D$ is a map $X : D \times S \rightarrow D$, defined on $D \times S$ for some set $S \subset \mathbb{R}$, together with a bivariate map $\phi : S \times S \rightarrow S$, such that the following hold:
\begin{enumerate}
\item[(1)] For each $\epsilon \in S$, the transformation $X(\cdot , \epsilon)$ is a bijection on $D$.
\item[(2)] $(S , \phi)$ forms a group with law of composition $\phi$.
\item[(3)] If $\epsilon_0$ is the identity element in $(\textit{S}, \phi)$, then $X(\cdot,\epsilon_0)$ is the identity map on $D$.
\item[(4)] For all $x \in D$, $\epsilon,\delta \in S$, if $x^*=X(x,\epsilon)$, $x^{**}= X(x^*,\delta)$, then $x^{**}=X(x^*,\phi (\epsilon,\delta))$.
\end{enumerate}
\end{definition}

\noindent In what follows we continue to use the shorthand notation $x^* = X(x,\epsilon)$.
The notion of a \emph{Lie} group additionally includes smoothness assumptions on the maps that constitute a group of transformations.
Recall that a real-valued function is \emph{analytic} if it can be locally expressed as a convergent power series.

\begin{definition}[One-Parameter Lie Group of Transformations]
Let $X$, together with $\phi$, form a one-parameter group of transformations on $D$.
Then we say that $X$, together with $\phi$, form a \emph{one-parameter Lie group of transformations} on $D$ if, in addition, the following hold:
\begin{enumerate}
\item[(5)] $S$ is a (possibly unbounded) interval in $\mathbb{R}$. 
\item[(6)] For each $\epsilon \in S$, $X(\cdot,\epsilon)$ is infinitely differentiable in $\textit{D}$.
\item[(7)] For each $x \in D$, $X(x,\cdot)$ is an analytic function on $S$.
\item[(8)] $\phi$ is analytic in $S \times S$. 
\end{enumerate}
\end{definition}
\noindent Without the loss of generality it will be assumed, through re-parametrisation if required, that $S$ contains the origin and $\epsilon=0$ is the identity element in $(S, \phi)$.
The definition is illustrated through three examples:

\begin{example}[Translation in the x-Axis]\label{ex: translation}
The one-parameter transformation $x_1^* = x_1 + \epsilon$, $x_2^* = x_2$ for $\epsilon \in \mathbb{R}$ forms a Lie group of transformations on $D = \mathbb{R}^2$ with group composition law $\phi(\epsilon,\delta)=\epsilon+\delta$.
\end{example}

\begin{example}[Rotation Group] \label{ex: rotation}
The one-parameter transformation $x_1^* = x_1\cos(\epsilon)-x_2\sin(\epsilon)$, $x_2^* = x_1\sin(\epsilon)+x_2\cos(\epsilon)$ for $\epsilon \in \mathbb{R}$ again forms a Lie group of transformations on $D = \mathbb{R}^2$ with group composition law $\phi(\epsilon,\delta)=\epsilon+\delta$.
\end{example}

\begin{example}[Cyclic group $C_p$]
Let $D=\{1,2,3,\dots,p\}$. Let $S=\mathbb{Z}$. For $n\in D$ and $m\in S$, let $X(n,m)=n+m \; (\mathrm{mod}\, p)$.
Then $X$, together with $\phi(a,b)=a+b$, defines a one parameter group of transformations on $D$, but is not a Lie group of transformations since (5) is violated.
\end{example}

The first fundamental theorem of Lie establishes that a Lie group of transformations can be characterised by its infinitesimal generator, defined next:

\begin{definition}[Infinitesimal Transformation]
Let $X$ be a one-parameter Lie group of transformations. 
Then the transformation $x^* = x + \epsilon \xi(x)$,
\begin{eqnarray*}
\xi(x) & = & \left. \frac{\partial X(x,\epsilon)}{\partial \epsilon} \right|_{\epsilon = 0},
\end{eqnarray*}
is called the \emph{infinitesimal transformation} associated to $X$ and the map $\xi$ is called an \emph{infinitesimal}.
\end{definition}

\begin{definition}[Infinitesimal Generator]
The \emph{infinitesimal generator} of a one-parameter Lie group of transformations $X$ is defined to be the operator $\mathrm{X} = \xi \cdot \nabla$ where $\xi$ is the infinitesimal associated to $X$ and $\nabla=(\frac{\partial}{\partial x_1}, \frac{\partial}{\partial x_2}, \dots , \frac{\partial}{\partial x_n})$ is the gradient.
\end{definition}

\begin{example}[Ex. \ref{ex: translation}, continued]
For Ex. \ref{ex: translation}, we have 
\begin{eqnarray*}
\xi(x) & = & \left( \left. \frac{\mathrm{d}x_1^*}{\mathrm{d}\epsilon} \right|_{\epsilon = 0}, \left. \frac{ \mathrm{d} x_2^*}{\mathrm{d}\epsilon} \right|_{\epsilon = 0}\right) \; = \; (1,0)
\end{eqnarray*}
so the infinitesimal generator for translation in the x-axis is $\mathrm{X} = \frac{\partial}{\partial x_1}$.
\end{example}

\begin{example}[Ex. \ref{ex: rotation}, continued]
Similarly, the infinitesimal generator for the rotation group is $\mathrm{X} = -x_2\frac{\partial}{\partial x_1}+x_1\frac{\partial}{\partial x_2}$.
\end{example}

\noindent The first fundamental theorem of Lie provides a constructive route to obtain the infinitesimal generator from the transformation itself:

\begin{theorem}[First Fundamental Theorem of Lie; see pages 39-40 of \cite{Bluman2002}]
\label{thm: FTL 1}
A one parameter Lie group of transformations $X$ is characterised by the initial value problem:
\begin{eqnarray}
\frac{\mathrm{d}x^*}{\mathrm{d}\tau} \; = \; \xi(x^*), \hspace{20pt} x^* = x \text{ when } \tau = 0,  \label{eq: FTL 1} 
\end{eqnarray}
where $\tau(\epsilon)$ is a parametrisation of $\epsilon$ which satisfies $\tau(0) = 0$ and, for $\epsilon \neq 0$,
\begin{eqnarray}
\tau(\epsilon) & = & \int_0^{\epsilon} \cfrac{\partial \phi(a,b)}{\partial b} \Bigr|_{\substack{(a,b)=(\delta^{-1},\delta)}} \mathrm{d} \delta . \nonumber
\end{eqnarray}
Here $\delta^{-1}$ denotes the group inverse element for $\delta$.
\end{theorem}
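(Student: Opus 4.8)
The plan is to exploit the group composition law (property (4)) to convert the algebraic structure of $X$ into a differential equation, and then to reparametrise the group parameter so as to render that equation autonomous. The smoothness axioms (5)--(8) are exactly what make the manipulations, and the final appeal to an existence-uniqueness theorem, legitimate.

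First I would start from the identity $X(X(x,\epsilon),\delta) = X(x,\phi(\epsilon,\delta))$, valid for all admissible $\epsilon,\delta$, and write $x^* = X(x,\epsilon)$. Differentiating both sides with respect to $\delta$ and evaluating at $\delta = 0$, the left-hand side yields $\xi(x^*)$ directly from the definition of the infinitesimal (recalling that the reparametrised identity element is $\epsilon = 0$, so $X(x^*,0) = x^*$). Applying the chain rule to the right-hand side, together with $\phi(\epsilon,0) = \epsilon$, gives $\frac{\partial X(x,\epsilon)}{\partial \epsilon}\,\Psi(\epsilon)$, where $\Psi(\epsilon) := \frac{\partial \phi(\epsilon,b)}{\partial b}\big|_{b=0}$. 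Equating the two produces the non-autonomous equation $\frac{\partial x^*}{\partial \epsilon} = \xi(x^*)/\Psi(\epsilon)$, with initial condition $x^*|_{\epsilon=0} = x$ supplied by property (3).

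Next I would reparametrise. Setting $\tau(\epsilon) = \int_0^\epsilon \frac{\mathrm{d}\delta}{\Psi(\delta)}$ and applying the chain rule converts the equation into the autonomous form $\frac{\mathrm{d}x^*}{\mathrm{d}\tau} = \xi(x^*)$. Since $\phi(0,b) = b$ forces $\Psi(0) = 1$, and $\Psi$ is analytic, $\Psi$ is nonzero near the origin, so $\tau$ is a regular reparametrisation with $\tau(0) = 0$. Existence and uniqueness of the resulting initial value problem then follow from Picard--Lindel\"of, using the smoothness of $\xi$ inherited from axioms (6)--(8); this is precisely what licences the claim that the IVP \emph{characterises} $X$, since solving it recovers $X$ uniquely.

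The remaining and most delicate step is to identify the factor $1/\Psi(\delta)$ with the integrand stated in the theorem, $\Gamma(\delta) := \frac{\partial \phi(a,b)}{\partial b}\big|_{(a,b)=(\delta^{-1},\delta)}$. Here I would differentiate the associativity relation $\phi(a,\phi(b,c)) = \phi(\phi(a,b),c)$ with respect to $c$ and then specialise to $a = \delta^{-1}$, $b = \delta$, $c = 0$. The inverse property $\phi(\delta^{-1},\delta) = 0$ collapses the right-hand side to $\frac{\partial \phi(0,c)}{\partial c}\big|_{c=0} = 1$, while the left-hand side factorises as $\Gamma(\delta)\,\Psi(\delta)$; hence $\Gamma(\delta)\Psi(\delta) = 1$ and the stated integral formula follows. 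I expect this associativity manipulation --- correctly tracking which argument is differentiated and at which point it is evaluated --- to be the main obstacle, the rest being routine calculus and a standard invocation of existence and uniqueness.
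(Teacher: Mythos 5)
Your proof is correct. One important caveat on the comparison: the paper does not actually prove this theorem --- it is stated as a classical result and attributed to pages 39--40 of \cite{Bluman2002}, and the supplementary proof section covers only the other theorems --- so the relevant benchmark is the textbook argument. Relative to that, your route differs in one genuine respect. The standard derivation obtains the integrand $\Gamma(\delta) := \frac{\partial \phi(a,b)}{\partial b}\big|_{(a,b)=(\delta^{-1},\delta)}$ in a single step, by writing $\epsilon + \Delta\epsilon = \phi\big(\epsilon,\phi(\epsilon^{-1},\epsilon+\Delta\epsilon)\big)$, Taylor-expanding $\phi(\epsilon^{-1},\epsilon+\Delta\epsilon) = \Gamma(\epsilon)\,\Delta\epsilon + O(\Delta\epsilon^2)$, and concluding $\frac{\mathrm{d}x^*}{\mathrm{d}\epsilon} = \Gamma(\epsilon)\,\xi(x^*)$ directly. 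You instead first differentiate the group law at $\delta=0$ to get $\frac{\mathrm{d}x^*}{\mathrm{d}\epsilon} = \xi(x^*)/\Psi(\epsilon)$ with $\Psi(\epsilon) = \frac{\partial\phi(\epsilon,b)}{\partial b}\big|_{b=0}$, and then establish $\Gamma(\delta)\Psi(\delta) = 1$ by differentiating associativity at $(a,b,c)=(\delta^{-1},\delta,0)$; I verified this computation --- the left side factorises as $\partial_2\phi\big(\delta^{-1},\phi(\delta,0)\big)\cdot\partial_2\phi(\delta,0) = \Gamma(\delta)\Psi(\delta)$ and the right side is $\partial_2\phi(0,c)\big|_{c=0}=1$ since $0$ is a left identity. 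Both routes are valid; yours buys a clean separation between deriving the autonomous ODE and identifying the stated integrand, and as a by-product $\Gamma\Psi=1$ shows $\Psi$ is nonvanishing on all of $S$ (not merely near the origin, as you stated), which slightly strengthens your regularity remark and guarantees $\tau'(\epsilon)=\Gamma(\epsilon)\neq 0$, so $\tau$ is a global regular reparametrisation and the Picard--Lindel\"of uniqueness argument indeed delivers the ``characterisation'' claim. One small observation in your favour: the paper's axiom (4) contains a typo ($x^{**} = X(x^*,\phi(\epsilon,\delta))$ should read $x^{**} = X(x,\phi(\epsilon,\delta))$), and your argument correctly uses the intended form.
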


\noindent Since Eq.~\eqref{eq: FTL 1} is translation-invariant in $\tau$, it follows that without loss of generality we can assume a parametrisation $\tau(\epsilon)$ such that the group action becomes $\phi(\tau_1,\tau_2) = \tau_1 + \tau_2$ and, in particular, $\tau^{-1} = - \tau$.
In the remainder of the paper, for convenience we assume that all Lie groups are parametrised such that the group action is $\phi(\epsilon_1,\epsilon_2)  = \epsilon_1 + \epsilon_2$.

The next result can be viewed as a converse to Theorem \ref{thm: FTL 1}, as it shows how to obtain the transformation from the infinitesimal generator.
All proofs are reserved for Supplemental Section \ref{proof section}.

\begin{theorem} \label{thm: exp generator}
A one parameter Lie group of transformations with infinitesimal generator $\mathrm{X}$ is equivalent to $x^* = e^{{\epsilon}\mathrm{X}} x$, where $e^{{\epsilon}\mathrm{X}}=\sum_{k=0}^{\infty} \frac{1}{k!}{\epsilon}^k\mathrm{X}^k x$.
\end{theorem}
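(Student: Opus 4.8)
The plan is to combine the First Fundamental Theorem of Lie (Theorem~\ref{thm: FTL 1}) with a Taylor expansion in $\epsilon$. Working in the standing parametrisation for which the group action is addition, Theorem~\ref{thm: FTL 1} tells us that $x^* = X(x,\epsilon)$ is the unique solution of the autonomous initial value problem $\frac{\mathrm{d}x^*}{\mathrm{d}\epsilon} = \xi(x^*)$ with $x^*|_{\epsilon = 0} = x$. The goal is then to identify the Taylor coefficients of the map $\epsilon \mapsto x^*$ at $\epsilon = 0$ with successive powers of the generator $\mathrm{X} = \xi \cdot \nabla$ applied to the coordinate functions.

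The key lemma, which I would establish by induction on $k$, is the identity $\frac{\mathrm{d}^k}{\mathrm{d}\epsilon^k} F(x^*) = (\mathrm{X}^k F)(x^*)$, valid for any smooth $F : D \to \mathbb{R}$ evaluated along the flow. The base case $k = 1$ is just the chain rule together with the initial value problem: $\frac{\mathrm{d}}{\mathrm{d}\epsilon} F(x^*) = \nabla F(x^*) \cdot \frac{\mathrm{d}x^*}{\mathrm{d}\epsilon} = (\xi \cdot \nabla F)(x^*) = (\mathrm{X}F)(x^*)$. The inductive step follows by applying the base case to the smooth function $\mathrm{X}^k F$ in place of $F$. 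In words, differentiation in $\epsilon$ along the flow is precisely the action of the operator $\mathrm{X}$.

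Specialising $F$ to each coordinate map $x \mapsto x_j$ and evaluating at $\epsilon = 0$, where $x^* = x$, yields $\frac{\mathrm{d}^k x^*}{\mathrm{d}\epsilon^k}\big|_{\epsilon = 0} = (\mathrm{X}^k x)(x)$. Properties (6)--(7) of a one-parameter Lie group of transformations guarantee that $\epsilon \mapsto X(x,\epsilon)$ is analytic, so it equals its Taylor series in a neighbourhood of the origin; substituting the coefficients just computed gives $x^* = \sum_{k=0}^{\infty} \frac{\epsilon^k}{k!}(\mathrm{X}^k x)(x) = e^{\epsilon \mathrm{X}} x$, as required. For the converse direction I would run the argument in reverse: $e^{\epsilon \mathrm{X}} x$ solves the same initial value problem (differentiating the series term by term and using that $\mathrm{X}$ commutes with $e^{\epsilon \mathrm{X}}$), so by uniqueness it coincides with the transformation generated by $\mathrm{X}$, whose infinitesimal is recovered as $\frac{\mathrm{d}}{\mathrm{d}\epsilon} e^{\epsilon \mathrm{X}} x\big|_{\epsilon = 0} = \mathrm{X} x = \xi$.

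The main obstacle is analytic rather than algebraic: one must justify that the formal exponential series actually converges and genuinely reproduces the flow, rather than merely agreeing with it as an asymptotic expansion. This is exactly where the analyticity hypothesis (7) is essential --- without it the Taylor coefficients would still be correct but the series need not sum to $x^*$. A secondary care point is the term-by-term differentiation used in the converse, which is again licensed by analyticity, equivalently by local uniform convergence of the series on a disc of positive radius about $\epsilon = 0$.
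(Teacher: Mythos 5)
Your proof is correct and takes essentially the same route as the paper's: both identify the Taylor coefficients of $\epsilon \mapsto x^*$ at $\epsilon = 0$ with $\mathrm{X}^k x$ via the chain-rule identity $\frac{\mathrm{d}^k}{\mathrm{d}\epsilon^k}F(x^*) = (\mathrm{X}^k F)(x^*)$, using the flow equation from Theorem~\ref{thm: FTL 1}. Your explicit induction, your appeal to analyticity to ensure the series actually sums to $x^*$, and your converse via uniqueness of the initial value problem are more careful than the paper's proof, which leaves these points implicit, but the underlying argument is the same.
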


\noindent The following is immediate from the proof of Theorem \ref{thm: exp generator}:

\begin{corollary} \label{cor: exp}
If $F$ is infinitely differentiable, then $F(x^*) = e^{\epsilon \mathrm{X}} F(x)$.
\end{corollary}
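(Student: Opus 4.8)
The plan is to reduce everything to a single first-order identity and then iterate it. Writing $x^* = X(x,\epsilon)$ and using the convenient parametrisation adopted after Theorem \ref{thm: FTL 1}, recall that $x^*$ solves the autonomous initial value problem $\frac{\mathrm{d}x^*}{\mathrm{d}\epsilon} = \xi(x^*)$ with $x^* = x$ at $\epsilon = 0$. First I would define the scalar function $G(\epsilon) := F(x^*)$ and differentiate in $\epsilon$ by the chain rule, substituting the governing ODE for $\frac{\mathrm{d}x^*}{\mathrm{d}\epsilon}$:
\begin{eqnarray*}
\frac{\mathrm{d}}{\mathrm{d}\epsilon} F(x^*) & = & \nabla F(x^*) \cdot \xi(x^*) \; = \; (\xi \cdot \nabla F)(x^*) \; = \; (\mathrm{X}F)(x^*) .
\end{eqnarray*}
This is the core identity: differentiation in $\epsilon$ along a group orbit is equivalent to applying the generator $\mathrm{X}$ and then re-evaluating at $x^*$.

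Next I would iterate. Since $F$ is infinitely differentiable, so is $\mathrm{X}F = \xi \cdot \nabla F$, and the identity above may therefore be applied with $\mathrm{X}F$ in place of $F$. A straightforward induction on $k$ then yields $\frac{\mathrm{d}^k}{\mathrm{d}\epsilon^k} F(x^*) = (\mathrm{X}^k F)(x^*)$ for every $k \geq 0$. This is exactly the computation already carried out in the proof of Theorem \ref{thm: exp generator} for the special case in which $F$ is taken to be the coordinate functions, so the corollary requires no new algebra beyond allowing $F$ to be a general smooth observable rather than the identity. Evaluating at $\epsilon = 0$, where $x^* = x$, gives $(\mathrm{X}^k F)(x)$ for the $k$th derivative of $G$ at the origin, and assembling these into a Taylor expansion of $G$ about $\epsilon = 0$ produces
\begin{eqnarray*}
F(x^*) & = & \sum_{k=0}^{\infty} \frac{\epsilon^k}{k!} \left. \frac{\mathrm{d}^k}{\mathrm{d}\epsilon^k} F(x^*) \right|_{\epsilon=0} \\
& = & \sum_{k=0}^{\infty} \frac{\epsilon^k}{k!} (\mathrm{X}^k F)(x) \; = \; e^{\epsilon \mathrm{X}} F(x) ,
\end{eqnarray*}
which is the claimed identity.

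The hard part will not be the formal computation but its justification. Writing $G(\epsilon)$ as a convergent power series requires $\epsilon \mapsto F(x^*)$ to be (real-)analytic near $\epsilon = 0$, not merely smooth, since a generic $C^\infty$ function need not agree with its Taylor series. For the coordinate functions this analyticity is supplied directly by axiom (7), which makes $X(x,\cdot)$ analytic; for a general smooth $F$ one must instead argue that the orbit remains in a neighbourhood on which the series converges, appealing to analyticity of the infinitesimal $\xi$ together with control on the growth of $\mathrm{X}^k F$. Since this is precisely the convergence estimate established in the proof of Theorem \ref{thm: exp generator}, my intention is to invoke that argument rather than reproduce it, which is what justifies the claim that the corollary is immediate from that proof.
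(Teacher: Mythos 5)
Your proposal is correct and takes essentially the same route as the paper: the paper's proof of Theorem \ref{thm: exp generator} already establishes your core identity $\frac{\mathrm{d}^k}{\mathrm{d}\epsilon^k}F(x^*) = \mathrm{X}^k F(x^*)$ for a general differentiable $F$ (via the chain rule and the flow equation from Theorem \ref{thm: FTL 1}) and then Taylor-expands in $\epsilon$, which is exactly why the corollary is declared immediate from that proof. Your closing caveat about analyticity is apt, though note that the paper's proof contains no convergence estimate to invoke --- it applies Taylor's theorem formally, with axiom (7) covering only the coordinate functions --- so on that point you are being more careful than the source rather than deferring to it.
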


\subsection{Invariance Under Transformation}
\label{subsec: invar under trans}

In this section we explain what it means for a curve or a surface to be invariant under a Lie group of transformations and how this notion relates to the infinitesimal generator.

\begin{definition}[Invariant Function]
A function $F:D \rightarrow \mathbb{R}$ is said to be \emph{invariant} under a one parameter Lie group of transformations $x^* = X(x,\epsilon)$ if $F(x^*)=F(x)$ for all $x \in D$ and $\epsilon \in S$.
\end{definition}

\noindent Based on the results in Section \ref{subsec: lie transformations}, one might expect that invariance to a transformation can be expressed in terms of the infinitesimal generator of the transformation.
This is indeed the case:

\begin{theorem}\label{thm: invar 1}
A differentiable function $F:D \mapsto \mathbb{R}$ is invariant under a one parameter Lie group of transformations with infinitesimal generator $\mathrm{X}$ if and only if $\mathrm{X}F(x)=0$ for all $x \in D$.
\end{theorem}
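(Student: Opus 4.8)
The plan is to reduce the global statement of invariance to a one-dimensional question about how $F$ changes along the orbit $\epsilon \mapsto x^*$ of a fixed point $x$, and then to exploit the ODE characterisation of the group supplied by the First Fundamental Theorem of Lie. Concretely, I would fix $x \in D$, define the scalar function $G(\epsilon) = F(x^*) = F(X(x,\epsilon))$, and differentiate it in $\epsilon$. Because the paper has fixed the parametrisation so that the group law is $\phi(\epsilon_1,\epsilon_2) = \epsilon_1 + \epsilon_2$, Theorem \ref{thm: FTL 1} gives the autonomous ODE $\tfrac{\mathrm{d}x^*}{\mathrm{d}\epsilon} = \xi(x^*)$ with $x^*|_{\epsilon=0} = x$. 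The chain rule (which is all the differentiability of $F$ we are permitted to use) then yields
\begin{eqnarray*}
G'(\epsilon) & = & \nabla F(x^*) \cdot \frac{\mathrm{d}x^*}{\mathrm{d}\epsilon} \; = \; \xi(x^*) \cdot \nabla F(x^*) \; = \; (\mathrm{X}F)(x^*).
\end{eqnarray*}
This single identity drives both directions of the equivalence.

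For the forward direction, I would assume $F$ is invariant, so that $G(\epsilon) = F(x)$ is constant in $\epsilon$; hence $G'(\epsilon) \equiv 0$, and evaluating the displayed identity at $\epsilon = 0$ (where $x^* = x$, since $X(\cdot,0)$ is the identity map) gives $(\mathrm{X}F)(x) = 0$. As $x \in D$ was arbitrary, $\mathrm{X}F$ vanishes on all of $D$. For the reverse direction, I would assume $\mathrm{X}F(x) = 0$ for every $x \in D$. Since each $X(\cdot,\epsilon)$ maps $D$ into $D$, every orbit point $x^*$ lies in $D$, so the identity gives $G'(\epsilon) = (\mathrm{X}F)(x^*) = 0$ for all $\epsilon \in S$. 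Thus $G$ is constant, and $G(\epsilon) = G(0) = F(x)$, i.e. $F(x^*) = F(x)$ for all $\epsilon$, which is exactly invariance.

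There is no serious analytic obstacle here, but two points warrant care, and I would flag them rather than gloss over them. First, I would deliberately route the argument through the flow ODE of Theorem \ref{thm: FTL 1} rather than through the exponential series $F(x^*) = e^{\epsilon \mathrm{X}}F(x)$ of Corollary \ref{cor: exp}: the corollary presupposes $F$ infinitely differentiable, whereas the theorem only hypothesises that $F$ is differentiable, and the chain-rule computation needs nothing more. Second, I would note the asymmetry in what each direction demands. The forward implication uses only infinitesimal information, namely the vanishing of $G'$ at the identity $\epsilon = 0$; the reverse implication genuinely requires the hypothesis ``$\mathrm{X}F(x)=0$ for \emph{all} $x \in D$'', since we must know $\mathrm{X}F$ vanishes at every point of the entire orbit in order to conclude $G' \equiv 0$ and integrate up to constancy. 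Making explicit that the orbit stays inside $D$ (which is immediate from property (1) of the group of transformations) is the only hypothesis-checking step that could otherwise be silently skipped.
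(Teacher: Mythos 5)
Your proof is correct, but it takes a genuinely different route from the paper's. The paper argues via the exponential representation of Corollary \ref{cor: exp}, writing $F(x^*) = e^{\epsilon \mathrm{X}}F(x) = F(x) + \epsilon\, \mathrm{X}F(x) + O(\epsilon^2)$ and extracting the first-order Taylor coefficient; the reverse implication there implicitly rests on the observation that $\mathrm{X}F \equiv 0$ on $D$ forces $\mathrm{X}^k F \equiv 0$ for every $k$, so the whole series collapses to $F(x)$. You instead reduce to the scalar function $G(\epsilon) = F(X(x,\epsilon))$ along each orbit, invoke the flow ODE of Theorem \ref{thm: FTL 1}, and integrate $G' = (\mathrm{X}F)(x^*)$ up to constancy. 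Your route buys two real improvements. First, it matches the theorem's actual hypothesis: Corollary \ref{cor: exp} presupposes $F$ infinitely differentiable (indeed the series argument is analytic in spirit), whereas the theorem only assumes $F$ differentiable, and your chain-rule computation needs nothing more. Second, your reverse direction is more careful than the paper's: integrating $G' \equiv 0$ along the entire orbit, after checking the orbit stays in $D$, makes rigorous a step the paper's chain of equivalences glosses over. It is also worth noting that the paper's displayed proof is phrased in terms of ``$F(x^*)=0$ whenever $F(x)=0$'', which is the invariant-\emph{surface} condition rather than the invariant-\emph{function} condition $F(x^*) = F(x)$ appearing in the theorem statement; your argument addresses the stated theorem directly. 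Your closing remark on the asymmetry of the two directions --- the forward implication needing only infinitesimal data at $\epsilon = 0$, the reverse needing $\mathrm{X}F = 0$ on the whole orbit --- is accurate and correctly identifies why the global hypothesis cannot be weakened.
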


\begin{theorem} \label{thm: invar 2}
For a function $F:D \mapsto \mathbb{R}$ and a one parameter Lie group of transformations $x^* = X(x,\epsilon)$, the relation $F(x^*) = F(x)+\epsilon$ holds for all $x \in D$ and $\epsilon \in S$ if and only if $\mathrm{X}F(x)=1$ for all $x \in D$.
\end{theorem}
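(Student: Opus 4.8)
The plan is to follow the same strategy used to prove Theorem~\ref{thm: invar 1}, reducing the statement to a one-dimensional initial value problem in the group parameter $\epsilon$. Fix $x \in D$ and define the scalar function $g(\epsilon) = F(x^*) = F(X(x,\epsilon))$ on the interval $S$. The key observation, which I would establish first, is that under the standardised parametrisation $\phi(\epsilon_1,\epsilon_2) = \epsilon_1 + \epsilon_2$ adopted after Theorem~\ref{thm: FTL 1}, the velocity of the orbit satisfies $\frac{\mathrm{d}x^*}{\mathrm{d}\epsilon} = \xi(x^*)$ for \emph{all} $\epsilon \in S$, not merely at $\epsilon = 0$. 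This follows by differentiating the group law $X(x,\epsilon+\delta) = X(X(x,\epsilon),\delta)$ with respect to $\delta$ at $\delta = 0$ and using the definition of the infinitesimal $\xi$; it is precisely the content of the First Fundamental Theorem of Lie in this parametrisation.

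Granting this, the chain rule gives
\begin{eqnarray*}
\frac{\mathrm{d}g}{\mathrm{d}\epsilon} \; = \; \nabla F(x^*) \cdot \frac{\mathrm{d}x^*}{\mathrm{d}\epsilon} \; = \; \nabla F(x^*) \cdot \xi(x^*) \; = \; (\xi \cdot \nabla) F(x^*) \; = \; \mathrm{X}F(x^*),
\end{eqnarray*}
so that $g'(\epsilon) = \mathrm{X}F(x^*)$ holds identically in $\epsilon$. This single identity drives both implications. For the reverse direction, suppose $\mathrm{X}F \equiv 1$ on $D$. Since $x^* \in D$ for every $\epsilon$, we obtain $g'(\epsilon) = 1$ for all $\epsilon \in S$, while the initial condition $g(0) = F(X(x,0)) = F(x)$ follows from property~(3). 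Integrating the trivial ODE yields $g(\epsilon) = F(x) + \epsilon$, i.e. $F(x^*) = F(x) + \epsilon$. For the forward direction, suppose $F(x^*) = F(x) + \epsilon$ for all $x$ and $\epsilon$; then $g(\epsilon) = F(x) + \epsilon$, so $g'(\epsilon) = 1$, and evaluating the displayed chain-rule identity at $\epsilon = 0$ (where $x^* = x$) gives $\mathrm{X}F(x) = g'(0) = 1$. As $x \in D$ was arbitrary, this holds throughout $D$.

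I expect the only genuine obstacle to be the first step, namely justifying that $\frac{\mathrm{d}x^*}{\mathrm{d}\epsilon} = \xi(x^*)$ holds uniformly in $\epsilon$ rather than only infinitesimally; everything downstream is elementary calculus once this is in hand. An alternative, entirely parallel route avoids the ODE by invoking Corollary~\ref{cor: exp}: if $F$ is infinitely differentiable then $F(x^*) = e^{\epsilon \mathrm{X}} F(x) = \sum_{k \geq 0} \frac{\epsilon^k}{k!} \mathrm{X}^k F(x)$, and the hypothesis $\mathrm{X}F \equiv 1$ forces $\mathrm{X}^k F \equiv 0$ for $k \geq 2$, collapsing the series to $F(x) + \epsilon$; the converse again follows by reading off the coefficient of $\epsilon$. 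I would present the ODE argument as primary, since it requires only differentiability of $F$ and exactly mirrors the companion proof of Theorem~\ref{thm: invar 1}.
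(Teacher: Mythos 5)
Your proof is correct, but your primary route is genuinely different from the paper's. The paper disposes of the theorem in two lines: by Corollary \ref{cor: exp}, $F(x^*) = F(x) + \epsilon \mathrm{X}F(x) + O(\epsilon^2)$, and ``the result follows from inspection of the $\epsilon$ coefficient'' --- that is, precisely the series argument you relegate to an alternative at the end. Your flow-based argument buys two things over this. First, it requires only differentiability of $F$, whereas the exponential expansion behind Corollary \ref{cor: exp} implicitly needs $F$ smooth (indeed, convergence of the series needs analyticity). Second, it makes the ``if'' direction airtight: inspecting the $\epsilon$ coefficient alone gives $F(x^*) = F(x) + \epsilon + O(\epsilon^2)$, and to upgrade this to an exact identity for all $\epsilon \in S$ one must still kill the tail --- exactly your observation that $\mathrm{X}F \equiv 1$ forces $\mathrm{X}^k F = \mathrm{X}^{k-1}(1) \equiv 0$ for $k \geq 2$, a step the paper's proof leaves implicit. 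Your handling of the one point you flag as the genuine obstacle is also sound: differentiating the group law $X(x,\epsilon+\delta) = X(X(x,\epsilon),\delta)$ in $\delta$ at $\delta = 0$ yields $\mathrm{d}x^*/\mathrm{d}\epsilon = \xi(x^*)$ uniformly in $\epsilon$, which is the content of Theorem \ref{thm: FTL 1} under the additive parametrisation the paper adopts immediately after it; and, as you could note, the forward direction needs this only at $\epsilon = 0$, where it is just the definition of the infinitesimal $\xi$. What the paper's approach buys in exchange is brevity and uniformity: its proof of Theorem \ref{thm: invar 1} is the same coefficient-inspection of the exponential expansion, so the two companion results fall out of a single device --- though, contrary to your expectation, that companion proof is \emph{not} an ODE argument either.
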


The following definition is fundamental to the method proposed in Section \ref{sec: methods}:

\begin{definition}[Canonical Coordinates] \label{def: canonical}
Consider a coordinate system $r = (r_1(x),  \dots , r_n(x))$ on $D$.
Then any one parameter Lie group of transformations $x^* = X(x,\epsilon)$ induces a transformation of the coordinates $r_i^* = r_i(x^*)$.
The coordinate system $r$ is called \emph{canonical} for the transformation if $r^*_1 = r_1 , \dots , r^*_{n-1} = r_{n-1}$ and $r^*_n = r_n + \epsilon$.
\end{definition}

\begin{example}[Ex. \ref{ex: rotation}, continued] 
For the rotation group in Ex. \ref{ex: rotation}, we have canonical coordinates $r_1(x_1,x_2)=\sqrt{x_1^2+x_2^2}$ , $r_2(x_1,x_2)=\mathrm{arctan}(x_2/x_1)$.
\end{example}

\noindent In canonical coordinates, a one parameter Lie group of transformations can be viewed as a straight-forward translation in the $r_n$-axis.
The existence of canonical coordinates is established in Thm. 2.3.5-2 of \cite{Bluman2002}.
Note that Thms. \ref{thm: invar 1} and \ref{thm: invar 2} imply that $\mathrm{X}r^*_i=0$ for $i=1,2,...,n-1$, $\mathrm{X}r^*_n=1$.

\begin{definition}[Invariant Surface]
For a function $F:D \rightarrow \mathbb{R}$, a surface defined by $F(x)=0$ is said to be \emph{invariant} under a one parameter Lie group of transformation $x^* = X(x,\epsilon)$ if and only if $F(x^*)=0$ whenever $F(x)=0$ for all $x \in D$ and $\epsilon \in S$.
\end{definition}

\noindent The invariance of a surface, as for a function, can be cast in terms of an infinitesimal generator:

\begin{corollary} 
A surface $F(x)=0$ is invariant under a one parameter Lie group of transformations with infinitesimal generator $\mathrm{X}$ if and only if $\mathrm{X}F(x)=0$ whenever $F(x)=0$.
\end{corollary}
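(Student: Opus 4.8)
The plan is to prove the two implications separately, treating the surface $F(x)=0$ as a regular hypersurface, i.e.\ assuming $\nabla F \neq 0$ at every point where $F$ vanishes. This mild non-degeneracy is essential: taking the translation group $x^* = (x_1+\epsilon, x_2)$ with generator $\mathrm{X} = \partial/\partial x_1$ and $F(x) = x_1^2$ gives $\mathrm{X}F = 2x_1 = 0$ on $\{x_1 = 0\}$, yet $F(x^*) = \epsilon^2 \neq 0$, so the statement fails at degenerate zeros and a regularity hypothesis must be in force.

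Necessity is the easy direction. Suppose the surface is invariant, fix any $x$ with $F(x)=0$, and set $g(\epsilon) := F(X(x,\epsilon))$. Invariance says $g(\epsilon) = 0$ for all $\epsilon \in S$, so in particular $g'(0) = 0$. By the chain rule and the definition of the infinitesimal $\xi(x) = \partial_\epsilon X(x,\epsilon)|_{\epsilon=0}$,
\[
g'(0) = \nabla F(x) \cdot \xi(x) = \mathrm{X}F(x),
\]
whence $\mathrm{X}F(x)=0$. No analyticity is needed here, only differentiability of $F$ and of $X(x,\cdot)$.

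For sufficiency I would use the First Fundamental Theorem of Lie (Theorem \ref{thm: FTL 1}) to identify $x^* = X(x,\epsilon)$ with the flow of the vector field $\xi$, i.e.\ $\frac{\mathrm{d}x^*}{\mathrm{d}\epsilon} = \xi(x^*)$ with $x^*|_{\epsilon=0} = x$, using the standing parametrisation for which the group law is addition. Fixing $x$ on the surface and writing $g(\epsilon) = F(x^*)$ again, the chain rule gives $g'(\epsilon) = \nabla F(x^*) \cdot \xi(x^*) = \mathrm{X}F(x^*)$. The hypothesis tells us $\mathrm{X}F$ vanishes wherever $F$ does, so $\xi$ is everywhere tangent to the surface; I would convert this geometric statement into a differential one by a division (Hadamard) lemma: since $0$ is a regular value of $F$ and $\mathrm{X}F$ vanishes on $\{F=0\}$, on a neighbourhood of the surface one can write $\mathrm{X}F = \mu F$ for a continuous function $\mu$. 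Substituting yields the linear homogeneous ODE $g'(\epsilon) = \mu(x^*)\, g(\epsilon)$ with initial datum $g(0) = F(x) = 0$, whose unique solution is $g \equiv 0$; hence $F(x^*) = 0$ for all $\epsilon$, which is exactly invariance.

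The main obstacle is the sufficiency direction, and specifically the division lemma: passing from ``$\mathrm{X}F = 0$ on the surface'' to ``$\mathrm{X}F = \mu F$ near the surface'' is where the regularity assumption $\nabla F \neq 0$ does the real work, and it is also the step that rules out the degenerate counterexample above. A secondary point to handle carefully is globalisation over the full parameter interval $S$: the division lemma is local, so I would argue that the set $\{\epsilon \in S : g(\epsilon) = 0\}$ is both closed (by continuity of $g$) and open (by applying the local uniqueness argument at each of its points), and therefore equals all of the connected interval $S$ since it contains $\epsilon = 0$. An alternative, purely series-based route --- matching coefficients of $F(x^*) = \sum_k \frac{\epsilon^k}{k!}\mathrm{X}^k F(x)$ from Corollary \ref{cor: exp} --- would require showing $\mathrm{X}^k F = 0$ on the surface for every $k$, which again fails without regularity (e.g.\ $\mathrm{X}^2 F = 2 \neq 0$ in the counterexample), so I would not pursue it.
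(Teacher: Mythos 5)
Your proof is correct, and it takes a genuinely different --- and in fact more rigorous --- route than the paper's. The paper's own argument for this corollary appears in the supplement (attached, confusingly, to Theorem \ref{thm: invar 1}, whose statement concerns function invariance, while the displayed chain of equivalences actually addresses the surface condition): it writes $F(x^*) = e^{\epsilon \mathrm{X}}F(x)$ via Cor.~\ref{cor: exp}, Taylor-expands, and concludes by asserting that the coefficient of the $O(\epsilon)$ term must vanish. That gives necessity, but in the sufficiency direction it silently discards the $O(\epsilon^2)$ tail; this is exactly the gap you isolate, since $\mathrm{X}F = 0$ on $\{F=0\}$ does not force $\mathrm{X}^k F = 0$ there for $k \geq 2$, and your example $F(x) = x_1^2$ with $\mathrm{X} = \partial/\partial x_1$ shows the corollary as literally stated fails at degenerate zeros. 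Your dynamical route --- the flow characterisation from Theorem \ref{thm: FTL 1}, the Hadamard division $\mathrm{X}F = \mu F$ near a regular level set, uniqueness for the scalar linear ODE $g' = \mu g$ with $g(0) = 0$, and the open--closed globalisation over the connected interval $S$ --- buys a complete proof and makes explicit the regularity hypothesis $\nabla F \neq 0$ on $\{F = 0\}$ that the paper (following the textbook treatment of Bluman and Anco) leaves implicit. The costs are minor and you anticipate them correctly: the division lemma with continuous $\mu$ needs slightly more smoothness than bare differentiability (roughly $F \in C^2$, harmless in the Lie setting), and a single $\mu$ on a full tubular neighbourhood would require gluing the local divisions, though your openness step only ever uses the local version along the trajectory. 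Your closing remark is also on target: with regularity in hand, the paper's series route could be repaired by bootstrapping $\mathrm{X}^k F = (\text{smooth}) \cdot F$ inductively, but without it the coefficient-matching argument is unsound.
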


\subsection{Symmetry Methods for ODEs}
\label{subsec: symm method ODE}

The aim of this section is to relate Lie transformations to ODEs for which these transformations are admitted.
These techniques form the basis for our proposed method in Section \ref{sec: methods}.

For an ODE of the form in Eq.~\eqref{eq: ODE1}, one can consider the action of a transformation on the coordinates $(x,y)$; i.e. a special case of the above framework where the generic coordinates $x_1$ and $x_2$ are respectively the independent ($x$) and dependent ($y$) variables of the ODE.
It is clear that such a transformation also implies some kind of transformation of the derivatives $y_m := \frac{\mathrm{d}^m y}{\mathrm{d} x^m}$.
Indeed, consider a one-parameter Lie group of transformations $(x^*,y^*) = (X(x,y;\epsilon), Y(x,y;\epsilon))$.
Then we have from the chain rule that $y^*_m := \frac{\mathrm{d}^m y^*}{\mathrm{d}(x^*)^m}$ is a function of $x,y,y_1, \dots ,y_m$ and we denote $y_m^* = Y_m(x,y,y_1, \dots ,y_m;\epsilon)$.
As an explicit example:
\begin{eqnarray*}
y^*_1 & = & \frac{\mathrm{d}y^*}{\mathrm{d}x^*} \; = \; \frac{\frac{\partial{Y(x,y;\epsilon)}}{\partial x}+y_1\frac{\partial{Y(x,y;\epsilon)}}{\partial y}}{\frac{\partial{X(x,y;\epsilon)}}{\partial x}+y_1\frac{\partial{X(x,y;\epsilon)}}{\partial y}} \; =: \; Y_1(x,y,y_1;\epsilon)
\end{eqnarray*}
In general:
\begin{eqnarray*}
y^*_m & = & \frac{\frac{\partial{y^*_{m-1}}}{\partial x}+y_1\frac{\partial{y^*_{m-1}}}{\partial y}+y_2\cfrac{\partial{y^*_{m-1}}}{\partial y_1}+...+y_m\frac{\partial{y^*_{m-1}}}{\partial y_{m-1}}}{\frac{\partial{X(x,y;\epsilon)}}{\partial x}+y_1\frac{\partial{X(x,y;\epsilon)}}{\partial y}} \; =: \; Y_m(x,y,y_1, \dots ,y_m;\epsilon)
\end{eqnarray*}
In this sense a transformation defined on $(x,y)$ can be naturally extended to a transformation on $(x,y,y_1,y_2,\dots)$ as required.

\begin{definition}[Admitted Transformation]
An $m$th order ODE $F(x,y,y_1, \dots , y_m)=0$ is said to \emph{admit} a one parameter Lie group of transformations $(x^*,y^*)=(X(x,y;\epsilon), Y(x,y;\epsilon))$ if the surface $F$ defined by the ODE is invariant under the Lie group of transformations, i.e. if $F(x^*,y^*,y_1^*, \dots ,y_m^*)=0$ whenever $F(x,y,y_1, \dots , y_m) = 0$.
\end{definition}

\begin{example}
Clearly any ODE of the form $\frac{\mathrm{d}y}{\mathrm{d}x}=F(x)$ admits the transformation $(x^*,y^*)=(x, y+\epsilon)$.
\end{example}

Our next task is to understand how the infinitesimal generator of a transformation can be extended to act on derivatives $y_m$.

\begin{definition}[Extended Infinitesimal Transformation]\label{extendedeta}
The $m$th \emph{extended infinitesimals} of a one parameter Lie group of transformations $(x^*,y^*) = (X(x,y;\epsilon) , Y(x,y;\epsilon) )$ are defined as the functions $\xi, \eta, \eta^{(1)} , \dots , \eta^{(m)}$ for which the following equations hold:
\begin{align*}
x^* & =  X(x,y;\epsilon) & = & \; x+\epsilon\xi(x,y)+O(\epsilon^2) \\
y^* & = Y(x,y;\epsilon) & = & \; y+\epsilon\eta(x,y)+O(\epsilon^2) \\
y^*_1 & = Y_1(x,y,y_1;\epsilon) & = & \; y_1+\epsilon\eta^{(1)}(x,y,y_1)+O(\epsilon^2) \\
& \vdots & & \\
y^*_m & = Y_m(x,y,y_1, \dots ,y_m;\epsilon) & = & \; y_m+\epsilon\eta^{(m)}(x,y,y_1,y_2, \dots , y_m)+O(\epsilon^2)
\end{align*}
\end{definition}
\noindent It can be shown straightforwardly via induction that 
\begin{eqnarray}
\eta^{(m)}(x,y,y_1,y_2, \dots , y_m) & = & \frac{\mathrm{d}^m\eta}{\mathrm{d}x^m}-\sum_{k=0}^{m}\frac{m!}{(m-k)!k!}y_{m-k-1}\frac{\mathrm{d}^k\xi}{\mathrm{d}x^k} \label{eq: extended infinitesimals}
\end{eqnarray}
where $\frac{\mathrm{d}}{\mathrm{d}x}$ denotes the full derivative with respect to $x$, i.e. $\frac{\mathrm{d}}{\mathrm{d}x}=\frac{\partial}{\partial x}+y_1\frac{\partial}{\partial y}+\sum_{k=2}^{m+1}y_k\frac{\partial}{\partial y_{k-1}}$.
It follows that $\eta^{(m)}$ is a polynomial in $y_1,y_2,\dots,y_m$ with coefficients linear combinations of $\xi, \eta$ and their partial derivatives up to the $m$th order.

\begin{definition}[Extended Infinitesimal Generator]
The $m$th \emph{extended infinitesimal generator} is defined as
\begin{eqnarray*}
\mathrm{X}^{(m)} & = & \xi_m(x,y,y_1, \dots ,y_m)\cdot\nabla \\
& = & {\xi}(x,y)\frac{\partial}{\partial x}+{\eta}(x,y)\frac{\partial}{\partial y}+{\eta^{(1)}}(x,y)\frac{\partial}{\partial y_1} + \dots +{\eta^{(m)}}(x,y,y_1, \dots ,y_m)\frac{\partial}{\partial y_m} 
\end{eqnarray*}
where $\nabla=(\frac{\partial}{\partial x}, \frac{\partial}{\partial y}, \frac{\partial}{\partial y_1} , \dots , \frac{\partial}{\partial y_m})$ is the extended gradient.
\end{definition}

\noindent The following corollaries are central to the actual computation of the admitted Lie groups of an ODE.

\begin{corollary}\label{thm: invar 3}
A differentiable function $F:D_m \rightarrow \mathbb{R}$ where $D_m$ is the phase space containing elements of the form $(x,y,y_1,\dots,y_m)$, is invariant under a one parameter Lie group of transformations with an extended infinitesimal generator $\mathrm{X}^{(m)}$ if and only if $\mathrm{X}^{(m)}F(x,y,y_1,\dots,y_m)=0$ for all $(x,y,y_1,\dots,y_m) \in D_m$.
\end{corollary}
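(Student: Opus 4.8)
The plan is to observe that Corollary \ref{thm: invar 3} is nothing more than Theorem \ref{thm: invar 1} applied on the extended phase space $D_m$, once we recognise that the prolonged transformation is itself a one-parameter Lie group of transformations whose infinitesimal generator is $\mathrm{X}^{(m)}$.

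First I would verify that the map sending $(x,y,y_1,\dots,y_m)$ to $(x^*,y^*,y_1^*,\dots,y_m^*)$, with components $X,Y,Y_1,\dots,Y_m$ built by the chain-rule prolongation of Section \ref{subsec: symm method ODE}, satisfies the axioms of a one-parameter Lie group of transformations on $D_m$. The bijectivity, identity, and smoothness/analyticity axioms (1), (3), (5)--(8) transfer routinely, since each $Y_k$ is a rational expression in the partial derivatives of $X$ and $Y$ and therefore inherits their regularity away from the (excluded) locus where the denominator vanishes.

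The substantive point is the group composition axiom (4): I must check that the prolongation of a composition equals the composition of prolongations, so that the extended maps obey the same additive group law $\phi(\epsilon,\delta)=\epsilon+\delta$ as the base transformation. This is where I expect the real work to lie. The cleanest way to discharge it is to note that each $y_k^*$ is, by construction, the intrinsic derivative $\frac{\mathrm{d}^k y^*}{\mathrm{d}(x^*)^k}$ of the transformed curve; because this quantity depends only on the transformed curve and not on the route taken to it, applying a further transformation with parameter $\delta$ and re-differentiating reproduces exactly the prolongation with parameter $\phi(\epsilon,\delta)$. Equivalently, one may invoke Theorem \ref{thm: FTL 1}: the prolonged infinitesimals $(\xi,\eta,\eta^{(1)},\dots,\eta^{(m)})$ define a vector field on $D_m$ whose flow is automatically a one-parameter group with additive composition law, and a uniqueness argument identifies this flow with the prolonged transformation.

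With the group structure in hand, I would identify its infinitesimal generator. By Definition \ref{extendedeta}, the first-order terms in $\epsilon$ of $X,Y,Y_1,\dots,Y_m$ are precisely $\xi,\eta,\eta^{(1)},\dots,\eta^{(m)}$, so the infinitesimal of the prolonged transformation is the vector $(\xi,\eta,\eta^{(1)},\dots,\eta^{(m)})$ and its generator, in the sense of the original definition applied on $D_m$, is exactly $\mathrm{X}^{(m)}$. The conclusion then follows by applying Theorem \ref{thm: invar 1} to this prolonged Lie group of transformations, with $\mathrm{X}^{(m)}$ in the role of $\mathrm{X}$: since $F:D_m\to\mathbb{R}$ is differentiable, the theorem yields that $F$ is invariant under the prolonged transformation if and only if $\mathrm{X}^{(m)}F=0$ throughout $D_m$, which is the claim.
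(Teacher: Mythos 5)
Your proposal is correct and takes exactly the route the paper intends: the paper states Corollary \ref{thm: invar 3} without a written proof, treating it as an immediate consequence of Theorem \ref{thm: invar 1} applied on the extended phase space $D_m$, with the prolonged maps $(X,Y,Y_1,\dots,Y_m)$ regarded as a one-parameter Lie group of transformations whose infinitesimal generator is $\mathrm{X}^{(m)}$ by Definition \ref{extendedeta}. Your verification of the group axioms for the prolongation (in particular that prolongation commutes with composition, since $y_k^*$ is the intrinsic derivative $\frac{\mathrm{d}^k y^*}{\mathrm{d}(x^*)^m}$ of the transformed curve) simply makes explicit the details the paper leaves implicit.
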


\begin{corollary}[Infinitesimal Criterion for Symmetries Admitted by an ODE] \label{thm: Invar criteria}
A one parameter Lie group of transformations is admitted by the $m$th order ODE $F(x,y,y_1,..,y_m)=0$ if and only if its extended infinitesimal generator $\mathrm{X}^{(m)}$ satisfies $\mathrm{X}^{(m)}F(x,y,y_1,\dots ,y_m)=0$ whenever $F(x,y,y_1,\dots,y_m)=0$.
\end{corollary}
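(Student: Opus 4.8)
The plan is to recognise the statement as the extended-phase-space analogue of the surface-invariance criterion already established in Section~\ref{subsec: invar under trans}, and to obtain it by a reduction rather than by a fresh differentiation argument. The first observation is that, by the definition of an admitted transformation, the ODE $F(x,y,y_1,\dots,y_m)=0$ admits the base transformation $(x^*,y^*)=(X(x,y;\epsilon),Y(x,y;\epsilon))$ precisely when the surface $\{F=0\}$ in the phase space $D_m$ is invariant under the \emph{prolonged} (extended) transformation $(x,y,y_1,\dots,y_m)\mapsto(x^*,y^*,y_1^*,\dots,y_m^*)$, where $y_k^*=Y_k(x,y,y_1,\dots,y_k;\epsilon)$ are the induced transformations of the derivatives. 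So it suffices to apply the surface-invariance corollary of Section~\ref{subsec: invar under trans} on the domain $D_m$, \emph{provided} we can check that the prolonged transformation is itself a one-parameter Lie group of transformations on $D_m$ whose infinitesimal generator is exactly $\mathrm{X}^{(m)}$.

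The central step, therefore, is to verify that the prolonged map is a one-parameter Lie group of transformations on $D_m$. Axioms (1)--(3) are inherited from the base group: the parameter set $S$ and composition law $\phi$ (hence axiom (2)) are unchanged, and at $\epsilon=0$ each $Y_k$ reduces to the identity on its $y_k$-coordinate. The crux is the composition axiom (4), i.e.\ that applying the $\delta$-prolongation after the $\epsilon$-prolongation yields the $\phi(\epsilon,\delta)$-prolongation on every derivative coordinate. This holds because the prolongation is defined intrinsically through total differentiation, with $y_k^*=\mathrm{d}y_{k-1}^*/\mathrm{d}x^*$ generated recursively via the chain rule, so the operation of prolonging commutes with composing the base transformations; the composition property for $(x^*,y^*)$ then propagates up the recursion to all $y_k^*$. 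The smoothness requirements (5)--(8) are inherited from those of $X$ and $Y$, the only point needing care being that each $Y_k$ is built by repeatedly dividing by $\partial X/\partial x + y_1\,\partial X/\partial y$; since this denominator equals $1$ at $\epsilon=0$, it is nonzero on a neighbourhood of the identity, so the $Y_k$ are infinitely differentiable in the phase variables and analytic in $\epsilon$ there.

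Having established that the prolongation is a Lie group of transformations on $D_m$, I would identify its infinitesimal generator. By Theorem~\ref{thm: FTL 1} the generator is obtained by taking the $\epsilon$-derivative at $\epsilon=0$ of each transformed coordinate and contracting with the corresponding partial derivative; but these $\epsilon$-derivatives are, by Definition~\ref{extendedeta}, exactly $\xi,\eta,\eta^{(1)},\dots,\eta^{(m)}$ (the coefficients of the $O(\epsilon)$ terms, given explicitly by \eqref{eq: extended infinitesimals}). Hence the generator of the prolonged group is precisely $\mathrm{X}^{(m)}=\xi\,\partial_x+\eta\,\partial_y+\eta^{(1)}\,\partial_{y_1}+\dots+\eta^{(m)}\,\partial_{y_m}$, matching the extended infinitesimal generator.

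Finally I would conclude by invoking the surface-invariance corollary with $D$ replaced by $D_m$ and $\mathrm{X}$ replaced by $\mathrm{X}^{(m)}$: the surface $\{F=0\}$ is invariant under the prolonged transformation if and only if $\mathrm{X}^{(m)}F=0$ whenever $F=0$. Combined with the first observation, that invariance of this surface is the definition of the ODE admitting the transformation, this is exactly the claimed equivalence. The main obstacle is the middle step --- proving the prolongation is a genuine Lie group of transformations, and in particular verifying composition axiom (4) on all derivative coordinates together with the nonvanishing of the denominator near the identity; once this structural fact is in hand, the corollary follows as a direct application of a result already proved.
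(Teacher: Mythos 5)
Your proposal is correct and follows essentially the route the paper intends: the paper states this as an unproved corollary, implicitly obtained by applying the surface-invariance criterion of Section~\ref{subsec: invar under trans} (via Cor.~\ref{thm: invar 3}) in the extended phase space $D_m$, with the prolonged transformation playing the role of the group and $\mathrm{X}^{(m)}$ its generator --- exactly your reduction. Your explicit verification that the prolongation is itself a (local, since the denominator $\partial X/\partial x + y_1\,\partial X/\partial y$ need only be nonvanishing near $\epsilon=0$) one-parameter Lie group of transformations with generator $\mathrm{X}^{(m)}$ simply fills in detail the paper leaves tacit, and is handled correctly.
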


\subsection{Multi-Parameter Lie Groups and Lie Algebras}
\label{subsec: multi-parameter}

To leverage the full power of Lie symmetry methods for ODEs of order $m \geq 2$, we need to consider multiple Lie symmetries which are collectively described by a \emph{Lie algebra}. 
Fortunately, the notion of a multi-parameter Lie group of transformations is a natural generalisation from the one parameter case.
Thus, this last section of background material concerns the generalisation of the definitions in Section \ref{subsec: lie transformations} to the case of a multi-parameter Lie group.
The associated Lie algebra will also be defined.

\begin{definition}[Multi-Parameter Lie Group of Transformations]
The set of transformations $x^* = X(x,\epsilon)$ where $x_i^* = X_i(x , \epsilon)$ and $\epsilon = (\epsilon_1,\epsilon_2, \dots ,\epsilon_r) \in S \subset {\R}^r$ is called a \emph{$r$-parameter Lie group of transformations} if it satisfies the same axioms as in the one parameter case, but with law of composition $\phi(\epsilon,\delta)=(\phi_1(\epsilon,\delta), \dots ,\phi_r(\epsilon,\delta))$, and (without loss of generality) $\epsilon=(0,0,...,0)$ as the group identity element.
\end{definition}

\begin{definition}[Infinitesimal Matrix]
The appropriate generalisation for the infinitesimal transformation is the infinitesimal matrix $\Xi = [\xi_{ij}]$, whose entries are defined as $\xi_{ij}(x) = \frac{\partial X_j(x,\epsilon)}{\partial \epsilon_i}\Bigr|_{\substack{\epsilon=0}}$.
\end{definition}

\begin{definition}[Infinitesimal Generator]
An $r$-parameter Lie group of transformations is associated with $r$ infinitesimal generators, $\mathrm{X_i}$, defined as $\mathrm{X}_i  = \mathrm{X}_i(x) = \sum_{j=1}^{d} \xi_{ij}(x)\frac{\partial}{\partial x_j}$.
\end{definition}

The first fundamental theorem of Lie can be generalised to the multi-parameter case. 
In particular, it can be shown that an $r$-parameter Lie group of transformations is characterised by the set of its $r$ infinitesimal generators.
The generalisation is straight-forward and so, for brevity, we refer the reader to pages 39-40 of \cite{Bluman2002}.

Next we explain how the collection of infinitesimal generators forms a Lie algebra.
This relies on the basic facts that the set $\mathcal{D}$ of differential operators on $D$ is a vector space over $\R$ (i.e. $\lambda \mathrm{X} + \mu \mathrm{Y} \in \mathcal{D}$ for all $\mathrm{X} , \mathrm{Y} \in \mathcal{D}$ and $\lambda , \mu \in \mathbb{R}$) and that differential operators can be composed (i.e. $\mathrm{XY} \in \mathcal{D}$ for all $\mathrm{X}, \mathrm{Y} \in \mathcal{D}$).

\begin{definition}[Commutator] \label{eq: commutator}
The \emph{commutator} of two infinitesimal generators $\mathrm{X}_i$ and $\mathrm{X}_j$ is defined as $[\mathrm{X}_i,\mathrm{X}_j] = \mathrm{X}_i\mathrm{X}_j-\mathrm{X}_j\mathrm{X}_i$. 
\end{definition}

\begin{theorem}[Second Fundamental Theorem of Lie; see page 78 of \cite{Bluman2002}] \label{thm: 2nd fund}
Consider an $r$-parameter Lie group of transformations and let $\mathcal{L}$ denote the linear span of the infinitesimal generators $\mathrm{X}_1,\dots,\mathrm{X}_r$ in $\mathcal{D}$.
Let $[ \cdot , \cdot] : \mathcal{L} \times \mathcal{L} \rightarrow \mathcal{D}$ denote the unique bilinear operator that agrees with Def.~\eqref{eq: commutator} on the set of infinitesimal generators.
i.e. 
\begin{eqnarray}
\left[ \sum_{i=1}^r \lambda_i \mathrm{X}_i , \sum_{j=1}^r \mu_j \mathrm{X}_j \right] & = & \sum_{i=1}^r \sum_{j=1}^r \lambda_i \mu_j (\mathrm{X}_i \mathrm{X}_j - \mathrm{X}_j \mathrm{X}_i) . \label{eq: linear extension}
\end{eqnarray}
Then $[ \cdot , \cdot ]$ maps into $\mathcal{L}$. i.e. the right hand side of Eq.~\eqref{eq: linear extension} belongs to $\mathcal{L}$ for all $\lambda , \mu \in \R^r$.
\end{theorem}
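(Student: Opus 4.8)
The plan is to reduce the statement, via the bilinearity built into Eq.~\eqref{eq: linear extension}, to the single claim that each basis bracket $[\mathrm{X}_i,\mathrm{X}_j]$ already lies in $\mathcal{L}$; since $\mathcal{L}$ is a linear subspace and the right-hand side of Eq.~\eqref{eq: linear extension} is a fixed linear combination of such brackets, closure on the generators immediately upgrades to closure on all of $\mathcal{L}\times\mathcal{L}$. Two preliminary observations are needed. First, the commutator $[\mathrm{X}_i,\mathrm{X}_j]=\mathrm{X}_i\mathrm{X}_j-\mathrm{X}_j\mathrm{X}_i$ is again a first-order operator, because the second-order parts $\sum_{k,l}\xi_{ik}\xi_{jl}\,\partial^2/\partial x_k\partial x_l$ and $\sum_{k,l}\xi_{jk}\xi_{il}\,\partial^2/\partial x_k\partial x_l$ coincide after relabelling $k\leftrightarrow l$ (using symmetry of $\partial^2/\partial x_k\partial x_l$) and so cancel upon antisymmetrisation; hence $[\mathrm{X}_i,\mathrm{X}_j]\in\mathcal{D}$, and the only real content is that it lands in the finite-dimensional span $\mathcal{L}$. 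Second, the generator of \emph{any} one-parameter subgroup lies in $\mathcal{L}$: such a subgroup has the form $t\mapsto X(\cdot,\gamma(t))$ for a smooth homomorphism $\gamma$ with $\gamma(0)=0$, and differentiating at $t=0$ and applying the chain rule gives generator $\sum_i\dot\gamma_i(0)\,\mathrm{X}_i\in\mathcal{L}$.

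With these in hand I would exploit the group structure through conjugation, which is where the theorem's content genuinely resides (the bracket of two generic vector fields need not lie in their span; it is closure of the group that forces it). By Theorem~\ref{thm: exp generator} each generator $\mathrm{X}_i$ exponentiates to the flow $e^{\epsilon\mathrm{X}_i}$, which is a one-parameter subgroup and hence consists of elements of the $r$-parameter group. For fixed $\epsilon$ consider the conjugated family $c(t)=e^{\epsilon\mathrm{X}_i}\,e^{t\mathrm{X}_j}\,e^{-\epsilon\mathrm{X}_i}$. Each factor is a group element, so by closure (axiom~(4)) $c(t)$ is a group element for every $t$, and the telescoping identity $c(t_1)c(t_2)=c(t_1+t_2)$ shows that $t\mapsto c(t)$ is itself a one-parameter subgroup. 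By the second preliminary observation its generator, namely $\mathrm{Ad}(e^{\epsilon\mathrm{X}_i})\mathrm{X}_j := e^{\epsilon\mathrm{X}_i}\,\mathrm{X}_j\,e^{-\epsilon\mathrm{X}_i}$, lies in $\mathcal{L}$ for every $\epsilon$.

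To finish I would differentiate this curve in $\epsilon$ at $\epsilon=0$. Because $\mathcal{L}$ is finite-dimensional it is a closed linear subspace of $\mathcal{D}$, so the $\epsilon$-derivative of a curve lying entirely in $\mathcal{L}$ again lies in $\mathcal{L}$. A direct computation from the power-series definition gives $\frac{\mathrm{d}}{\mathrm{d}\epsilon}\big|_{\epsilon=0} e^{\epsilon\mathrm{X}_i}\mathrm{X}_j e^{-\epsilon\mathrm{X}_i}=\mathrm{X}_i\mathrm{X}_j-\mathrm{X}_j\mathrm{X}_i=[\mathrm{X}_i,\mathrm{X}_j]$, so $[\mathrm{X}_i,\mathrm{X}_j]\in\mathcal{L}$ as required, and bilinearity then delivers the full statement. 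Differentiating the basis-expansion of $\mathrm{Ad}(e^{\epsilon\mathrm{X}_i})\mathrm{X}_j$ at $\epsilon=0$ simultaneously produces the structure constants $C_{ij}^k$ for which $[\mathrm{X}_i,\mathrm{X}_j]=\sum_k C_{ij}^k\mathrm{X}_k$.

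I anticipate the main obstacle to be the rigorous justification of the second preliminary observation together with its use on the conjugated family: one must argue carefully that $c(t)$ really is a \emph{smooth} one-parameter subgroup of the $r$-parameter group (so that its generator is a bona fide element of $\mathcal{L}$ rather than a merely formal power series), and that conjugation by a fixed group element preserves smoothness in $t$. An alternative, more computational route avoids conjugation and instead differentiates the multi-parameter first fundamental theorem relation $\partial X_j(x,\epsilon)/\partial\epsilon_m=\sum_k V_{mk}(\epsilon)\,\xi_{kj}(X(x,\epsilon))$ — obtained by differentiating the composition identity $X(X(x,\epsilon),\delta)=X(x,\phi(\epsilon,\delta))$ in $\delta$ at $\delta=0$ — and then imposes the compatibility condition $\partial^2 X/\partial\epsilon_m\partial\epsilon_n=\partial^2 X/\partial\epsilon_n\partial\epsilon_m$; the cancelling second-order terms reassemble into the commutator coefficients, and evaluating at $\epsilon=0$ (where $V(0)=I$) extracts $[\mathrm{X}_m,\mathrm{X}_n]=\sum_k C_{mn}^k\mathrm{X}_k$ with constant $C_{mn}^k$. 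The bookkeeping needed to reorganise the second-derivative terms into bracket form is the chief hurdle along that route.
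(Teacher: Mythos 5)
Your proposal is correct in outline, but note that the paper itself does not prove Theorem~\ref{thm: 2nd fund}: it is stated as background and cited to page~78 of \cite{Bluman2002}, whose argument is the computational one you sketch only as your fallback. Your primary route --- reducing by bilinearity to $[\mathrm{X}_i,\mathrm{X}_j]\in\mathcal{L}$, observing that the generator of any smooth one-parameter subgroup lies in $\mathcal{L}$, then conjugating the flow $e^{t\mathrm{X}_j}$ by $e^{\epsilon\mathrm{X}_i}$ and differentiating $\mathrm{Ad}(e^{\epsilon\mathrm{X}_i})\mathrm{X}_j$ at $\epsilon=0$ inside the closed finite-dimensional subspace $\mathcal{L}$ --- is the standard abstract Lie-theoretic proof, and each step you give (cancellation of second-order terms, the homomorphism property of $c(t)$, the computation $\frac{\mathrm{d}}{\mathrm{d}\epsilon}\big|_{\epsilon=0}e^{\epsilon\mathrm{X}_i}\mathrm{X}_j e^{-\epsilon\mathrm{X}_i}=[\mathrm{X}_i,\mathrm{X}_j]$) is sound. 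The obstacle you flag is the right one, and it is worth being precise about how it is discharged: within the paper's axiomatic framework, the fact that $e^{t\mathrm{X}_j}$ consists of genuine group elements (a smooth curve $\gamma_j(t)$ in parameter space with $\dot\gamma_j(0)=e_j$, not necessarily the coordinate line $te_j$) is exactly the multi-parameter first fundamental theorem, which the paper invokes without statement (pages 39--40 of \cite{Bluman2002}); one also needs that the coefficients $a_k(\epsilon)$ in $\mathrm{Ad}(e^{\epsilon\mathrm{X}_i})\mathrm{X}_j=\sum_k a_k(\epsilon)\mathrm{X}_k$ are differentiable, which follows by extracting them with finitely many evaluation functionals (linear independence of the $\mathrm{X}_k$) and using analyticity of the group action. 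By contrast, the route taken in the cited source differentiates $X(X(x,\epsilon),\delta)=X(x,\phi(\epsilon,\delta))$ to get the Lie equations $\partial X/\partial\epsilon_m=\sum_k V_{mk}(\epsilon)\,\xi_k(X(x,\epsilon))$ and imposes equality of mixed partials; it is self-contained calculus on the defining axioms, needs no identification of flows with group elements, and delivers the structure constants $C_{mn}^k$ (and their constancy in $\epsilon$) explicitly, at the cost of heavier bookkeeping. Your conjugation argument buys conceptual clarity --- it isolates group closure as the mechanism forcing the bracket back into $\mathcal{L}$ --- while the integrability argument buys rigour on cheaper hypotheses; either, written out carefully, proves the theorem.
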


\begin{example} 
Consider the two parameter group of transformations on $D = \mathbb{R}^2$ given by $(x^*,y^*)=(x+x\epsilon+x^2\delta, y+y\epsilon+y^2\delta)$.
The infinitesimal generators corresponding to $\delta$ and $\epsilon$, respectively, are $\mathrm{X}_1 = x^2\frac{\partial}{\partial x}+y^2\frac{\partial}{\partial y}$, $\mathrm{X}_2 = x\frac{\partial}{\partial x}+y\frac{\partial}{\partial y}$.
It can be directly verified that $[\mathrm{X}_1,\mathrm{X}_2]=-\mathrm{X}_1$.
\end{example}

The space $\mathcal{L}$, defined in Thm. \ref{thm: 2nd fund}, satisfies the properties of an $r$-dimensional Lie algebra $\mathcal{L}$, defined next:

\begin{definition}[Lie Algebra]
An $r$-dimensional vector space $\mathcal{L}$ over $\R$ together with a bilinear operator $[ \cdot , \cdot ] : \mathcal{L} \times \mathcal{L} \rightarrow \mathcal{L}$ is called an $r$-dimensional \emph{Lie algebra} if the following hold:
\begin{enumerate}
\item[(1)] Alternativity: $[\mathrm{X},\mathrm{X}]=0$ for all $\mathrm{X} \in \mathcal{L}$
\item[(2)] Jacobi Identity: $[\mathrm{X},[\mathrm{Y},\mathrm{Z}]]+[\mathrm{Y},[\mathrm{Z},\mathrm{X}]]+[\mathrm{Z},[\mathrm{X},\mathrm{Y}]]=0$ for all $\mathrm{X},\mathrm{Y},\mathrm{Z} \in \mathcal{L}$
\end{enumerate}
\end{definition}

In general, for the methods presented in Section \ref{sec: methods} to be applied, existence of an $n$-parameter Lie group of transformations is not in itself sufficient; we require the existence of an $n$-dimensional \emph{solvable} Lie sub-algebra, defined next:

\begin{definition}[Normal Lie Sub-algebra]
Consider a Lie sub-algebra $\mathcal{J}$ of a Lie algebra $\mathcal{L}$ with bilinear operator $[\cdot , \cdot]$, i.e. a subset $\mathcal{J} \subset \mathcal{L}$ such that, when equipped with the restriction of $[\cdot , \cdot]$ to $\mathcal{J} \times \mathcal{J}$, is itself a Lie algebra and, in particular, $[\mathrm{X},\mathrm{Y}] \in \mathcal{J}$ for all $\mathrm{X} , \mathrm{Y} \in \mathcal{J}$.
Then $\mathcal{J}$ is said to be \emph{normal} if, in addition, $[\mathrm{X},\mathrm{Y}] \in \mathcal{J}$ for all $\mathrm{X} \in \mathcal{J}, \mathrm{Y} \in \mathcal{L}$.
\end{definition}

\begin{definition}[Solvable Lie Algebra]
An $r$-dimensional Lie algebra $\mathcal{L}$ is called \emph{solvable} if there exists a chain of sub-algebras $\mathcal{L}^1 \subset \mathcal{L}^2 \subset...\subset \mathcal{L}^{q-1} \subset \mathcal{L}^r =: \mathcal{L}$ such that $\mathcal{L}^{i-1}$ is a normal sub-algebra of $\mathcal{L}^{i}$ for $i=2,3,...,r$.
\end{definition}


For low-order ODEs, the existence requirement for an admitted Lie group of transformations is more restrictive than the requirement that the associated Lie algebra is solveable. 
Indeed, we have the following result:

\begin{theorem} \label{thm: 2dsolvable}
All two-dimensional Lie Algebras are solvable.
\end{theorem}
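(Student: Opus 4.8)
The plan is to reduce everything to a single structural observation: a two-dimensional Lie algebra has its entire bracket determined by one element, and consequently its derived algebra is at most one-dimensional. Concretely, I would fix a basis $\{\mathrm{X}_1,\mathrm{X}_2\}$ of $\mathcal{L}$. Alternativity gives $[\mathrm{X}_1,\mathrm{X}_1]=[\mathrm{X}_2,\mathrm{X}_2]=0$, and antisymmetry (which follows from alternativity and bilinearity) gives $[\mathrm{X}_2,\mathrm{X}_1]=-[\mathrm{X}_1,\mathrm{X}_2]$. Hence, writing $\mathrm{Z}:=[\mathrm{X}_1,\mathrm{X}_2]$, every bracket $[\,\cdot\,,\cdot\,]$ on $\mathcal{L}$ is, by bilinearity, a scalar multiple of $\mathrm{Z}$. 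A useful preliminary remark to record is that \emph{any} one-dimensional subspace $\mathrm{span}(\mathrm{W})$ is automatically a Lie sub-algebra, since $[\mathrm{W},\mathrm{W}]=0\in\mathrm{span}(\mathrm{W})$; so the only nontrivial requirement in producing a solvability chain will be \emph{normality}.

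Next I would split into two cases according to whether $\mathrm{Z}=0$. If $\mathrm{Z}=0$ then $\mathcal{L}$ is abelian, every bracket vanishes, and I may take $\mathcal{L}^1$ to be any one-dimensional subspace: the condition $[\mathrm{X},\mathrm{Y}]=0\in\mathcal{L}^1$ for all $\mathrm{X}\in\mathcal{L}^1$, $\mathrm{Y}\in\mathcal{L}$ is immediate, so $\mathcal{L}^1$ is normal in $\mathcal{L}^2:=\mathcal{L}$. If instead $\mathrm{Z}\neq 0$, I would set $\mathcal{L}^1:=\mathrm{span}(\mathrm{Z})$, which coincides with the derived algebra $[\mathcal{L},\mathcal{L}]$ by the observation above and is genuinely one-dimensional. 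This is the case that carries the content.

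To complete the argument I would verify normality of $\mathcal{L}^1=\mathrm{span}(\mathrm{Z})$ directly. For any $\mathrm{W}=\lambda\mathrm{Z}\in\mathcal{L}^1$ and any $\mathrm{Y}\in\mathcal{L}$, bilinearity expresses $[\mathrm{W},\mathrm{Y}]$ as a linear combination of the basis brackets, each of which is a multiple of $\mathrm{Z}$; hence $[\mathrm{W},\mathrm{Y}]\in\mathrm{span}(\mathrm{Z})=\mathcal{L}^1$. (Equivalently, one invokes the standard fact that the derived algebra is always an ideal.) In either case the chain $\mathcal{L}^1\subset\mathcal{L}^2=\mathcal{L}$ with $\mathcal{L}^1$ normal in $\mathcal{L}^2$ exhibits $\mathcal{L}$ as solvable in the sense of the definition, and the theorem follows.

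I do not expect a serious obstacle here: the result is essentially a consequence of the dimension count that forces $[\mathcal{L},\mathcal{L}]$ to be spanned by the single element $\mathrm{Z}$. The only point requiring a little care is the clean treatment of the degenerate abelian case (where the derived algebra is $\{0\}$ rather than one-dimensional, so one must supply a one-dimensional $\mathcal{L}^1$ by hand) and the explicit check that a one-dimensional subspace always qualifies as a sub-algebra, so that the solvability chain is legitimate rather than merely a chain of subspaces.
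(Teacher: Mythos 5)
Your proof is correct and takes essentially the same route as the paper: both identify the span of $[\mathrm{X}_1,\mathrm{X}_2]$ as a one-dimensional normal sub-algebra, the paper via the explicit computation $[\mathrm{Y},\mathrm{Z}]=(ad-bc)\mathrm{Y}$ for $\mathrm{Y}=[\mathrm{X}_1,\mathrm{X}_2]=a\mathrm{X}_1+b\mathrm{X}_2$, $\mathrm{Z}=c\mathrm{X}_1+d\mathrm{X}_2$, and you via the observation that bilinearity and alternativity force every bracket to be a multiple of $\mathrm{Z}:=[\mathrm{X}_1,\mathrm{X}_2]$. Your explicit handling of the degenerate abelian case $\mathrm{Z}=0$ is a small point of extra care that the paper's proof silently glosses over (there the sub-algebra generated by $\mathrm{Y}=0$ would be zero-dimensional, so one must, as you do, substitute an arbitrary one-dimensional subspace).
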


This completes our review of background material.
The exact Bayesian PNM developed in Section \ref{sec: methods} for an $n$th order ODE require the existence of an admitted $n$-parameter Lie group of transformations with a solvable Lie algebra.
In practice we therefore require some high-level information on the gradient field $f$, in order to establish which transformations of the ODE may be admitted.
In addition, the requirement of a solvable Lie algebra also limits the class of ODEs for which our exact Bayesian methods can be employed.
Nevertheless, this class of ODEs is sufficiently broad to have merited extensive theoretical research \citep{Bluman2002} and the development of software \citep{Baumann2013}.

\section{Methods} \label{sec: methods}

In this section our novel Bayesian PNM is presented.
The method relies on high-level information about the gradient field $f$ and, in Section \ref{subsec: ode to symmetries}, we discuss how such information can be exploited to identify any Lie transformations that are admitted by the ODE.
In the case of a first order ODE, any non-trivial  transformation is sufficient for our method and an explicit information operator is provided for this case, together with recommendations for prior construction, in Section \ref{subsec: prior 1st order}.
Together, the prior and the information operator uniquely determine a Bayesian PNM, as explained in Section \ref{subsec: PNM}.
In the general case of an $m$th order ODE, we require that an $m$-dimensional solvable Lie algebra is admitted by the ODE.
The special case $m = 2$ is treated in detail, with an explicit information operator and guidance for prior construction provided in Section \ref{subsec: prior 2nd order}.
In the Supplemental Section \ref{subsec: training set} the selection of input pairs $(x_i,y_i)$ to the gradient field is discussed.

\subsection{From an ODE to its Admitted Transformations}
\label{subsec: ode to symmetries}

For the methods proposed in this paper, transformations admitted by the ODE, together with their infinitesimal generators, must first be obtained.
The algorithm for obtaining infinitesimal generators follows as a consequence of Cor. \ref{thm: Invar criteria}. 
Indeed, suppose we have a $m$th order ODE of the form $y_m-f(x,y,y_1,\dots,y_{m-1})=0$.
Then, by Cor. \ref{thm: Invar criteria}, a transformation with infinitesimal generator $\mathrm{X}$ is admitted by the ODE if and only if:
\begin{equation}\label{Invar equation}
\mathrm{X}^{(m)}(y_m-f(x,y,y_1,\dots,y_{m-1}))=0 \quad \mathrm{whenever} \quad y_m=f(x,y,y_1,\dots,y_{m-1}) .
\end{equation}
In infinitesimal notation, Eq.~\eqref{Invar equation} is equivalent to 
\begin{equation}\label{Invar equation2}
\eta^{(m)}(x,y,y_1,\dots,y_{m-1},y_m)=\xi\frac{\partial f}{\partial x}+\eta\frac{\partial f}{\partial y}+\sum_{k=1}^{m-1}\eta^{(k)}\frac{\partial f}{\partial y_k} .
\end{equation}
The direct solution of Eq.~\eqref{Invar equation2} recovers any transformations that are admitted.

In the common scenario where $f(x,y,y_1,\dots,y_{m-1})$ is a polynomial in $y_1,y_2,\dots,y_{m-1}$, the algorithm just described, for identification of admitted transformations, can be fully automated \citep[c.f.][]{Baumann2013}.
Indeed, from Def. \ref{extendedeta} it follows that the extended infinitesimals $\eta^{(k)}$ for $k \in 1,2,3,\dots,m$ are polynomial in $y_1,y_2,\dots,y_{k}$. 
Thus, by substituting $y_m=f(x,y,y_1,\dots,y_{m-1})$, Eq.~\eqref{Invar equation} too must be a polynomial in $y_1,y_2,\dots,y_{m-1}$.
Moreover, the coefficients of this polynomial must vanish because \eqref{Invar equation} holds for arbitrary values of $x,y,y_1,\dots,y_{m-1}$. 
This argument, of setting coefficients to zero, leads to a system of linear partial differential equations (overdetermined when $m \geq 2$) for $\xi(x,y)$ and $\eta(x,y)$, which can be exactly solved to retrieve all the infinitesimal generators of the ODE.
The same strategy can often be applied beyond the polynomial case and explicit worked examples of this procedure are now provided:

\begin{example}[First Order ODE]
\label{1storderexample initial}
Consider the class of all first order ODEs of the form $\frac{\mathrm{d}\mathrm{y}}{\mathrm{d}x} = f(x,\mathrm{y}(x))$, $f(x,y) = F\left( \frac{y}{x} \right)$.
From Eq.~\eqref{eq: extended infinitesimals}, we have $\eta^{(1)} = \eta_x+(\eta_y-\xi_x)y_1-\xi_y(y_1)^2$ so Eq.~\eqref{Invar equation2} becomes $\eta_x+(\eta_y-\xi_x)f-\xi_y(f)^2 = \xi\frac{\partial f}{\partial x}+\eta\frac{\partial f}{\partial y}$ and thus $\eta_x+(\eta_y-\xi_x)F\left(\frac{y}{x}\right)-\xi_yF\left(\frac{y}{x}\right)^2 = -\xi F'\left(\frac{y}{x}\right)\frac{y}{x^2}+\eta F'\left(\frac{y}{x}\right)\frac{1}{x}$.
For this equation to hold for general $F$, the coefficients of $F$, $F^2$ and $F'$ must vanish: $\eta_x = 0$, $\eta_y-\xi_x = 0$, $\xi_y = 0$, $-\xi\frac{y}{x^2}+\eta\frac{1}{x} = 0$.
This is now a linear system of partial differential equations in $(\xi,\eta)$ which is easily solved; namely $\xi=x, \eta=y$.
The associated infinitesimal generator is $\mathrm{X}=x\frac{\partial}{\partial x}+y\frac{\partial}{\partial y}$. 
\end{example}

\begin{example}[Second Order ODE]\label{2ndorderexample0}
The infinitesimal generators for the second order ODE
\begin{equation} \label{eq:2ndorderexample}
(x-\mathrm{y}(x))\frac{\mathrm{d}^2\mathrm{y}}{\mathrm{d}x^2}+2\frac{\mathrm{d}\mathrm{y}}{\mathrm{d}x} \left( \frac{\mathrm{d}\mathrm{y}}{\mathrm{d}x}+1 \right) +\left(\frac{\mathrm{d}\mathrm{y}}{\mathrm{d}x} \right)^{3/2} = 0
\end{equation}
are derived in Supplementary Section \ref{proof section}.
\end{example}

\subsection{The Case of a First Order ODE}
\label{subsec: prior 1st order}

In this section we present our approach for a first order ODE.
This allows some of the more technical details associated to the general case to be omitted, due to the fact that any one-dimensional Lie algebra is trivial.
The main result that will allow us to construct an exact Bayesian PNM is as follows:

\begin{theorem}[Reduction of a First Order ODE to an Integral]\label{1stordertoquad}
If a first order ODE 
\begin{eqnarray}\label{1storder}
\frac{\mathrm{d}\mathrm{y}}{\mathrm{d}x} & = & f(x,\mathrm{y}(x))
\end{eqnarray}
admits a one parameter Lie group of transformations, then there exists coordinates $r(x,y)$, $s(x,y)$ such that 
\begin{eqnarray}
\frac{\mathrm{d}\mathrm{s}}{\mathrm{d}r} & = & G(r) \label{eq: G integral}
\end{eqnarray}
for some explicit function $G(r)$.
\end{theorem}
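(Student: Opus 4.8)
The plan is to pass to canonical coordinates, in which the admitted symmetry becomes a pure translation, and then to show that invariance of the solution surface forces the transformed gradient field to be independent of the translated coordinate, whence the ODE reduces to a quadrature.

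First I would introduce the infinitesimal generator $\mathrm{X} = \xi(x,y)\frac{\partial}{\partial x}+\eta(x,y)\frac{\partial}{\partial y}$ of the admitted (and, without loss of generality, non-trivial) one-parameter Lie group. Because $\mathrm{X}$ does not vanish identically, canonical coordinates $r(x,y)$, $s(x,y)$ exist by Thm. 2.3.5-2 of \cite{Bluman2002}: in these coordinates the group acts by $r^*=r$, $s^*=s+\epsilon$. By Thms. \ref{thm: invar 1} and \ref{thm: invar 2} this is equivalent to $\mathrm{X}r=0$ and $\mathrm{X}s=1$, so that the generator reads simply $\mathrm{X}=\frac{\partial}{\partial s}$ in the $(r,s)$ system. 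I would then rewrite Eq.~\eqref{1storder} in the new coordinates via the chain rule: along a solution one has $\frac{\mathrm{d}s}{\mathrm{d}x}=s_x+s_yf$ and $\frac{\mathrm{d}r}{\mathrm{d}x}=r_x+r_yf$, so dividing gives
\begin{eqnarray*}
\frac{\mathrm{d}s}{\mathrm{d}r} & = & \frac{s_x+s_yf}{r_x+r_yf} \; =: \; H(r,s),
\end{eqnarray*}
where the denominator is non-zero precisely because $(r,s)$ is an admissible coordinate system. This is a first-order ODE $s_1-H(r,s)=0$ in the phase variables $(r,s,s_1)$ with $s_1:=\frac{\mathrm{d}s}{\mathrm{d}r}$.

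The crux is to show $H$ is free of $s$. Since admitting a transformation is a geometric property of the solution surface, it is preserved by the change of coordinates, so the transformed ODE still admits $r^*=r$, $s^*=s+\epsilon$. Its first extension, computed from Eq.~\eqref{eq: extended infinitesimals} with the constant infinitesimals $0$ and $1$, has $\eta^{(1)}=0$, so the extended generator is again $\mathrm{X}^{(1)}=\frac{\partial}{\partial s}$. The infinitesimal criterion of Cor. \ref{thm: Invar criteria} then demands $\mathrm{X}^{(1)}(s_1-H(r,s))=-\frac{\partial H}{\partial s}=0$ whenever $s_1=H(r,s)$; as $\frac{\partial H}{\partial s}$ is independent of $s_1$ this holds identically, so $H(r,s)=G(r)$ and Eq.~\eqref{eq: G integral} follows with $G$ obtained by re-expressing $H$ in $r$ alone.

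I expect the principal difficulty to be administrative rather than conceptual, since the existence of canonical coordinates is delegated to \cite{Bluman2002}: the genuine work lies in verifying that invariance of the solution surface transfers cleanly to the $(r,s)$ coordinates and that the coordinate map is a local diffeomorphism wherever $\mathrm{X}\neq 0$ (which guarantees $r_x+r_yf\neq 0$). A coordinate-free alternative would argue directly that the translation $s\mapsto s+\epsilon$ carries each solution curve $s=\phi(r)$ to $s=\phi(r)+\epsilon$, both satisfying the ODE, forcing $H(r,s)=H(r,s+\epsilon)$ for all $\epsilon$ and hence independence of $s$; this bypasses the prolongation computation at the cost of invoking existence of a solution through each point of the domain.
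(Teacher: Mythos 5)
Your proposal is correct, and its skeleton coincides with the paper's proof: obtain canonical coordinates by solving $\mathrm{X}r=0$, $\mathrm{X}s=1$, push the ODE through the chain rule to get $\frac{\mathrm{d}\mathrm{s}}{\mathrm{d}r}=\frac{s_x+s_y f}{r_x+r_y f}=:H(r,s)$, and then use admittance of the now-translational symmetry to kill the $s$-dependence. Where you differ is in the final step. The paper argues with the \emph{finite} transformation: since in canonical coordinates the group acts by $r^*=r$, $s^*=s+\epsilon$ and the transformed ODE admits it, invariance of the surface gives $\frac{\mathrm{d}\mathrm{s}}{\mathrm{d}r}=H(r,s+\epsilon)$ for all $\epsilon$, hence $H(r,s)=G(r)$; this is exactly the ``coordinate-free alternative'' you sketch at the end, and note that, phrased as invariance of the surface $s_1=H(r,s)$ rather than of solution curves, it does not require existence of a solution through each point, which was the cost you worried about. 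Your primary route instead verifies $\eta^{(1)}=0$ from Eq.~\eqref{eq: extended infinitesimals} with constant infinitesimals $(0,1)$, so that $\mathrm{X}^{(1)}=\frac{\partial}{\partial s}$, and applies the infinitesimal criterion of Cor.~\ref{thm: Invar criteria} to get $\frac{\partial H}{\partial s}=0$ on, and hence off, the surface (since $\frac{\partial H}{\partial s}$ is free of $s_1$). This is the infinitesimal form of the same invariance statement, equivalent to the paper's via the first fundamental theorem; it buys a purely local, mechanical computation at the price of invoking the prolongation formula, whereas the paper's finite argument is more elementary. One small caveat applies to both proofs equally: the denominator $r_x+r_y f$ can vanish where $r$ is stationary along a solution (nondegeneracy of the Jacobian of $(r,s)$ does not preclude this), so the reduction is local; the paper does not address this either, so it is not a gap in your argument relative to the paper's.
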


Note that the transformed ODE in Eq.~\eqref{eq: G integral} is nothing more than an integral, for which exact Bayesian PNM have already been developed \citep[e.g.][]{Briol2016,Karvonen2018}.
At a high level, as indicated in Fig. \ref{fig: schematic}, our proposed Bayesian PNM performs inference for the solution $\mathrm{s}(r)$ of Eq.~\eqref{eq: G integral} and then transforms the resultant posterior back into the original $(x,y)$-coordinate system.
Our PNM is therefore based on the information operator
\begin{eqnarray}
A(\mathrm{y}) & = & \left[ G(r_0) , \dots , G(r_n) \right]^\top \in \mathcal{A} = \mathbb{R}^{n+1} \label{eq: info operator 1st order}
\end{eqnarray}
which corresponds indirectly to $n+1$ evaluations of the original gradient field $f$ at certain input pairs $(x_i,y_i)$.
The selection of the inputs $r_i$ is discussed in Section \ref{subsec: training set}.

The transformation of a first order ODE is clearly illustrated in the following:
\begin{example}[Ex. \ref{1storderexample initial}, continued]\label{1storderexample}
Consider the first order ODE $\frac{\mathrm{d}\mathrm{y}}{\mathrm{d}x} = f(x,\mathrm{y}(x))$, $f(x,y) = F\left( \frac{y}{x} \right)$.
Recall from Ex. \ref{1storderexample initial} that this ODE admits the one parameter Lie group of transformations $x^* = \alpha x$, $y^* = \alpha y$ for $\alpha \in \R$ and the associated infinitesimal generator is $\mathrm{X}=x\frac{\partial}{\partial x}+y\frac{\partial}{\partial y}$. 
Solving the pair of partial differential equations $\mathrm{X}r=0, \mathrm{X}s=1$ yields the canonical coordinates $s = \log y$, $r = \frac{y}{x}$.
The transformed ODE is then $\frac{\mathrm{d}\mathrm{s}}{\mathrm{d}r} = \frac{F(r)}{-r^2+rF(r)} =: G(r)$.
Thus an evaluation $G(r)$ corresponds to an evaluation of $f(x,y)$ at an input $(x,y)$ such that $r = \frac{y}{x}$.
\end{example}

Two important points must now be addressed:
First, the approach just described cannot be Bayesian unless it corresponds to a well-defined prior distribution $\mu \in \mathcal{P}_{\mathcal{Y}}$ in the original coordinate system $\mathcal{Y}$.
This precludes standard (e.g. Gaussian process) priors in general, as such priors assign mass to functions in $(r,s)$-space that do not correspond to well-defined functions in $(x,y)$-space (see Fig. \ref{fig: not well defined}).
Second, any prior that is used ought to be consistent with the Lie group of transformations that the ODE is known to admit.
To address each of these important points, we propose two general principles for prior construction in this work.
The first principle is the \emph{implicit prior} principle.
This ensures that a prior specified in the transformed coordinates $(r,s)$ can be safely transformed into a well-defined distribution on $\mathcal{Y}$.
For such an implicit prior to be well-defined we need to understand when a function in $(r,s)$ space maps to a well-defined function in the original $(x,y)$ domain of interest.
Let $\mathcal{S}$ denote the image of $\mathcal{Y}$ under the canonical coordinate transformation.

\begin{figure}[t!]
\centering
\resizebox{.7\textwidth}{!}{
\begin{tikzpicture}
\node[anchor=south west,inner sep=0] (image) at (0,0) {
\includegraphics[width = 0.6\textwidth,clip,trim = 3.6cm 4cm 4cm 4cm,page=2]{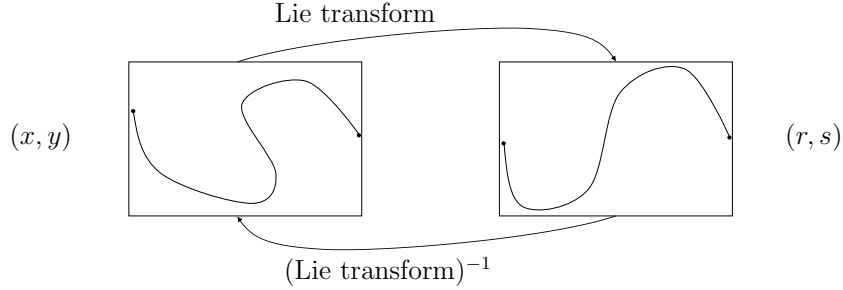}};
\node (A) at (-1.3,2.1) {$(x,y)$};
\node (B) at (10.5,2.1) {$(r,s)$};
\node (C) at (3.5,4) {Lie transform};
\node (D) at (4,0.05) {(Lie transform)$^{-1}$};
\end{tikzpicture} }
\caption{Illustration of the implicit prior principle: A prior elicited for the function $\mathrm{s}(r)$ in the transformed coordinate system $(r,s)$ must be supported on functions $\mathrm{s}(r)$ that correspond to well-defined functions $\mathrm{y}(x)$ in the original coordinate system $(x,y)$. Thus the situation depicted would not be allowed.}
\label{fig: not well defined}
\end{figure}

\begin{principle}[Implicit Prior]\label{principle: implicitprior}
A distribution $\nu \in \mathcal{P}_{\mathcal{S}}$ on the transformed solution space $\mathcal{S}$ corresponds to a well-defined \emph{implicit prior} $\mu \in \mathcal{P}_{\mathcal{Y}}$ provided that $x(r,\mathrm{s}(r))$ is strictly monotone as a function of $r$.
\end{principle}

\begin{example}[Ex. \ref{1storderexample}, continued] \label{1storderexample again 2}
For the ODE in Ex. \ref{1storderexample}, with canonical coordinates $s = \log y$, $r = \frac{y}{x}$, if $x \in [x_0,x_T] = [1,x_T]$ and $y \in (0,\infty)$, then the region in the $(r,s)$ plane corresponding to $[1,x_T] \times (0,\infty)$ in the $(x,y)$ plane is $(0,\infty) \times \R$.
Now, 
\begin{eqnarray*}
\frac{\mathrm{d}x(r,\mathrm{s}(r))}{\mathrm{d}r} \; = \; \frac{\partial{x}}{\partial r}+\frac{\partial{x}}{\partial s}\frac{\mathrm{d} \mathrm{s}(r)}{\mathrm{d}r} \; = \; \frac{r \mathrm{s}'(r) \exp(\mathrm{s}(r))-\exp(\mathrm{s}(r))}{r^2} .
\end{eqnarray*}
Thus $\frac{\mathrm{d}x}{\mathrm{d}r} > 0$ if and only if $\mathrm{s}'(r) > \frac{1}{r}$ and the invariant prior principle requires that we respect the constraint $\log(r) \le \mathrm{s}(r) \le \log(r)+\log(x_T)$ for all $r>0$.
The set $\mathcal{S}$ must therefore consists of differentiable functions $\mathrm{s}$ defined on $r \in (0,\infty)$ and satisfying $\log (r) \leq \mathrm{s}(r) \leq \log(r) + \log(x_T)$.
\end{example}

Now we turn to the second important point, namely that the prior ought to encode knowledge about any Lie transformations that are known to be admitted by the ODE.
In working on the transformed space $\mathcal{S}$, it become clear how to construct a prior measure in which this knowledge is encoded.
Our second principle for prior specification states that equal prior weight should be afforded to all curves that are identical up to a Lie transformation:

\begin{principle}[Invariant Prior]
A distribution $\nu \in \mathcal{P}_{\mathcal{S}}$ on the transformed solution space $\mathcal{S}$ is said to be \emph{invariant} provided that $\nu(S) = \nu(S + \epsilon)$ where the elements of $S + \epsilon$ are the elements of $S$ after a vertical translation; i.e. $\mathrm{s}(\cdot) \mapsto \mathrm{s}(\cdot) + \epsilon$ and both $S, S + \epsilon \in \Sigma_{\mathcal{S}}$.
\end{principle}

Our recommendation is that, when possible, both the implicit prior principle and the invariant prior principle should be enforced.
However, in practice it seems difficult to satisfy both principles and our empirical results in Section \ref{sec: experiment} are based on implicit priors that are not invariant.

\subsection{The Case of a Second Order ODE}
\label{subsec: prior 2nd order}

In this section we present our approach for a second order ODE.
The study of second order ODEs is particularly important, since Newtonian mechanics is based on ODEs of second order.
The presentation is again simplified relative to the general case of an $m$th order ODE, this time by virtue of the fact that any two dimensional Lie algebra is guaranteed to be solveable (Thm. \ref{thm: 2dsolvable}).
The main result that will allow us to construct an exact Bayesian PNM is as follows:

\begin{theorem}[Reduction of a Second Order ODE to Two Integrals]\label{2ndordertoquad}
If a second order ODE 
\begin{eqnarray}\label{2ndorder}
\frac{\mathrm{d}^2\mathrm{y}}{\mathrm{d}x^2} & = & f\left(x, \mathrm{y}(x) , \frac{\mathrm{d}\mathrm{y}}{\mathrm{d}x} \right)
\end{eqnarray}
admits a two parameter Lie group of transformations, then there exists coordinates $r(x,y)$, $s(x,y)$ such that 
\begin{eqnarray}
\frac{\mathrm{d}\mathrm{s}}{\mathrm{d}r} & = & G(r) \label{eq: thm conc 2nd order}
\end{eqnarray}
for some implicitly defined function $G$.
The function $G$ is explicitly related to the solution of a second equation of the form
\begin{eqnarray}
\frac{\mathrm{d}\tilde{\mathrm{s}}}{\mathrm{d}\tilde{r}} & = & H(\tilde{r})  \label{eq: thm conc 2nd order 2}
\end{eqnarray}
for some explicit function $H(\tilde{r})$.
\end{theorem}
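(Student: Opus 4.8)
The plan is to use solvability of the admitted two-dimensional Lie algebra to peel off one symmetry, reducing Eq.~\eqref{2ndorder} to a first-order ODE that \emph{still} carries a symmetry, and then to invoke Theorem~\ref{1stordertoquad} on that reduced equation to obtain the explicit integral Eq.~\eqref{eq: thm conc 2nd order 2}; back-substitution then yields the implicitly-defined $G$ and the second integral Eq.~\eqref{eq: thm conc 2nd order}.

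First I would record the algebraic structure. By the Second Fundamental Theorem of Lie (Thm.~\ref{thm: 2nd fund}) the two admitted generators $\mathrm{X}_1,\mathrm{X}_2$ span a two-dimensional Lie algebra $\mathcal{L}$, which is solvable by Thm.~\ref{thm: 2dsolvable}. Solvability furnishes a one-dimensional normal sub-algebra, and after relabelling I may take it to be $\mathcal{L}^1 = \mathrm{span}(\mathrm{X}_1)$, so that $[\mathrm{X}_1,\mathrm{X}_2] = c\,\mathrm{X}_1$ for some constant $c$. Next I would straighten $\mathrm{X}_1$: by the existence of canonical coordinates (Thm.~2.3.5-2 of \cite{Bluman2002}) I can choose $(r,s)$ with $\mathrm{X}_1 r = 0$ and $\mathrm{X}_1 s = 1$, so $\mathrm{X}_1 = \partial/\partial s$ and its second prolongation is again $\partial/\partial s$. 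By the infinitesimal criterion (Cor.~\ref{thm: Invar criteria}), invariance of the ODE under $\mathrm{X}_1$ forces its $(r,s)$-expression to be free of $s$; hence, writing $u = \mathrm{d}s/\mathrm{d}r$, the second-order equation collapses to a first-order equation $\mathrm{d}u/\mathrm{d}r = \Psi(r,u)$.

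The crux is to show this reduced equation inherits a symmetry from $\mathrm{X}_2$, and this is exactly where normality is used. Writing $\mathrm{X}_2 = \xi_2\,\partial_r + \eta_2\,\partial_s$ in the canonical coordinates, the relation $[\mathrm{X}_1,\mathrm{X}_2] = c\,\mathrm{X}_1$ becomes $\partial_s\xi_2 = 0$ and $\partial_s\eta_2 = c$, so $\xi_2 = \xi_2(r)$ and $\eta_2 = c\,s + \beta(r)$. Prolonging to the coordinate $u = s_r$ gives a $\partial_u$-coefficient $c\,u + \beta'(r) - u\,\xi_2'(r)$, which is manifestly independent of $s$. Consequently $\mathrm{X}_2$ projects to a well-defined generator $\hat{\mathrm{X}} = \xi_2(r)\,\partial_r + \big(c\,u + \beta'(r) - u\,\xi_2'(r)\big)\partial_u$ on the $(r,u)$-plane, and I would verify via Cor.~\ref{thm: Invar criteria} that $\hat{\mathrm{X}}$ is admitted by $\mathrm{d}u/\mathrm{d}r = \Psi(r,u)$. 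I expect this step --- confirming both the $s$-independence of the projected field and that it is a genuine symmetry of the reduced equation --- to be the main obstacle, since it is where the abstract normality of $\mathcal{L}^1$ must be converted into a concrete statement about the reduced dynamics.

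Finally, since the first-order equation $\mathrm{d}u/\mathrm{d}r = \Psi(r,u)$ admits the one-parameter Lie group generated by $\hat{\mathrm{X}}$, Theorem~\ref{1stordertoquad} supplies canonical coordinates $(\tilde{r},\tilde{s})$ in which it takes the integral form $\mathrm{d}\tilde{\mathrm{s}}/\mathrm{d}\tilde{r} = H(\tilde{r})$ with $H$ explicit, which is Eq.~\eqref{eq: thm conc 2nd order 2}. Integrating this and transforming back expresses $u$ as a function of $r$; because recovering $u(r)$ requires first performing the $H$-integration and then inverting the coordinate change, the resulting $G(r) := u(r) = \mathrm{d}s/\mathrm{d}r$ is only \emph{implicitly} defined yet is explicitly related to the solution of the $H$-equation, exactly as claimed. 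This identity $\mathrm{d}s/\mathrm{d}r = G(r)$ is then the second integral Eq.~\eqref{eq: thm conc 2nd order}, completing the reduction to two quadratures.
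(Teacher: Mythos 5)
Your proof is correct and follows the paper's overall strategy --- use solvability to arrange $[\mathrm{X}_1,\mathrm{X}_2]=c\,\mathrm{X}_1$, reduce the order using the normal generator $\mathrm{X}_1$, show the reduced first-order equation inherits a symmetry from $\mathrm{X}_2$, and invoke Theorem~\ref{1stordertoquad} to obtain the explicit integral \eqref{eq: thm conc 2nd order 2} --- but you execute the reduction by a genuinely different route, and the difference buys something. The paper reduces using \emph{arbitrary} differential invariants $v(x,y)$ and $w(x,y,y_1)$ of $\mathrm{X}_1$ and $\mathrm{X}_1^{(1)}$, obtaining $\frac{\mathrm{d}w}{\mathrm{d}v}=Z(v,w)$, and establishes the descent of $\mathrm{X}_2$ abstractly: the commutator identity gives $\mathrm{X}_1\mathrm{X}_2 v=\mathrm{X}_2\mathrm{X}_1 v+\lambda\mathrm{X}_1 v=0$, hence $\mathrm{X}_2 v=h(v)$ and similarly $\mathrm{X}_2^{(1)}w=g(v,w)$, so $\mathrm{X}_2^{(1)}=h(v)\frac{\partial}{\partial v}+g(v,w)\frac{\partial}{\partial w}$ is a well-defined field on $(v,w)$-space. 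Because these reduced variables are not canonical for $\mathrm{X}_1$, the paper's argument must end with an extra step: integrating the $H$-equation yields a relation $I(v,w)=0$, which is a first-order ODE in $y$ admitting $\mathrm{X}_1$, and Theorem~\ref{1stordertoquad} is applied a \emph{second} time to pass to canonical coordinates $(r,s)$ of $\mathrm{X}_1$ and reach \eqref{eq: thm conc 2nd order}. You instead straighten $\mathrm{X}_1=\frac{\partial}{\partial s}$ at the outset, so your reduced variables $(r,u)$ with $u=\frac{\mathrm{d}s}{\mathrm{d}r}$ are a specific instance of the paper's $(v,w)$; the descent of $\mathrm{X}_2$ then becomes the explicit computation $\xi_2=\xi_2(r)$, $\eta_2=cs+\beta(r)$, $\eta^{(1)}=cu+\beta'(r)-u\xi_2'(r)$ (all $s$-free), a concrete and directly checkable replacement for the paper's invariance argument, and the final re-application of Theorem~\ref{1stordertoquad} becomes unnecessary, since once $u(r)$ is recovered from the $H$-integration the identity $\frac{\mathrm{d}s}{\mathrm{d}r}=u(r)=:G(r)$ is immediate. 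The step you flagged as the main obstacle is indeed the crux, and your sketch of it is sound: since $\eta^{(1)}=D_r\eta_2-u\,D_r\xi_2$ is independent of $s$, the second prolongation of $\mathrm{X}_2$, restricted to functions of $(r,u,u_r)$, coincides with the first prolongation of $\hat{\mathrm{X}}$, so admittedness of the reduced equation follows from Corollary~\ref{thm: Invar criteria} exactly as the paper deduces it for $\mathrm{X}_2^{(1)}$ and Eq.~\eqref{eq:1streduction}. Both arguments share the same unstated non-degeneracy assumptions (e.g.\ that the ODE can be written in solved form in the new coordinates and that the implicit relation can be resolved for $G$), so your version is no less rigorous than the paper's while being somewhat more economical.
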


Note that the ODE in Eq.~\eqref{2ndorder} is reduced to two integrals, namely Eq.~\eqref{eq: thm conc 2nd order} and Eq.~\eqref{eq: thm conc 2nd order 2}.
At a high level, our proposed Bayesian PNM performs inference for the solution $\mathrm{s}(r)$ of Eq.~\eqref{eq: thm conc 2nd order} and then transforms the resultant posterior back into the original $(x,y)$-coordinate system.
However, because $G$ in Eq.~\eqref{eq: thm conc 2nd order} depends on the solution $\tilde{\mathrm{s}}(\tilde{r})$ of Eq.~\eqref{eq: thm conc 2nd order 2}, we must also estimate $\tilde{\mathrm{s}}(\tilde{r})$ and for this we need to evaluate $H$.
Our PNM is therefore based on the information operator
\begin{eqnarray*}
A(\mathrm{y}) & = & \left[ G(r_0) , \dots , G(r_n) , H(\tilde{r}_0) , \dots , H(\tilde{r}_n) \right]^\top \in \mathcal{A} = \mathbb{R}^{2(n+1)}
\end{eqnarray*}
which corresponds indirectly to $2(n+1)$ evaluations of $f$, the original gradient field.
The extension of our approach to a general $m$th order ODE proceeds analogously, with $\mathcal{A} = \mathbb{R}^{m(n+1)}$.
The use of Thm. \ref{2ndordertoquad} is illustrated in Example \ref{ex: 2nd ord reduce} in the Supplement.

The two principles of prior construction that we advocated in the case of a first order ODE apply equally to the case of a second- and higher-order ODE.
It therefore remains only to discuss the selection of the specific inputs $r_i$ (and $\tilde{r}_i$ in the case of a second order ODE) that are used to define the information operator $A$.
This discussion is again reserved for Supplemental Section \ref{subsec: training set}.

\section{Numerical Illustration} \label{sec: experiment}

In this section the proposed Bayesian PNM is empirically illustrated.
Recall that we are not advocating these methods for practical use, rather they are to serve as a proof-of-concept for demonstrating that exact Bayesian inference can \emph{in principle} be performed for ODEs, albeit at considerable effort; a non-trivial finding that helps to shape ongoing research and discussion in this nascent field.

The case of a first order ODE is considered in Section \ref{subsec: 1st order experiment} and the second order case is contained in Section \ref{subsec: 2nd order experiment}.
In both cases, scope is limited to verifying the correctness of the procedures, as well as indicating how conjugate prior distributions can be constructed.

\subsection{A First Order ODE} \label{subsec: 1st order experiment}

This section illustrates the empirical performance of the proposed method for a first order ODE.

\paragraph{ODE:}
To limit scope we consider first order ODEs of the form
\begin{equation}
\frac{\mathrm{d}\mathrm{y}}{\mathrm{d}x} = F\left( \frac{\mathrm{y}(x)}{x} \right) , \qquad x \in [1,x_T], \qquad \mathrm{y}(1) = y_0 . \label{eq: experiment 1st order}
\end{equation}
Note that admitted transformation and associated canonical coordinates for this class of ODE have already been derived in Ex. \ref{1storderexample initial}, Ex. \ref{1storderexample} and Ex. \ref{1storderexample again 2}.

\paragraph{Prior:}
In constructing a prior $\mu \in \mathcal{P}_{\mathcal{Y}}$ we refer to the implicit prior principle in Sec. \ref{subsec: prior 1st order}.
Indeed, recall from Ex. \ref{1storderexample} that the ODE in Eq.~\eqref{eq: experiment 1st order} can be transformed into an ODE of the form
\begin{eqnarray*}
\frac{\mathrm{d} \mathrm{s}}{\mathrm{d} r} = G(r), \qquad r \in (0,\infty) , \qquad \mathrm{s}(y_0) = \log(y_0) .
\end{eqnarray*}
Then our approach constructs a distribution $\nu \in \mathcal{P}_{\mathcal{S}}$ where, from Ex. \ref{1storderexample again 2}, $\mathcal{S}$ is the set of differentiable functions $\mathrm{s}$ defined on $r \in (0,\infty)$ and satisfying 
\begin{eqnarray}
\log (r) \; \leq \; \mathrm{s}(r) \; \leq \; \log(r) + \log(x_T). \label{eq: domain constraint}
\end{eqnarray}
To ensure monotonicity in the implicit prior principle, we take $\frac{\mathrm{d}x}{\mathrm{dr}} > 0$, which translates into the requirement that
\begin{eqnarray}
\frac{\mathrm{d}\mathrm{s}}{\mathrm{d}r} & > & \frac{1}{r} . \label{eq: implicit prior 1st order}
\end{eqnarray}
If Eq.~\eqref{eq: implicit prior 1st order} holds, then $\nu$ induces a well-defined distribution $\mu \in \mathcal{P}_{\mathcal{Y}}$.
Note that the constraints in Eq.~\eqref{eq: domain constraint} and Eq.~\eqref{eq: implicit prior 1st order} preclude the direct use of standard prior models, such as Gaussian processes.
However, it is nevertheless possible to design priors that are convenient for a given set of canonical coordinates.
Indeed, for the canonical coordinates $r,s$ in our example, we can consider a prior of the form
\begin{eqnarray}
\mathrm{s}(r) & = & \log(r) + \log(x_T)\zeta(r) \nonumber
\end{eqnarray}
where the function $\zeta : (0,\infty) \rightarrow \mathbb{R}$ satisfies
\begin{eqnarray}
\zeta(y_0)=0, \qquad \zeta(r) \leq 1, \qquad \frac{\mathrm{d}\zeta}{\mathrm{d}r} \geq 0  .
\label{eq: prior criterion 1st order}
\end{eqnarray}
For this experiment, the approach of \cite{Lopez-Lopera2017} was used as a prior model for the monotone, bounded function $\zeta$; this requires that a number, $N$, of basis functions is specified - for brevity we defer the detail to Appendix \ref{subsec: Lopez details}.

\begin{figure}[t!]
\centering
\includegraphics[align=c, width = 0.43\textwidth,clip,trim = 5cm 9.5cm 5cm 9cm]{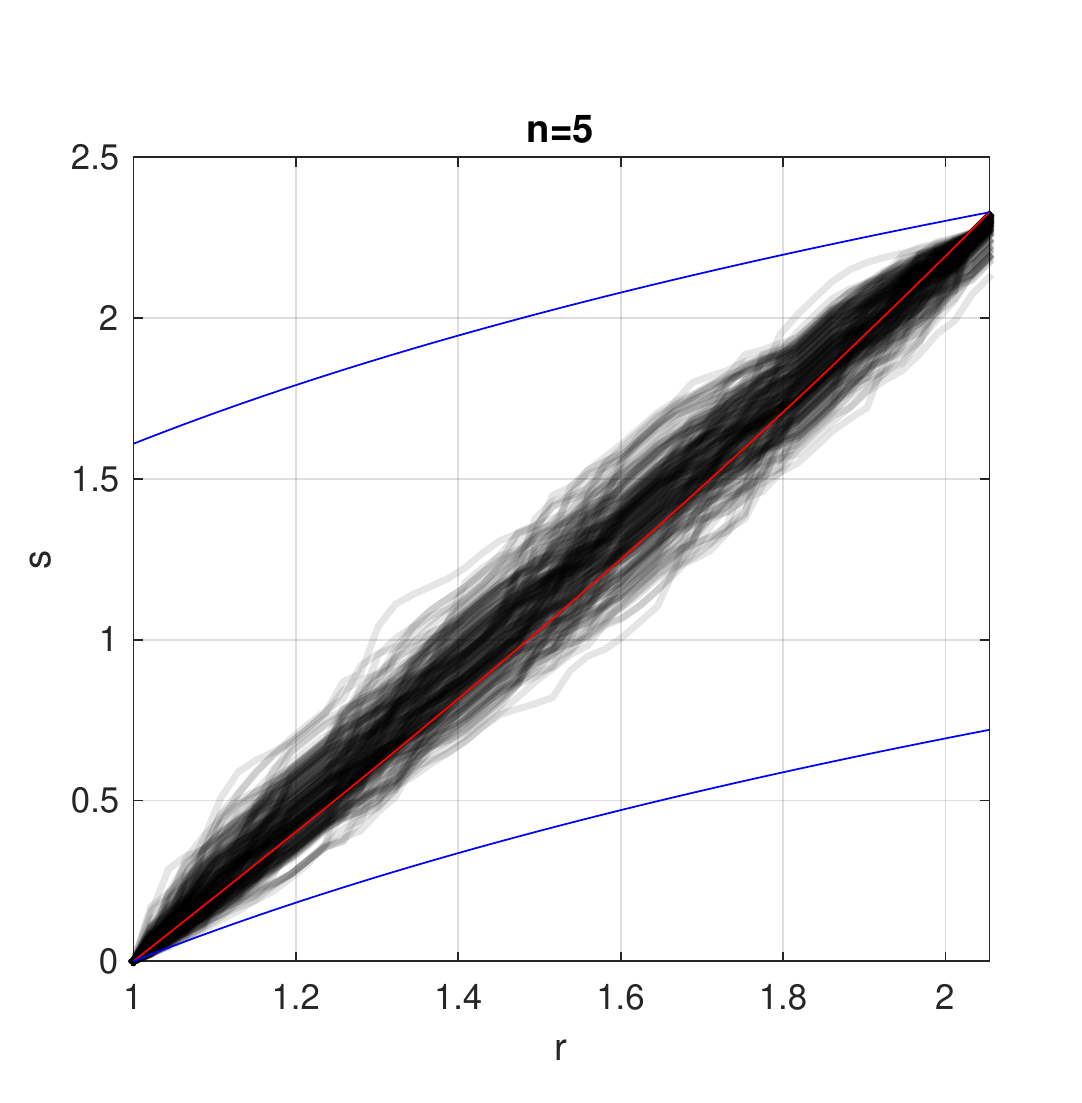}
\includegraphics[align=c, width = 0.42\textwidth,clip,trim = 5cm 8.3cm 5cm 8cm]{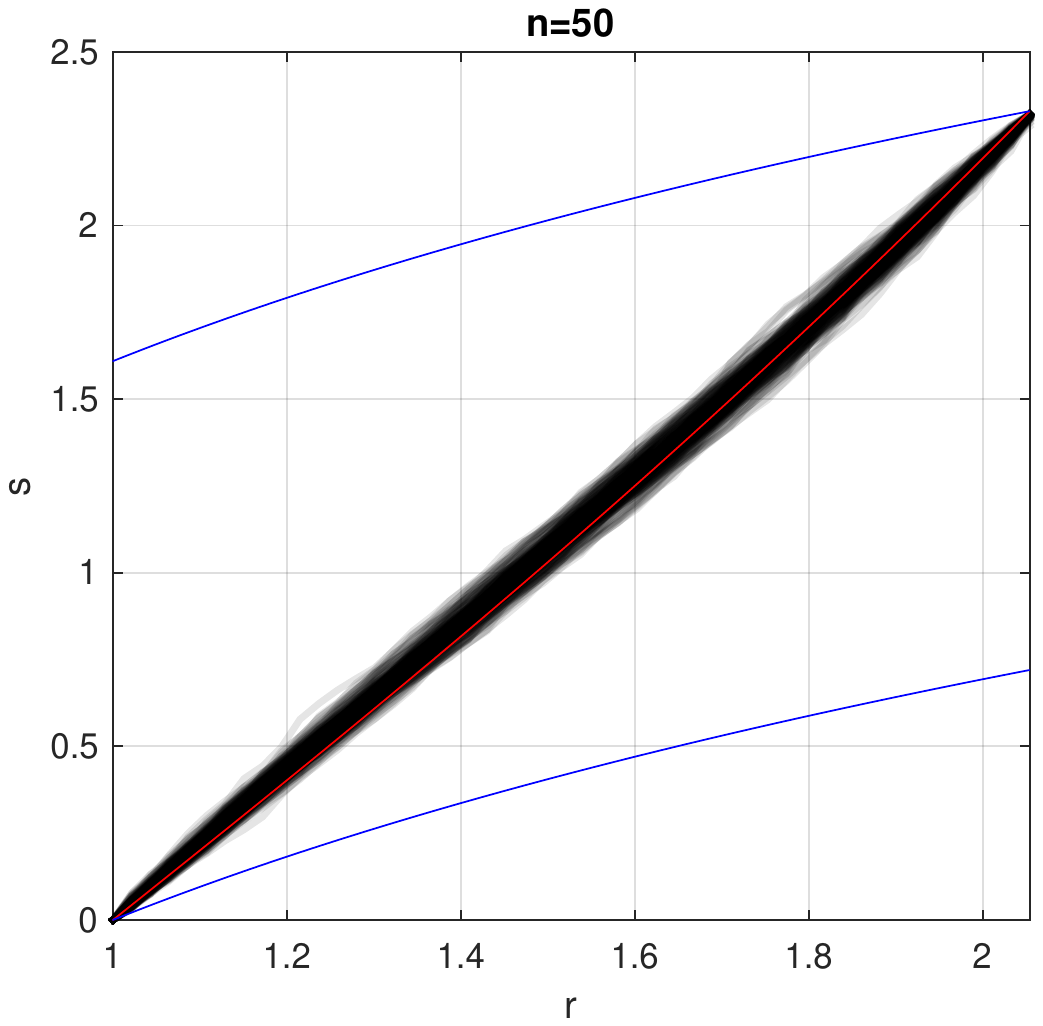}

\includegraphics[align=c, width = 0.43\textwidth,clip,trim = 5cm 9.3cm 5cm 9.5cm]{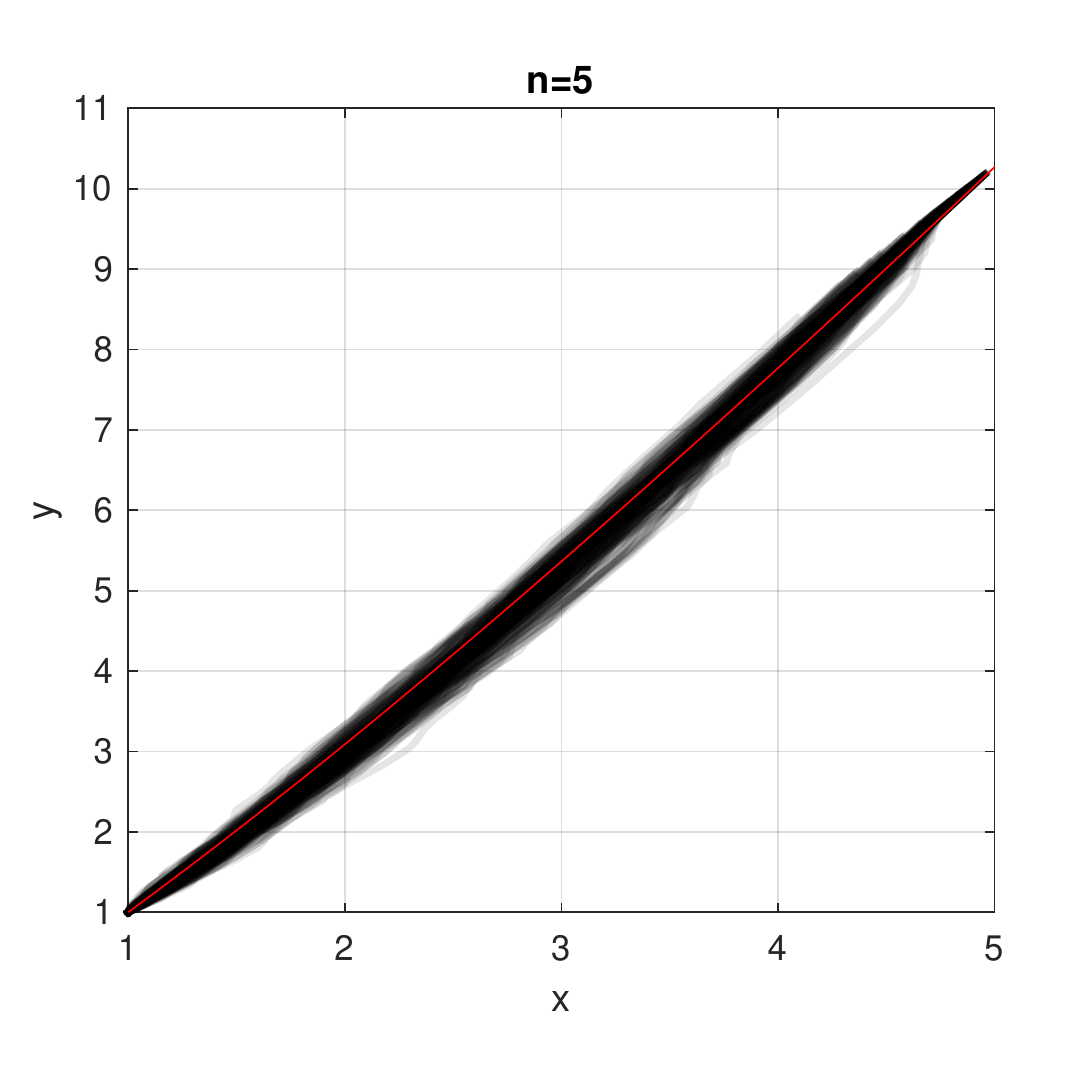}
\includegraphics[align=c, width = 0.41\textwidth,clip,trim = 5cm 7.9cm 5cm 8cm]{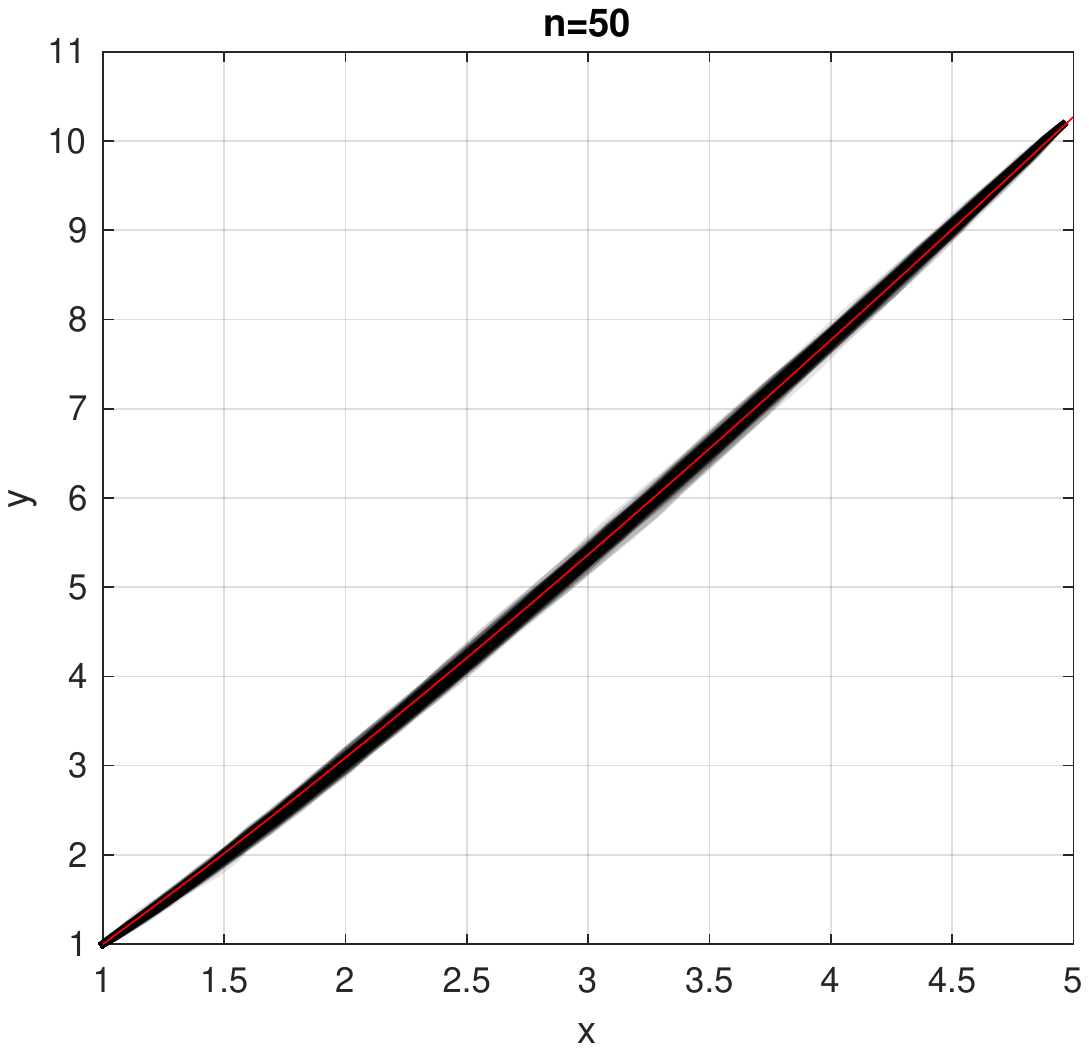}
\caption{
Experimental results, first order ODE: 
The black curves represent samples from the posterior, whilst the exact solution is indicated in red.
The blue curves represent a constraint on the $(r,s)$ domain that arises when the implicit prior principle is applied.
The number $n$ of gradient evaluations is indicated.
Top: results in the $(r,s)$ domain.
Bottom: results in the $(x,y)$ domain.}
\label{fig: 1st order results rs}
\end{figure}

The prior just described incorporates the symmetric structure of the ODE, in the sense that the independent variable $r=\frac{y}{x}$ is the first canonical coordinate of the infinitesimal generator of the Lie group of transformations of the original ODE in Eq. \ref{1storderexample}. 
In other words, $r$ is a variable fixed by the Lie group of transformations of the ODE (in this case $x^* = \alpha x$, $y^* = \alpha y$, so $r^*=r$). 
Because the prior is defined on functions $s(r)$ of $r$, this means the prior itself is also unchanged by the Lie group of transformations of the ODE, so that the prior effectively incorporates the symmetric structure of the ODE. 

\paragraph{Results:}
To obtain empirical results we consider the ODE with $F(r) = r^{-1} + r$ and $y_0 = 1, x_T=5$.
The posterior distributions that were obtained as the number $n$ of data points was increased were sampled and plotted in the $(r,s)$ and $(x,y)$ planes in Fig. \ref{fig: 1st order results rs}.
In each case a basis of size $N = 2n$ was used.
Observe that the implicit prior principle ensures that all curves in the $(x,y)$ plane are well-defined functions (i.e. there is at most one $y$ value for each $x$ value).
Observe also that the posterior mass appears to contract to the true solution $\mathrm{y}^\dagger$ of the ODE as the number of evaluations $n$ of the gradient field is increased.

\subsection{A Second Order ODE} \label{subsec: 2nd order experiment}

This section illustrates the empirical performance of the proposed method for a second order ODE.

\paragraph{ODE:}
Consider again the second order nonlinear ODE in Eqn. \ref{eq:2ndorderexample} together with the initial condition $y(x_0) = y_0$, $\frac{\mathrm{d}\mathrm{y}}{\mathrm{d}x}(x_0) = y_0'$.

\paragraph{Prior:}
It is shown in Ex. \ref{ex: 2nd ord reduce} in the Supplement that Eq.~\ref{eq:2ndorderexample} can be reduced to a first order ODE in $(s,r)$ with $-\frac{1}{x_0} - r \leq s \leq -\frac{1}{x_T}-r$.
The implicit prior principle in this case requires that $\frac{\mathrm{d}\mathrm{s}}{\mathrm{d}r} > -1$.
Thus we are led to consider a parametrisation of the form
\begin{eqnarray}
\mathrm{s}(r) & = & - \frac{1}{x_0} -r + \left(\frac{1}{x_0} - \frac{1}{x_T} \right) \zeta(r) \nonumber
\end{eqnarray}
where the function $\zeta$ again satisfies the conditions in Eq.~\eqref{eq: prior criterion 1st order}.
The approach of \cite{Lopez-Lopera2017} was therefore again used as a prior model.

For this example an additional level of analytic tractability is possible, as described in detail in Ex. \ref{ex: 2nd ord reduce} in the Supplement.
Thus we need only consider an information operator of the form $A(\mathrm{y}) = [G(r_0) , \dots , G(r_n)]$.

\paragraph{Results:}
The posterior distributions that were obtained are plotted in the $(r,s)$ plane and the $(x,y)$ plane in Fig. \ref{fig: 2nd order results}.
A basis of size $N = 2n$ was used, with $[y_0,y'_0] = [-10,1], [x_0,x_T]=[5,10]$.
Observe that the implicit prior principle ensures that all curves in the $(x,y)$ plane are well-defined functions (i.e. there is at most one $y$ value for each $x$ value). \textcolor{black}{The true solution appears to be smoother than the samples, even for 50 gradient evaluations, which suggests that the prior was somewhat conservative in this context.}

\begin{figure}[t!]
\centering
\includegraphics[width = 0.48\textwidth,clip,trim = 4cm 9cm 5cm 9cm]{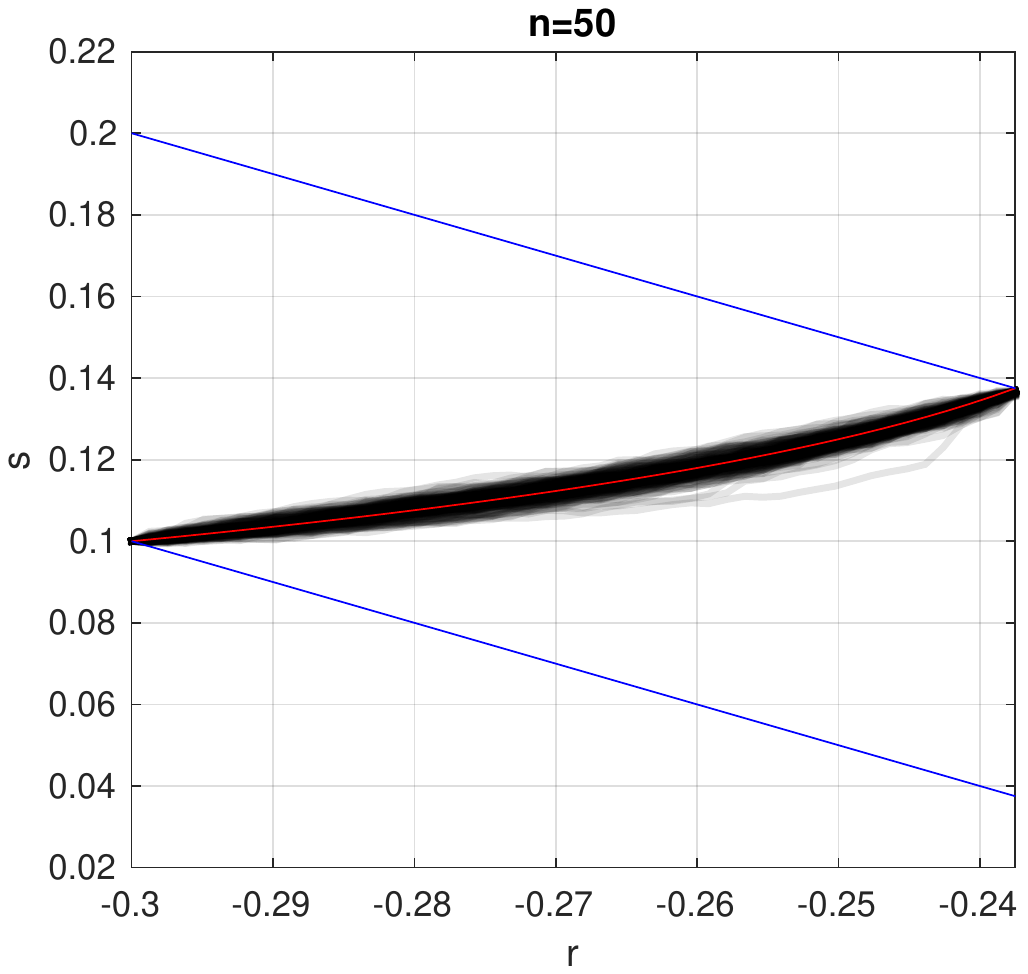}
\includegraphics[width = 0.48\textwidth,clip,trim = 4cm 9cm 5cm 9cm]{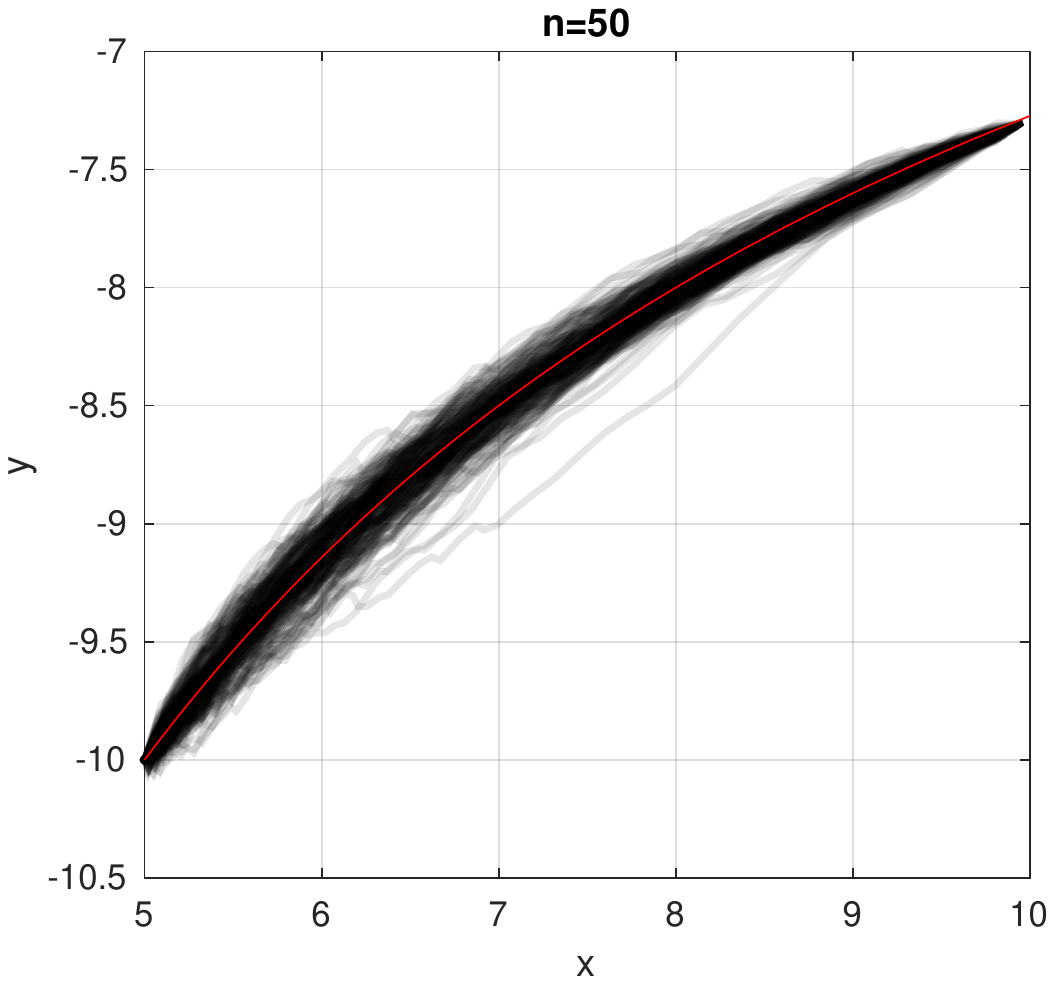}
\caption{
Experimental results, second order ODE: 
The black curves represent samples from the posterior in the $(r,s)$ plane (left) and $(x,y)$ plane (right), whilst the exact solution is indicated in red.
The blue curves represent a constraint on the domain that arises when the implicit prior principle is applied.
The number of gradient evaluations was $n = 50$.
}
\label{fig: 2nd order results}
\end{figure}

\section{Conclusion} \label{sec: conclusion}

This paper presented a foundational perspective on PNM.
It was first argued that there did not exist a Bayesian PNM for the numerical solution of ODEs.
Then, to address this gap, a prototypical Bayesian PNM was developed.
The Bayesian perspective that we have put forward sheds light on foundational issues which will need to be addressed going forward:

\paragraph{Foundation of PNM:}

As explained in Section \ref{subsec: existing work}, existing PNM for ODEs each take the underlying state space $\mathcal{Y}$ to be the solution space of the ODE.
This appears to be problematic, in the sense that a generic evaluation $f(x_i,y_i)$ of the gradient field cannot be cast as information $A(\mathrm{y}^\dagger)$ about the solution $\mathrm{y}^\dagger$ of the ODE unless the point $(x_i,y_i)$ lies exactly on the solution curve $\{(x,\mathrm{y}^\dagger(x)) : x \in [x_0,x_T]\}$.
As a consequence, all existing PNM of which we are aware violate the likelihood principle and are therefore not strictly Bayesian.
The assumption of a solvable Lie algebra, used in this work, can be seen as a mechanism to ensure the existence of an exact information operator $A$, so that the likelihood principle can be obeyed.
However, for a general ODE it might be more natural to take the underlying state space to be a set $\mathcal{F}$ of permitted gradient fields and the quantity of interest $Q(f)$ to map a gradient field $f$ to the solution of the associated ODE.
This would make the information operator $A$ trivial but evaluation of the push-forward $Q_{\#} \mu^a$ would require the exact solution operator of the ODE.
However, the reliance on access to an oracle solver $Q$ makes this philosophically somewhat distinct from PNM.

\paragraph{Limitations of Bayesian PNM:}

The proposed method was intended as a proof-of-concept and it is therefore useful to highlight the aspects in which it is limited.
First, when an $m$th order ODE admits an $r$-parameter Lie group of transformations with $r > m$, there is an arbitrariness to the particular $m$-dimensional sub-group of transformations that are selected.
Second, the route to obtain transformations admitted by the ODE demands that some aspects of the gradient field $f$ are known, in contrast to other work in which $f$ is treated as a black-box.
For instance, in Ex. \ref{1storderexample} we used the fact that $f$ can be expressed as $f(x,y) = F(\frac{y}{x})$, although knowledge of the form of $F$ was not required.
Third, the class of ODEs for which a solvable Lie algebra is admitted is relatively small.
On the other hand, references such as \cite{Bluman2002} document important cases where our method could be applied.
Fourth, the principles for prior construction that we identified do not entail a unique prior and, as such, the question of prior elicitation must still be addressed. 

\paragraph{Outlook:}

The goal of providing rigorous and exact statistical uncertainty quantification for the solution of an ODE is, we believe, important and will continue to be addressed.
Traditional numerical methods have benefitted from a century of research effort and, in comparison, Bayesian PNM is an under-developed field. 
For example, the limited existing work on PNM for ODEs, such as \cite{Skilling1992,Schober2014,Chkrebtii2013,Kersting2016,Schober2016,Kersting2018,Tronarp2018},
does not attempt to provide adaptive error control \citep[though we note promising ongoing research in that direction by][]{Chkrebtii2019}. 
\textcolor{black}{Nevertheless, the case for developing Bayesian numerical methods - which shares some parallels with the case for Bayesian statistics as opposed to other inferential paradigms - is clear, as argued in \cite{Diaconis1988} and \cite{Hennig2015}. }
The insights we have provided in this paper serve to highlight the foundational issues pertinent to Bayesian PNM for ODEs.
Indeed, our proof-of-concept highlights that performing exact Bayesian inference for ODEs may be extremely difficult.
This in turn provides motivation for the continued development of `approximately Bayesian' approaches to PNM, which in Sec. \ref{subsec: existing work} we surveyed in detail.

\paragraph{Acknowledgements:}

The authors are grateful to Mark Craddock, Fran\c{c}ois-Xavier Briol and Tim Sullivan for discussion of this work, as well as to an Associate Editor and two Reviewers for their challenging but constructive feedback.
JW was supported by the EPSRC Centre for Doctoral Training in Cloud Computing for Big Data at Newcastle University, UK.
CJO was supported by the Lloyd's Register Foundation programme on data-centric engineering at the Alan Turing Institute, UK.
This material was based upon work partially supported by the National Science Foundation under Grant DMS-1127914 to the Statistical and Applied Mathematical Sciences Institute. Any opinions, findings, and conclusions or recommendations expressed in this material are those of the authors and do not necessarily reflect the views of the National Science Foundation.

\newpage
\appendix
\setcounter{page}{1}
\section{Supplementary material}

This electronic supplement to the paper by Wang, Cockayne and Oates contains proofs for theoretical results in the main text (Sec. \ref{proof section}), a brief discussion on Bayesian experimental design in the context of designing a Bayesian probabilistic numerical method (Sec. \ref{subsec: training set}), and computation detail that was suppressed from the main text (Sec. \ref{subsec: Lopez details}).

\subsection{Proof of Theoretical Results} \label{proof section}

\begin{proof}[Proof of Theorem \ref{thm: exp generator}]
From Taylor's theorem we have that
\begin{eqnarray*}
x^* & = & X(x,\epsilon) \\
& = & \sum_{k=0}^\infty \frac{\epsilon^k}{k!} \left. \frac{\partial^k X(x,\epsilon)}{\partial \epsilon^k} \right|_{\epsilon = 0}
\end{eqnarray*}
For any differentiable function $F$ we have that
\begin{eqnarray*}
\frac{\mathrm{d}F(x^*)}{\mathrm{d}\epsilon} \; = \; \sum_{i=1}^d \frac{\partial F(x^*)}{\partial x_i^*} \frac{\mathrm{d}x_i^*}{\mathrm{d}\epsilon} \; = \; \sum_{i=1}^d \xi_i \frac{\partial F(x^*)}{\partial x_i^*} \; = \; \mathrm{X} F(x^*)
\end{eqnarray*}
and similarly
\begin{eqnarray*}
\frac{\mathrm{d}^kF(x^*)}{\mathrm{d}\epsilon^k} & = & \mathrm{X}^k F(x^*).
\end{eqnarray*}
Thus
\begin{eqnarray*}
\left. \frac{\partial^k X(x,\epsilon)}{\partial \epsilon^k} \right|_{\epsilon = 0} & = & \mathrm{X}^k x
\end{eqnarray*}
so that the stated result is recovered.
\end{proof}

\begin{proof}[Proof of Theorem \ref{thm: invar 1}]
The result is established as follows:
\begin{eqnarray*}
F \text{ invariant} & \Leftrightarrow & F(x^*) = 0 \text{ whenever } F(x) = 0 \\
& \Leftrightarrow & e^{\epsilon \mathrm{X}}F(x) = 0 \text{ whenever } F(x) = 0 \quad \text{(Cor. \ref{cor: exp})} \\
& \Leftrightarrow & F(x) + \epsilon \mathrm{X} F(x) + O(\epsilon^2) = 0 \text{ whenever } F(x) = 0 \quad \text{(Taylor)} \\
& \Leftrightarrow & \mathrm{X} F(x) = 0 \text{ whenever } F(x) = 0
\end{eqnarray*} 
where the last line follows since the coefficient of the $O(\epsilon)$ term in the Taylor expansion must vanish.
This completes the proof.
\end{proof}

\begin{proof}[Proof of Theorem \ref{thm: invar 2}]
From Cor. \ref{cor: exp}, we have that $F(x^*) = F(x) + \epsilon \mathrm{X} F(x) + O(\epsilon^2)$.
The result follows from inspection of the $\epsilon$ coefficient.
\end{proof}

\begin{proof} [Proof of Theorem \ref{thm: 2dsolvable}] 
Suppose $\mathcal{L}$ is a two dimensional Lie Algebra generated by linearly independent infinitesimal generators $\mathrm{X}_1$ and $\mathrm{X}_2$. Let $\mathrm{Y}=[\mathrm{X}_1,\mathrm{X}_2]=a\mathrm{X}_1+b\mathrm{X}_2$ and let $\mathcal{J}$ be the one dimensional subalgebra generated by $\mathrm{Y}$. Suppose $\mathrm{Z}=c\mathrm{X}_1+d\mathrm{X}_2$ is some element of $\mathcal{L}$, then 
\begin{eqnarray*}
[\mathrm{Y}, \mathrm{Z}]& = & \;[Y,c\mathrm{X}_1+d\mathrm{X}_2]\\
& = & \; c[\mathrm{Y},\mathrm{X}_1]+d[\mathrm{Y},\mathrm{X}_2]\\
& = & \; cb[\mathrm{X}_2,\mathrm{X}_1]+da[\mathrm{X}_1,\mathrm{X}_2]\\
& = & \; (ad-bc)\mathrm{Y} \in \mathcal{J}
\end{eqnarray*} 
So $\mathcal{J} \subset \mathcal{L}$ is normal, thus $\mathcal{L}$ is solvable, as claimed.
\end{proof}

\begin{proof} [Proof of Theorem \ref{1stordertoquad}]
Let the infinitesimal generator associated with the Lie group of transformations be denoted $\mathrm{X}$.
From the remarks after Def. \ref{def: canonical}, we can obtain canonical coordinates by solving the pair of first order partial differential equations $\mathrm{X}r=0$, $\mathrm{X}s=1$. 
By the chain rule we have 
\begin{eqnarray*}
\frac{\mathrm{d}s}{\mathrm{d}r} & = & \frac{s_x+s_{y}y'}{r_x+r_{y}y'} \; =: \; G(r,s)
\end{eqnarray*}
From the definition of canonical coordinates, the Lie group of transformations is $r^*=r$, $s^*=s+\epsilon$ in the transformed coordinate system, so 
\begin{eqnarray*}
\cfrac{\mathrm{d}s^*}{\mathrm{d}r^*} = G(r^*,s^*) & \implies & \cfrac{\mathrm{d}s}{\mathrm{d}r}=G(r,s+\epsilon)
\end{eqnarray*}
for all $\epsilon$, which implies $G(r,s)=G(r)$ and thus Eq.~\eqref{1storder} becomes
\begin{eqnarray*}
\cfrac{\mathrm{d}\mathrm{s}}{\mathrm{d}r} & = & G(r)
\end{eqnarray*}
as claimed.
\end{proof}

\begin{proof}[Proof of Theorem \ref{2ndordertoquad}]

Let the infinitesimal generators associated with the Lie group of transformations be denoted $\mathrm{X}_1$ and $\mathrm{X}_2$.
Recall from Thm. \ref{thm: 2dsolvable} that any two dimensional Lie algebra is solvable.
Thus, without loss of generality we may assume 
\begin{eqnarray} \label{eq: 2nd order commutator}
[\mathrm{X_1},\mathrm{X_2}] & = & \lambda\mathrm{X_1}
\end{eqnarray}
for some $\lambda \in \mathbb{R}$.
The infinitesimal generators $\mathrm{X}_1$ and $\mathrm{X}_2$ each correspond to a one parameter Lie group of transformations, denoted $x^*=X_1(x,\epsilon_1)$ and $x^\dagger=X_2(x,\epsilon_2)$.
Let $v(x,y)$, $w(x,y,y_1)$ be invariant functions of $\mathrm{X_1}$ and its extension $\mathrm{X_1^{(1)}}$, respectively, so $v(x^*,y^*)=v(x,y)$ and $w(x^*,y^*,y_1^*)=w(x,y,y_1)$, where $w$ has a non-trivial dependence on its third argument.
It follows from the definition of invariance that
\begin{eqnarray*}
\frac{\mathrm{d}w^*}{\mathrm{d}v^*} & = & \frac{\mathrm{d}w}{\mathrm{d}v}, 
\end{eqnarray*}
which is equivalent to 
\begin{eqnarray}
\mathrm{X}_1^{(1)}\cfrac{\mathrm{d}w}{\mathrm{d}v} & = & 0  \label{eq: 2nd order pf 1}
\end{eqnarray}
by Cor. \ref{thm: invar 3}. Now because Eq.~\eqref{eq: 2nd order pf 1} is a homogeneous partial differential equation, the general solution $\frac{\mathrm{d}w}{\mathrm{d}v}$ can be expressed as a function of the two solutions $v(x,y)$ and $w(x,y,y_1)$.
Therefore
\begin{eqnarray}\label{eq:1streduction}
\cfrac{\mathrm{d}w}{\mathrm{d}v} & = & Z(v,w)
\end{eqnarray}
for some undetermined function $Z$.

Since Eq.~\eqref{2ndorder} admits $\mathrm{X_2}$, and Eq.~\eqref{eq:1streduction} is the same ODE when expressed in terms of $x,y,y_1$, it must be the case that 
\begin{eqnarray*}
\mathrm{X_2^{(2)}} \left( \frac{\mathrm{d}w}{\mathrm{d}v}-Z(v,w) \right) \; = \; 0 \qquad \text{whenever} \qquad \frac{\mathrm{d}w}{\mathrm{d}v}\; = \; Z(v,w). 
\end{eqnarray*}
Then from Cor. \ref{thm: Invar criteria} it follows that $\mathrm{X}_2^{(1)}$ is admitted by the first order ODE in Eq.~\eqref{eq:1streduction}.
Thus we are now faced with a first order ODE that admits a one parameter Lie group of transformations, as in Thm. \ref{1stordertoquad}.

Now, from Eq.~\eqref{eq: 2nd order commutator}, we have $\mathrm{X_1}\mathrm{X_2}v=\mathrm{X_2}\mathrm{X_1}v+\lambda\mathrm{X_1}v=0$. 
Thus $\mathrm{X}_2 v$ is an invariant of $\mathrm{X}_1$ and so $\mathrm{X_2}v=h(v)$ for some function $h$. 
Similarly $\mathrm{X_1^{(1)}}\mathrm{X_2^{(1)}}v=\mathrm{X_2^{(1)}}\mathrm{X_1^{(1)}}v+\lambda\mathrm{X_1^{(1)}}v=0$, so that $\mathrm{X_2^{(1)}}w=g(v,w)$, for some function $g$.
This implies $\mathrm{X}_2^{(1)}=h(v)\frac{\partial}{\partial v}+g(v,w)\frac{\partial}{\partial w}$.

Proceeding as in Thm. \ref{1stordertoquad}, denote the canonical coordinates of $\mathrm{X_2^{(1)}}=h(v)\frac{\partial}{\partial v}+g(v,w)\frac{\partial}{\partial w}$ by $\tilde{r}(v,w)$, $\tilde{s}(v,w)$ such that $\mathrm{X_2^{(1)}}\tilde{r}=0$, $\mathrm{X_2^{(1)}}\tilde{s}=1$. 
In canonical coordinates, Eq.~\eqref{eq:1streduction} becomes: 
\begin{eqnarray}
\frac{\mathrm{d}\tilde{\mathrm{s}}}{\mathrm{d}\tilde{r}} & = & H(\tilde{r}) \label{eq: tilde ODE eq}
\end{eqnarray}
This is again an integral, with solution
\begin{eqnarray}
\tilde{\mathrm{s}}(\tilde{r}) & = & \int^{\tilde{r}} H(t) \mathrm{d}t+C . \label{eq:2ndreduction}
\end{eqnarray}
We can rewrite Eq.~\eqref{eq:2ndreduction} in terms of $v,w$ to obtain an equation of the form
\begin{eqnarray}\label{eq:transformback}
I(v,w) & = & 0 
\end{eqnarray}
which satisfies $\mathrm{X_1^{(1)}}(I(v,w))=0$ whenever $I(v,w)=0$, since recall $v, w$ are invariants of $X_1^{(1)}$. 
For the final step, we recall that $v = v(x,y)$ and $w = w(x,y,y_1)$, so that Eq.~\eqref{eq:transformback} represents a first order ODE in $y$, which admits $\mathrm{X_1}$.
Thus we can apply Thm. \ref{1storder} a second time to obtain canonical coordinates $r(x,y)$, $s(x,y)$ for $X_1$.
In these coordinates, Eq.~\eqref{eq:transformback} reduces into the form 
\begin{eqnarray*}
\frac{\mathrm{d}\mathrm{s}}{\mathrm{d}r} & = & G(r)
\end{eqnarray*}
where $G$ is implicitly defined.
\end{proof}

\begin{example}[Deriving the Infinitesimal Generators for the Second Order ODE in Eq.~\ref{2ndorderexample0}] \label{ex: 2nd nonlinear}
Consider the second order nonlinear ODE
\begin{equation} \label{eq:2ndorderexample}
(x-\mathrm{y}(x))\frac{\mathrm{d}^2\mathrm{y}}{\mathrm{d}x^2}+2\frac{\mathrm{d}\mathrm{y}}{\mathrm{d}x} \left( \frac{\mathrm{d}\mathrm{y}}{\mathrm{d}x}+1 \right) +\left(\frac{\mathrm{d}\mathrm{y}}{\mathrm{d}x} \right)^{3/2} = 0 .
\end{equation}
Using Corollary  \ref{thm: Invar criteria}, we have:
$$
\left(\xi\frac{\partial}{\partial x}+\eta\frac{\partial}{\partial y}+\eta^{(1)}\frac{\partial}{\partial y_1}+\eta^{(2)}\frac{\partial}{\partial y_2} \right) \left(y_2+\frac{2y_1(y_1+1)+y_1^{3/2}}{x-y} \right) = 0
$$
which implies
$$
-\xi\frac{2y_1(y_1+1)+y_1^{3/2}}{(x-y)^2}+\eta\frac{2y_1(y_1+1)+y_1^{3/2}}{(x-y)^2}+\eta^{(1)}\left(\frac{4y_1+2+\frac{3}{2}y_1^{1/2}}{x-y} \right)+\eta^{(2)} = 0
$$
Recall $$\eta^{(1)}=\eta_x+(\eta_y-\xi_x)y_1+\xi_y y_1^2$$
and  $$\eta^{(2)}=\eta_{xx}+(2\eta_{xy}-\xi_{xx})y_1+(\eta_{yy}-2\xi_{xy})y_1^2-\xi_{yy} y_1^3+(\eta_y-2\xi_x)y_2-2\xi_y y_1y_2$$
Also notice we can replace $y_2$ via the original differential equation, i.e. 
$$
y_2=-\frac{2y_1(y_1+1)+y_1^{3/2}}{x-y} .
$$
Substituting for $\eta^{(1)}$, $\eta^{(2)}$ and $y_2$ via the above expressions, multiplying both sides by $(x-y)^2$ and rearranging the terms as powers of $y_1$ yields the rather long equation:
\begin{eqnarray*}
( 2x\eta_x-2y\eta_x+x^2\eta_{xx}-2xy\eta_{xx}+y^2\eta_{xx} ) && \\ 
+y_1 \left( \begin{array}{l} -2\xi+2\eta+4x\eta_x-4y\eta_x+2x\eta_y-2y\eta_y-2x\xi_x+2y\xi_x \\
+2x^2\eta_{xy}-2xy\eta_{xy}+2y^2\eta_{xy}-x^2\xi_{xx}+2xy\xi_{xx} \\ -y^2\xi_{xx}-2x\eta_y+4x\xi_x+2y\eta_y-4y\xi_x \end{array} \right) && \\
+y_1^2 \left( \begin{array}{l} -2\xi+2\eta+4x\eta_y-4y\eta_y-4x\xi_x+4y\xi_x+2x\xi_y-2y\xi_y \\
+x^2\eta_{yy}-2xy\eta_{yy}+y^2\eta_{yy}-2x^2\xi_{xy} +4xy\xi_{xy} \\ -2y^2\xi_{xy}-2x\eta_y+4x\xi_x+2y\eta_y-4y\xi_x+4x\xi_y-4y\xi_y \end{array} \right) && \\
+y_1^3(4x\xi_y-4y\xi_y-x^2\xi_{yy}+2xy\xi_{yy}-y^2\xi_{yy}+4x\xi_y-4y\xi_y) && \\
+y_1^{1/2}\left(\frac{3}{2}x\eta_x-\frac{3}{2}y\eta_x\right) && \\
+y_1^{3/2}\left(-\xi+\eta+\frac{3}{2}x\eta_y-\frac{3}{2}y\eta_y-\frac{3}{2}x\xi_x+\frac{3}{2}y\xi_x-x\eta_y+2x\xi_x+y\eta_y-2y\xi_x \right) && \\
+y_1^{5/2} \left(\frac{3}{2}x\xi_y-\frac{3}{2}y\xi_y+2x\xi_y-2y\xi_y \right) & = & 0 
\end{eqnarray*}
This expression on the left hand side must vanish, so comparing the coefficients of powers of $y_1$ gives the determining equations:
\begin{eqnarray}
2x\eta_x-2y\eta_x+x^2\eta_{xx}-2xy\eta_{xx}+y^2\eta_{xx} & = & 0 \label{y1zero} \\
-2\xi+2\eta+4x\eta_x-4y\eta_x+2x\eta_y-2y\eta_y-2x\xi_x+2y\xi_x \nonumber \\  
+2x^2\eta_{xy}-2xy\eta_{xy} +2y^2\eta_{xy}-x^2\xi_{xx}+2xy\xi_{xx} \nonumber \\
-y^2\xi_{xx}-2x\eta_y+4x\xi_x+2y\eta_y-4y\xi_x & = & 0 \nonumber \\
-2\xi+2\eta+4x\eta_y-4y\eta_y-4x\xi_x+4y\xi_x+2x\xi_y-2y\xi_y+x^2\eta_{yy} \nonumber \\
-2xy\eta_{yy}+y^2\eta_{yy}-2x^2\xi_{xy}+4xy\xi_{xy}-2y^2\xi_{xy} \nonumber \\
-2x\eta_y+4x\xi_x+2y\eta_y-4y\xi_x+4x\xi_y-4y\xi_y & = & 0 \nonumber \\
4x\xi_y-4y\xi_y-x^2\xi_{yy}+2xy\xi_{yy}-y^2\xi_{yy}+4x\xi_y-4y\xi_y & = & 0 \label{y1three} \\
\frac{3}{2}x\eta_x-\frac{3}{2}y\eta_x & = & 0 \label{y1half} \\
-\xi+\eta+\frac{3}{2}x\eta_y-\frac{3}{2}y\eta_y-\frac{3}{2}x\xi_x+\frac{3}{2}y\xi_x- x\eta_y+2x\xi_x+y\eta_y-2y\xi_x & = & 0 \nonumber \\
\frac{3}{2}x\xi_y-\frac{3}{2}y\xi_y+2x\xi_y-2y\xi_y & = & 0 \label{y1fiveovertwo} 
\end{eqnarray}
It is immediately obvious from $\eqref{y1half}$ that $\eta_x=0$ and from $\eqref{y1fiveovertwo}$ that $\xi_y=0$. Consequently \eqref{y1zero} and \eqref{y1three} vanishes. The remaining determining equations simplify to:
\begin{eqnarray}
-2\xi+2\eta+2x\xi_x-2y\xi_x-x^2\xi_{xx}+2xy\xi_{xx}-y^2\xi_{xx}& = & 0 \label{y1one2} \\
-2\xi+2\eta+2x\eta_y-2y\eta_y+x^2\eta_{yy}-2xy\eta_{yy}
+y^2\eta_{yy} & = & 0 \label{y1two2} \\
-\xi+\eta+\frac{1}{2}x\eta_y-\frac{1}{2}y\eta_y+\frac{1}{2}x\xi_x-\frac{1}{2}y\xi_x & = & 0 \label{y1threeovertwo2}
\end{eqnarray}
These remaining partial differential equations in $\xi(x,y)$ and $\eta(x,y)$ are linear, and recall $\eta(x,y)$ is independent of $x$ and $\xi(x,y)$ is independent of $y$ respectively. 
To solve these partial differential equations we can therefore express $\xi(x)=\sum_{n=0}^{\infty} a_nx^n$ and $\eta(y)=\sum_{m=0}^{\infty} b_my^m$.
Consequently \eqref{y1one2} becomes:
\begin{eqnarray*}
-2\sum_{n=0}^{\infty} a_nx^n+2\sum_{m=0}^{\infty} b_my^m+2\sum_{n=1}^{\infty} na_nx^n-2y\sum_{n=1}^{\infty} na_nx^{n-1} && \\
-\sum_{n=2}^{\infty} n(n-1)a_nx^n+2y\sum_{n=2}^{\infty} n(n-1)a_nx^{n-1}-y^2\sum_{n=2}^{\infty} n(n-1)a_nx^{n-2} & = & 0
\end{eqnarray*}
Comparing the constant term implies $b_0=a_0$.
Comparing the terms containing $y$ implies:
$$2b_1-2\sum_{n=1}^{\infty} na_nx^{n-1}+2\sum_{n=2}^{\infty} n(n-1)a_nx^{n-1}=0$$
Comparing coefficients of $x^n$ gives $b_1=a_1$, $n=n(n-1)$ or $a_n=0$ for $n\geq2$. 
Of course, $n=n(n-1)$ has solution $n=2$ for $n\geq2$, so $a_n=0$ for $n\geq3$. 
Comparing the terms containing $y^2$ implies $b_2=a_2$.
Notice \eqref{y1two2} is symmetric with \eqref{y1one2} in the sense that swapping $\xi$ with $\eta$ and $x$ with $y$ in \eqref{y1one2} gives \eqref{y1two2}. So by symmetry \eqref{y1two2} gives $b_0=a_0$, $b_1=a_1$, $b_2=a_2$ and $b_n=0$ for $n\geq3$.
\eqref{y1threeovertwo2} gives no additional solutions. Therefore, the example ODE admits a three parameter Lie group of transformations with infinitesimals:
$$\xi=a_0+a_1x+a_2x^2$$
$$\eta=a_0+a_1y+a_2y^2$$
where $a_2$, $a_1$ and $a_0$ are arbitrary constants. The infinitesimal generators corresponding to $a_2$, $a_1$ and $a_0$ are respectively
\begin{eqnarray}
\mathrm{X}_1 & = & x^2\frac{\partial}{\partial x}+y^2\frac{\partial}{\partial y} \label{eq: 2nd gen1} \\
\mathrm{X}_2 & = & x\frac{\partial}{\partial x}+y\frac{\partial}{\partial y} \nonumber \\ 
\mathrm{X}_3 & = & \frac{\partial}{\partial x}+\frac{\partial}{\partial y} \label{eq: 2nd gen3} ,
\end{eqnarray}
which generate a three dimensional Lie algebra.

\end{example}

\begin{example}[Ex. \ref{ex: 2nd nonlinear}, continued] \label{ex: 2nd ord reduce}

Recall from Ex. \ref{ex: 2nd nonlinear} that the second order nonlinear ODE in Eq.~\eqref{eq:2ndorderexample} admits a three parameter Lie group of transformations with infinitesimal generators $\mathrm{X}_1$, $\mathrm{X}_2$, $\mathrm{X}_3$ defined in Eqs.~\eqref{eq: 2nd gen1}-\eqref{eq: 2nd gen3}.
These generators can be verified to satisfy $[\mathrm{X}_1,\mathrm{X}_2]=-\mathrm{X}_1$, $[\mathrm{X}_1,\mathrm{X}_3]=-\mathrm{X}_2$, $[\mathrm{X}_2,\mathrm{X}_3]=-\mathrm{X}_3$.
The pairs $\mathrm{X}_1, \mathrm{X}_2$ and $\mathrm{X}_2, \mathrm{X}_3$ form a two dimensional (and therefore solvable by Thm. \ref{thm: 2dsolvable}) Lie sub-algebra and can be used as the basis for our method. 
For the derivations below we proceed with arbitrary choice $\mathrm{X}_1$, $\mathrm{X}_2$.

Following the proof of Thm. \ref{2ndordertoquad}, first we seek a solution $v = v(x,y)$ to the first order linear PDE $\mathrm{X}_1v=0$.
i.e. we must solve
\begin{eqnarray*}
x^2\cfrac{\partial{v}}{\partial x}+y^2\cfrac{\partial{v}}{\partial y}=0
\end{eqnarray*}
This has general solution $v = f(\frac{1}{y}-\frac{1}{x})$ for some arbitrary function $f$, and we pick a particular solution $v(x,y)=\frac{1}{y}-\frac{1}{x}$.
Next we seek a solution $w = w(x,y,y_1)$ to the first order linear PDE $\mathrm{X}_1^{(1)}w=0$.
i.e. we must solve
\begin{eqnarray*}
x^2\frac{\partial w}{\partial x}+y^2\frac{\partial w}{\partial y}+2(y-x) y_1 \frac{\partial w}{\partial y_1} = 0. 
\end{eqnarray*}
Again, we pick a particular solution $w(x,y,y_1)=y_1 (\frac{x}{y})^2$. 
In accordance with Eq.~\eqref{eq:1streduction}, we can re-write the original ODE \eqref{eq:2ndorderexample} in terms of the coordinates $v$ and $w$ to obtain
\begin{eqnarray}\label{eq:2ndorderexamplered1}
 \
    \frac{\mathrm{d}w}{\mathrm{d}v} = \left\{ \begin{array}{lr}
        \cfrac{w^{3/2}+2w(w+1)}{v(w-1)}, & \text{for } \frac{x}{y} \geq 0 \\
        \cfrac{-w^{3/2}+2w(w+1)}{v(w-1)}, & \text{for } \frac{x}{y} < 0
        \end{array}\right.
  \
\end{eqnarray}

Next we express $\mathrm{X}_2^{(1)}$ in terms of $v$ and $w$ find its canonical coordinates $\tilde{r}(v,w)$, $\tilde{s}(v,w)$. 
To this end, we have $\mathrm{X}_2^{(1)}=x\frac{\partial{}}{\partial x}+y\frac{\partial{}}{\partial y}=-v\frac{\partial{}}{\partial v}$, which has canonical coordinates $\tilde{r}(v,w)=w$, $\tilde{s}(v,w)=-\log(v)$.
Re-writing Eq.~\eqref{eq:2ndorderexamplered1} in terms of $\tilde{r}$, $\tilde{s}$ leads to the analogue of Eq.~\eqref{eq: tilde ODE eq}:
\begin{eqnarray}\label{eq:2ndorderexamplered2}
\frac{\mathrm{d}\tilde{\mathrm{s}}}{\mathrm{d}\tilde{r}} \; = \; \frac{1-\tilde{r}}{\pm \tilde{r}^{3/2}+2\tilde{r}(\tilde{r}+1)} \; =: \; H(\tilde{r})
\end{eqnarray}
This example exhibits the convenient feature that Eq.~\eqref{eq:2ndorderexamplered2} can be directly integrated to give $\tilde{\mathrm{s}}(\tilde{r}) =-\log(2\tilde{r}\pm\sqrt{\tilde{r}}+2)+\log(\tilde{r})/2+C$, which can be re-written in terms of $x,y,y_1$ to give 
\begin{eqnarray}
\log \left(\frac{1}{y}-\frac{1}{x} \right) = \log\left( 2 \sqrt{y_1\frac{x^2}{y^2}}  \pm 1 + \frac{2}{\sqrt{y_1\frac{x^2}{y^2}}} \right) - C \label{eq: ex 2nd or rew}
\end{eqnarray}
for some integration constant $C$. 

The final step, to remove the $y_1$ independence, requires canonical coordinates for $\mathrm{X}_1$.
These can be selected as $r(x,y)=\frac{1}{y} - \frac{1}{x}$, $s(x,y)=-\frac{1}{y}$.
Then Eq.~\eqref{eq: ex 2nd or rew} becomes
\begin{eqnarray}
\frac{r\mp\exp(-C)}{2\exp(-C)}=\left(\cfrac{\frac{\mathrm{d}\mathrm{s}}{\mathrm{d}r}}{1+\frac{\mathrm{d}\mathrm{s}}{\mathrm{d}r}}\right)^{1/2}+\left(\cfrac{1+\frac{\mathrm{d}\mathrm{s}}{\mathrm{d}r}}{\frac{\mathrm{d}\mathrm{s}}{\mathrm{d}r}}\right)^{1/2} \nonumber 
\end{eqnarray}
which is equivalent to Eq.~\eqref{eq: thm conc 2nd order} for some function $G$.

\end{example}

\subsection{Design of the Training Set} \label{subsec: training set}

The performance of the proposed Bayesian PNM is not our main focus in this work, as we consider the method to be (only) a proof-of-concept.
However, for completeness we acknowledge that performance will depend on the locations at which the gradient field is evaluated; the so-called \emph{training set}.
In this section we discuss how these inputs could be optimally selected.
To simplify the presentation, we focus on the case of a first order ODE, as in Eq.~\eqref{1storder}, where the inputs $r_0,\dots,r_n$ must be selected.

The design of a PNM can be viewed as an instance of statistical experimental design \citep{Chaloner1995}.
In Sec. 3 of \cite{Cockayne2017} a connection between PNM and decision-theoretic experimental design was exposed.
Such methods require that a loss function $L : \mathcal{Q} \times \mathcal{Q} \rightarrow \mathbb{R}$ is provided, where $L(q,q^\dagger)$ quantifies the loss when $q$ is used as an estimate for the true quantity of interest $q^\dagger$.
Further detail was provided in \cite{Oates2019RICAM}.
To avoid repetition, in the remainder we focus instead on approximate experimental design, where a loss function is not explicitly needed.

Recall that the output of a PNM is the distribution $\mu_n = B(\mu,a^n) \in \mathcal{P}_{\mathcal{Q}}$.
Then one can specify a functional $\ell : \mathcal{P}_{\mathcal{Q}} \rightarrow \mathbb{R}$ and compute
\begin{eqnarray}
\tau(r_0,\dots,r_n) & = & \int \ell(B(\mu,A(\mathrm{y} ; r_0,\dots,r_n))) \mathrm{d}\mu(\mathrm{y}) \label{eq: ell}
\end{eqnarray}
where $A(\cdot ; r_0,\dots,r_n)$ is the information operator in Eq.~\eqref{eq: info operator 1st order} with the dependence on $r_0,\dots,r_n$ made explicit.
For the choice $\ell(\nu) = \log \text{det}(\text{Cov}_{\tilde{Q} \sim \nu}[\tilde{Q}])$, a configuration $(r_0,\dots,r_n)$ for which $\tau(r_0,\dots,r_n)$ is minimised is said to be \emph{D-optimal}.
The functional $\ell$ plays the role of an approximation to posterior expected loss, and other choices for $\ell$ lead to other approximate notions of optimal experimental design.
For instance, an \emph{A-optimal} design was used for the Bayesian solution of a partial differential equation in \cite{Cockayne2016}.
For further background on experimental design we refer the reader to \cite{Chaloner1995}.

Importantly, Eq.~\eqref{eq: ell} does not depend on the information $A(\mathrm{y}^\dagger)$ and can therefore be evaluated prior to the experiment being performed.
However, in general the numerical approximation of Eq.~\eqref{eq: ell}, and the task of finding a minimal configuration, is practically difficult.
The reader is referred to \cite{Overstall2015} for further discussion of experimental design in the PNM context.

\subsection{Computational Detail} \label{subsec: Lopez details}

In this Appendix we set out in detail the prior construction that was used for the numerical illustrations of Sec. \ref{sec: experiment} in the main text.
Recall that for both the first order ODE example in Sec. \ref{subsec: 1st order experiment} and the second order ODE example in Sec. \ref{subsec: 2nd order experiment} we required a non-parametric prior over functions $\zeta$ which satisfy the constraints given in Eq. \ref{eq: prior criterion 1st order}, namely that
\begin{eqnarray}
\zeta(r_0)=0, \qquad \zeta(r) \leq 1, \qquad \frac{\mathrm{d}\zeta}{\mathrm{d}r} \geq 0  .
\label{eq: prior criterion 1st order, again}
\end{eqnarray}
Moreover, bearing in mind the posterior computation that is to follow, we require in addition that the prior conveniently facilitates the conditioning calculations involved.
It is clear that standard non-parametric priors such as Gaussian processes do not satisfy the boundedness or monotonicity constraints, whilst a nonlinear transformation of such a process would fail to make conditioning on data straight-forward.
In fact, the construction of such flexible priors remains an active area of research.

To proceed, we adopted an approach recently proposed in \cite{Lopez-Lopera2017}.
In brief, the main idea is to construct an $N$-dimensional parametric distribution over functions for which Eq. \ref{eq: prior criterion 1st order, again} is satisfied.
This distribution, being finite-dimensional, allows for the possibility of tractable conditioning operations, whilst the flexibility to take $N$ arbitrarily large provides a means of ensuring that the salient uncertainty is accurately represented.
More specifically, the function $\zeta$ is parametrised as
\begin{eqnarray}
\zeta(r)=\sum_{j=1}^{N} z_j \phi_j(r)
\label{eq: seriespriorton}
\end{eqnarray}
where the $\phi_j$ are basis functions
$$
\phi_j(r)=
\begin{cases} 
      1-|\frac{r-t_j}{h}| & |\frac{r-t_j}{h}|\leq1 \\
      0 & \mathrm{otherwise} 
\end{cases}
$$
for equally spaced points $t_j$ with increment $h$, as recommended in \cite{Lopez-Lopera2017}. 
A prior on $\zeta$ can be induced via a prior on the coefficients $z_1,\dots,z_N$, with $N$ taken to be substantially larger than the number $n$ of datapoints on which the $z_i$ are to be conditioned.
The specific construction of a prior on the coefficients is required to encode the constraints in Eq. \ref{eq: prior criterion 1st order, again} and to admit tractable conditioning; these issues are discussed in the remainder.

First, we consider the boundedness and monotonicity constraints in Eq. \ref{eq: prior criterion 1st order, again}.
At the level of the coefficients, it is straight-forward to check that this requires that the prior support is restricted to the set
$$\mathcal{Z}=\{z \in \mathbb{R}^N: 0 < z_1 \leq z_2 \leq \dots \leq z_N \leq 1\}.$$
For convenience, we elected to use a prior that was obtained by restricting a standard Gaussian measure $\mathcal{N}(0,I)$ on $\mathbb{R}^N$ to the set $\mathcal{Z}$.

Second, we consider how to condition on a dataset.
Recall that information is provided on the values of the gradient $\zeta'(r_i) = b_i$ of the function $\zeta$, evaluated at a finite number of locations $r_i$ of the canonical coordinate $r$, together with the initial condition $\zeta(r_0)=b_0$ .
Thus the information can be described by the linear system of constraints
$$\Phi z=b$$
where 
\[\Phi=
\begin{bmatrix}
    \phi_1(r_0) & \dots  & \phi_N(r_0) \\
    \phi'_1(r_1) & \dots  & \phi'_N(r_1)  \\
    \vdots  & \vdots \\
    \phi'_1(r_n) & \dots  & \phi'_N(r_n)
\end{bmatrix}, \qquad b = \begin{bmatrix} b_0 \\ b_1 \\ \vdots \\ b_n \end{bmatrix} .
\]
The posterior can therefore be characterised as the restriction of $\mathcal{N}(0,I)$ to the set $\mathcal{Z} \cap \mathcal{D}$ where $\mathcal{D} = \{z \in \mathbb{R}^N : \Phi z = b\}$.

Finally, we discuss how posterior computation was performed.
The key observation is that an equivalent characterisation of the posterior is first to restrict $\mathcal{N}(0,I)$ to $\mathcal{D}$ and then to further restrict to $\mathcal{Z}$.
This is advantageous since the linear nature of the data implies that the restriction $z | \mathcal{D}$ of $\mathcal{N}(0,I)$ to $\mathcal{D}$ is again a Gaussian with a closed form, denoted $\mathcal{N}(\mu,\Sigma)$.
It is important to note that $\Sigma$ is singular (rank $\rho=N-n-1$) and so $\Sigma=U\Lambda^2 U^\top$, where $U$ is an orthogonal matrix and $\Lambda$ is a diagonal matrix with $\rho$ non-zero entries on the diagonal. 
Thus we can express $z|\mathcal{D}$ in the form 
$$z=\mu+U\Lambda \tilde{z}$$
where $\tilde{z} \sim \mathcal{N}(0,I)$ is a standard Gaussian on $\mathbb{R}^\rho$. 
Let $M=U\Lambda$ and let $m_i=[m_{i,1}, \dots m_{i,\rho}]$ denote the $i$th row of $M$. 
Then we have the relation
$$z \in \mathcal{Z} \iff \tilde{z} \in \tilde{\mathcal{Z}}$$
where 
$$\tilde{\mathcal{Z}}=\{\tilde{z}  \in \mathbb{R}^\rho: 0\leq m_1\tilde{z}+\mu_1, 0 \leq (m_{i+1}-m_i)\tilde{z}+(\mu_{i+1}-\mu_i), 0 \leq -m_N\tilde{z}+(1-\mu_N)\}$$
or equivalently
$$\tilde{\mathcal{Z}}=\{\tilde{z}  \in \mathbb{R}^\rho: F\tilde{z}+g \geq 0\}$$
for
\[F=
\begin{bmatrix}
    m_1 \\
    m_2-m_1  \\
    \vdots  \\
    m_N-m_{N-1} \\
    -m_N
\end{bmatrix},
\quad
g=
\begin{bmatrix}
    \mu_1 \\
    \mu_2-\mu_1  \\
    \vdots  \\
    \mu_N-\mu_{N-1} \\
    1-\mu_N
\end{bmatrix} .
\]
The computational task is thus reduced to sampling the restriction of the $\rho$-dimensional standard Gaussian random variable $\tilde{z}$ to the (non-null) set $\tilde{\mathcal{Z}}$.
The development of computational methods to sample from such (potentially high-dimensional) distributions is itself an active area of research, and for this work we employed the Hamiltonian Monte Carlo method of \cite{Pakman2014}, as recommended specifically for this purpose in \cite{Lopez-Lopera2017}.

\color{black}

\end{document}